\def\isarxiv{1} 
\newcommand{\squishlist}{
  \begin{list}{$\bullet$}{
    \setlength{\itemsep}{0pt}       \setlength{\parsep}{3pt}
    \setlength{\topsep}{3pt}        \setlength{\partopsep}{0pt}
    \setlength{\leftmargin}{1em}    \setlength{\labelwidth}{1em}
    \setlength{\labelsep}{0.5em} } }
\newcommand{\squishend}{
  \end{list} }
\newtheorem{theorem}{Theorem}[section]
\newtheorem{lemma}[theorem]{Lemma}
\newtheorem{definition}[theorem]{Definition}
\newtheorem{assumption}[theorem]{Assumption}
\newtheorem{problem}[theorem]{Problem}
\newcommand{\wh}{\widehat}
\newcommand{\wt}{\widetilde}
\newcommand{\R}{\mathbb{R}}
\renewcommand{\tilde}{\wt}
\renewcommand{\hat}{\wh}
\DeclareMathOperator*{\E}{{\mathbb{E}}}
\DeclareMathOperator*{\D}{\mathcal{D}}
\DeclareMathOperator{\OPT}{OPT}
\DeclareMathOperator{\poly}{poly}
\newcommand{\V}{\mathcal{V}}
\newcommand{\ve}{\mathbf{e}}
\newcommand{\vp}{\mathbf{p}}
\newcommand{\B}{\mathcal{B}} 
\newcommand{\hyperb}{\mathcal{L}}
\newcommand{\frm}{R_\textsc{FRM}}
\newcommand{\frc}{c_\text{FR}}
\newcommand{\util}{\textsc{Util}}
\newcommand{\tmix}{t_\text{mix}}
\newcommand{\dtv}{d_\text{TV}}
\newcommand{\trans}{{P_M}} 
\newcommand{\eps}{\epsilon}
\newcommand{\T}{\mathcal{T}}
\definecolor{myblue}{RGB}{51,153,255}
\definecolor{myred}{RGB}{200,5,5}
\definecolor{ForestGreen}{RGB}{34,139,34}
\newcommand{\allnotes}[1]{}
\renewcommand{\allnotes}[1]{\textit{#1}}
\newcommand{\ruimin}[1]{\allnotes{\todo[color=ForestGreen]{Ruimin: #1}}}
\renewcommand{\citet}{\cite} 
\begin{document}

\ifdefined\isarxiv

 \title{Markovian Pandora's box}

\author{
Yuanyuan Yang \thanks{University of Washington. Email:\texttt{yyangh@cs.washington.edu}. }
\and 
Ruimin Zhang \thanks{University of Chicago. Email: \texttt{ruimin@uchicago.edu}.}
\and 
Jamie Morgenstern \thanks{University of Washington. Email:\texttt{jamiemmt@cs.washington.edu}.}
\and
Haifeng Xu \thanks{University of Chicago. Email: \texttt{haifengxu@uchicago.edu}.}
}

\else

\fi

\ifdefined\isarxiv
\begin{titlepage}
  \maketitle
  \begin{abstract}

In this paper, we study the \emph{Markovian} Pandora’s Box Problem, where decisions are governed by both order constraints and Markovianly correlated rewards, structured within a shared directed acyclic graph (DAG). To the best of our knowledge, previous work has not incorporated \emph{Markovian dependencies} in this setting. This framework is particularly relevant to applications such as data or computation driven algorithm design, where exploration of future models incurs cost. 

We present optimal \emph{fully adaptive} strategies where the associated graph forms a forest. Under static transition, we introduce a strategy that achieves a near-optimal expected payoff in multi-line graphs and a $1/2$-approximation in forest-structured graphs. Notably, this algorithm provides a significant speedup over the exact solution, with the improvement becoming more pronounced as the graph size increases. Our findings deepen the understanding of sequential exploration under Markovian correlations in graph-based decision-making.

  \end{abstract}
  \thispagestyle{empty}
\end{titlepage}

\else

\begin{abstract}

\end{abstract}

\fi

\section{Introduction}

Uncertainty is a fundamental challenge in decision-making for optimization problems. It can often be mitigated through costly inspections, where incurring a cost reveals additional information, as seen in delegated search~\cite{kk18}, ranking~\cite{dgmm22}, and hyperparameter tuning~\cite{smv+20}, etc. A foundational framework for decision-making under costly information is the Pandora’s Box problem, originally formulated by~\cite{w79}. In this model, a decision maker is presented with a collection of boxes, each associated with a known fixed cost and an unknown reward drawn from a known distribution. The objective is to determine an optimal (adaptive) order, where the decision maker can choose to stop at any point, selecting the highest observed reward while incurring the cumulative cost of the opened boxes. The goal is to maximize the expected net payoff. This problem has been extensively studied and extensively generalized across multiple disciplines, including correlated reward distributions~\cite{cgt+20}, box-dependent deadlines~\cite{beff24}, order constraints~\cite{bfll20}, and online settings~\cite{gt22}, etc.

Existing literature fails to capture two critical features inherent in many practical applications: 1)\ the presence of \emph{Markov dependencies} among the boxes, where the reward of certain boxes is determined by others, but not vice versa, and 2)\ an \emph{order constraint} associated with this Markov dependency. Consider data-driven algorithm design~\cite{gr16} as an example: a simpler model (e.g., a checkpoint after 200 gradient descent steps) must be computed before a more refined model (e.g., a fully converged model with extra computation ) becomes accessible. This dependency structure induces a Markovian relationship, where the reward of a later box is conditioned on an earlier one, while simultaneously imposing an order constraint on the exploration process.


\textbf{Markovian Pandora's Box}. Building on the real-world motivations discussed above, we introduce and analyze the \emph{Markovian Pandora’s Problem}, an extension of the classic Pandora’s Box framework that incorporates both \emph{structural constraints} and \emph{probabilistic dependencies}. Specifically, we augment the standard problem with a \emph{directed acyclic graph (DAG)}, which simultaneously encodes \emph{precedence constraints} and \emph{Markovian correlations} among boxes. More precisely, a directed edge from box $A$ to box $B$ implies that: 1)\ $A$ must be probed before $B$, and 2)\ the reward of $B$ depends on the revealed reward of $A$ Markovianly.


\textbf{Curse of Adaptivity}. The optimal strategy of Markovian Pandora's Box is \emph{fully adaptive}(FA), i.e., the order of the strategy changes adaptive to the all the realized reward. Two challenges : 1)\ It's hard to characterize an \emph{explicit form} of the fully adaptive strategy, as it is typically computed via exhaustive search methods, and 2)\ Optimizing an FA strategy involves solving a high-dimensional adaptive decision process, which is inherently \emph{NP-hard}~\cite{cgmt21}.

\subsection{Results}\label{subsec:results_intro}

\subsubsection{Exact Optimization: Structured Solutions via Equivalence Reward}
In this paper, we present the \emph{first} optimal algorithm for the Markovian Pandora’s Box problem on a forest-structured graph. Despite its fully adaptive nature, we show that the optimal solution can be computed efficiently in polynomial time and space. We start with the simplest case of a \emph{single-line} precedence graph, gradually extending the strategy to \emph{multiple-line} settings, and to the \emph{forest}-structured case.

For the \emph{single-line} case (Section~\ref{sec:1_hyperbox}), we show that the optimal strategy simplifies to a \emph{stopping time}, as the probing order remains \emph{fixed}. Our key idea is to construct a \emph{polynomial-sized equivalent reward table}, computable in polynomial time, which determines the optimal stopping time.  

We then generalize the above setting to the \emph{multi-line} setting (Section~\ref{sec:multi_boxes}), where now the optimal strategy includes deciding an \emph{adaptive} order. We prove that, even in this setting, the optimal strategy remains governed by the equivalent reward table, which is unaffected by order adjustments based on reward realizations. Consequently, we derive an optimal strategy in polynomial time and space.

Finally, this framework extends to the forest setting (Section~\ref{sec:multi_boxes}) through \emph{contraction}, where the equivalent reward table for a multi-line structure is transformed into that of a single box with a random cost. By iteratively applying this contraction, the precedence graph is systematically reduced to a multi-line structure. Throughout this process, the payoff table updates with each newly opened box, while maintaining polynomial time and space complexity:

\begin{theorem}[Optimal Solution for Forest-Structured Graphs (Lem.~\ref{lem:app:update_GRV_forest} and Thm.~\ref{thm:app:GRV_opt_forest})]\label{thm:MPB_forest}
    For a Markovian Pandora’s Box problem with a forest-structured precedence graph, there exists a fully adaptive algorithm that achieves the optimal expected payoff in polynomial time and space.
\end{theorem}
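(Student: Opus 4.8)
The plan is to prove the theorem by structural induction on the forest, reducing the forest case to the multi-line case of Section~\ref{sec:multi_boxes} via a sequence of \emph{contractions}, with the single-line and multi-line results serving as base cases. The central object throughout is the \emph{equivalent reward table} (a Weitzman/Gittins-style index): for the single-line case we have already argued that the optimal fully adaptive strategy collapses to a stopping time determined by this table, and for the multi-line case that the table is invariant under adaptive reordering, so the optimal strategy always opens, among the currently-available boxes, the one of highest equivalent reward, and stops once every available equivalent reward falls below the best realized reward. First I would make precise the claim that this index rule remains optimal when ``boxes'' carry \emph{random costs} rather than fixed costs — this is the form in which contracted subtrees will appear — by checking that the usual exchange argument goes through verbatim when a box's cost is an independent random variable revealed upon opening (equivalently, by folding the cost randomness into an augmented reward distribution).

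The heart of the argument is the contraction step (Lemma~\ref{lem:app:update_GRV_forest}). Take a maximal sub-structure that is a multi-line hanging off a single node $r$, i.e.\ the descendants of $r$ form disjoint paths joined only at $r$. Conditioned on the reward realized at $r$, exploring this sub-structure optimally is itself a multi-line Markovian Pandora's Box instance whose value, as a function of the outside option (the best reward still available to fall back on), equals — by the multi-line result — the value of opening a \emph{single} virtual box: one whose equivalent reward equals the maximal available index in the sub-structure, and whose effective cost is a random variable capturing the expected probing cost incurred up to the optimal stopping time. I would show this virtual box is a faithful stand-in: the joint decision ``explore inside the sub-structure versus act outside'' has the same optimal value and the same optimal continuation whether one keeps the true sub-structure or replaces it by the virtual box, since the index rule only ever compares the best available index across regions. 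Each contraction strictly decreases the number of branch points, so after finitely many contractions the whole forest becomes a multi-line instance, which we can already solve; unwinding the contractions yields the fully adaptive optimal strategy for the original forest, and recomputing everything each time a real box is opened (to absorb the Markovian update of all downstream distributions) gives the claimed algorithm (Theorem~\ref{thm:app:GRV_opt_forest}).

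For the complexity bound I would track the size of the equivalent reward table through contractions. Each reward distribution has polynomial support (part of the model, and preserved by the single-line construction), and a contraction produces a virtual box whose random cost is a mixture over the polynomially many realized rewards of $r$, each contributing a polynomially-sized conditional value; thus the table grows by only a polynomial factor per contraction. Since there are $O(n)$ contractions and $O(n)$ re-computations (one per opened box), the total time and space stay polynomial in the input size, and each index is obtained by a one-dimensional threshold/root-finding computation over a discrete distribution, which is polynomial.

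The step I expect to be the main obstacle is establishing that the contraction is \emph{exact} — that collapsing a multi-line subtree to a single random-cost box loses nothing for the \emph{fully adaptive} optimum, not merely for a fixed-order or partially adaptive one. The subtlety is that a fully adaptive strategy may interleave probes inside the subtree with probes elsewhere in arbitrary reward-dependent ways, so one must argue, via an exchange/prophet-type argument on the induced index process, that any such interleaving can be simulated by first committing to the subtree's optimal internal policy and treating its aggregate cost as a single random draw. A secondary obstacle is bookkeeping the Markovian updates cleanly: once box $r$ is opened, the equivalent reward tables of \emph{all} boxes in the subtree rooted at $r$ must be rebuilt from the updated conditional distributions, and one has to verify both that this rebuild-and-recontract still fits the polynomial budget and that the recomputed indices are precisely those the optimal strategy would use at that point.
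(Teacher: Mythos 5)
Your proposal is correct and mirrors the paper's own proof almost step for step: contracting a minimal tree (the descendants of a branch vertex $r$, which form disjoint paths) into a single virtual box with random cost conditioned on $R_r$ (Lem.~\ref{lem:app:equ_box}, Lem.~\ref{lem:app:equ_box_multi_line}, Alg.~\ref{alg:pd_forest}), iterating until the forest reduces to a multi-line instance solved by the known GRV rule (Thm.~\ref{thm:ML_MPB}), re-conditioning the downstream tables after each opened box, and bounding the table size and number of contractions polynomially (Lem.~\ref{lem:app:update_GRV_forest}, Thm.~\ref{thm:app:GRV_opt_forest}); you also correctly identify the crux — that the contraction must be exact against fully adaptive interleaving — which the paper discharges via the three-box exchange lemma (Lem.~\ref{lem:app:three_box_lemma}). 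One small slip: the virtual box's random cost is not independent but \emph{correlated} with its reward through the joint realization of the contracted subtree (Def.~\ref{def:pandora_box_random_cost}), which your parenthetical about folding the cost randomness into an augmented joint distribution in fact captures, and which is precisely the form of box the exchange lemma is stated for.
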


\subsubsection{Static Transition: Faster Solution Via Subgraph Selection}

In the context of \emph{static transition}, we develop \emph{faster} strategies for single-line, multi-line, and forest precedence graphs, addressing each case separately. Here, static transition refers to a setting where the value transition—defined as the conditional reward distribution of the box at the end of a directed edge given the box at the start—remains identical across all edges within the same component. 

Our first result explores a special case with multiple infinite-length lines where each box has the same cost. In this setting, the equivalence table is no longer correlated with the index of the current box. We solve for this table using fixed-point iteration. See Thm.~\ref{thm:MC_constant_cost} for more details.

Our second result develops a faster algorithm for probing the Markovian Pandora’s Box problem under line (Thm.~\ref{thm:greedy_optimal_single_line}), multi-line (Thm.~\ref{thm:greedy_optimal_multi_line}), and forest constraints (Thm.~\ref{thm:forest_approx}). Our solution leverages subgraph optimization, where for a given $\delta \in (0,1)$, we identify a subgraph depends only on polylog factors of $\delta$. Exploring this subgraph is near-optimal up to an additive $\Theta(\delta)$ in the line and multi-line cases, while in the forest setting, it achieves (roughly) a $1/2$-approximation (Def.~\ref{def:approx_ratio}):

\begin{theorem}[Faster Solution Under Static Transition]\label{thm:sum_static_transition}
Given a Markovian Pandora’s box with static transition,
\squishlist
    \item \textbf{Multi Lines}. We can find a subgraph $\hat{G}$ of size $\Tilde{\Theta} (q)$ in $\Tilde{\Theta} (q)$ time, such that probing against $\hat{G}$ is optimal up to $\Theta(q \delta)$, where $q$ is the number of distinct lines. 
    \item \textbf{Forest}. We can find a subgraph $\hat{G}$ of size at most $\Tilde{\Theta} (1)$ in $\Delta(G)^{\tilde{\Theta}(1)}$ time, such that probing against $\hat{G}$ satisfies:
    {\small 
    \begin{align*}
        \E [\max_{i \in \hat{G}} R_i - &\sum_{i \in \hat{G}} c_i] \geq 1/2 \cdot  \E[\max_{i \in \mathcal{O}(\hat{\pi})} R_i] - \sum_{i \in \mathcal{O}(\hat{\pi})} c_i] -q \delta.
    \end{align*}
    }
    against the boxes $\mathcal{O}(\hat{\pi})$ selected by any strategy $\hat{\pi}$. 
    
\squishend
    
\end{theorem}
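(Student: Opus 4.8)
The plan is to reduce both claims to a single structural fact about a single line under static transition — that the equivalent reward (reservation value) table decays geometrically with depth — and then to build up from one line to many lines by a union bound, and from many lines to a forest by the contraction operation already used for the exact solution. Concretely, for a single line I would first use the equivalent-reward characterization of the optimal stopping time from Section~\ref{sec:1_hyperbox} to show that under static transition the reservation value $g_d$ at depth $d$ is non-increasing in $d$, and, crucially, that the probability the optimal strategy ever probes depth $d$ decays geometrically in $d$: whenever the strategy is still running at depth $d$, the freshly revealed reward exceeds $g_{d+1}\le g_d$ with probability at least a constant $\rho>0$ depending only on the (fixed) transition kernel and on $c_{\min}$ — since after normalizing rewards to $[0,1]$ we have $\E[(R_d-g_d)^+\mid \text{reached }d]=c_d\ge c_{\min}$ while $R_d-g_d\le 1$ — so the per-step continuation probability is at most $1-\rho$. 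Chaining, $\Pr[\text{probe depth }d]\le(1-\rho)^{d}$, and truncating the line at depth $D=\Theta(\rho^{-1}\log(1/\delta))=\tilde\Theta(1)$ changes the optimal net payoff by at most $\Pr[\text{probe }D{+}1]\cdot 1\le\delta$. Under static transition the backward recursion for the $g_d$ depends only on $d$ (and in the constant-cost infinite-line regime becomes a genuine fixed point, Thm.~\ref{thm:MC_constant_cost}), so the truncated table is computable in $\tilde\Theta(1)$ time; this is Thm.~\ref{thm:greedy_optimal_single_line}.

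\emph{Multiple lines.} Let $\hat G$ be the union of the depth-$D$ truncations of the $q$ lines, so $|\hat G|=qD=\tilde\Theta(q)$, found in $\tilde\Theta(q)$ time. To bound $\OPT(G)-\OPT(\hat G)$ I would take the optimal fully adaptive strategy $\pi^\star$ on $G$ and run the strategy on $\hat G$ that imitates $\pi^\star$ but permanently freezes a line the moment $\pi^\star$ attempts to advance it past depth $D$. By the equivalent-reward characterization in the multi-line case (Section~\ref{sec:multi_boxes}), the loss incurred on line $j$ is exactly the net value $\pi^\star$ extracts from the part of line $j$ below depth $D$, which — conditioned on the history at the instant line $j$ is frozen — is bounded as in the single-line step by $\delta$. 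Summing over the at most $q$ lines yields $\OPT(\hat G)\ge\OPT(G)-\Theta(q\delta)$, which is Thm.~\ref{thm:greedy_optimal_multi_line}.

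\emph{Forest.} Here I would invoke the contraction lemma behind the exact forest solution (Lem.~\ref{lem:app:update_GRV_forest}): a multi-line hanging below a node can be replaced, without changing the optimal payoff, by a single box with a random cost. Applying this top-down collapses the forest to a multi-line instance, but doing so exactly is too slow, so instead I would only materialize, at each node, the descendants within the truncation depth $D=\tilde\Theta(1)$, enumerating the at most $\Delta(G)$ children at each of the $D$ levels — this is the source of the $\Delta(G)^{\tilde\Theta(1)}$ running time — and commit to a subgraph $\hat G$ of size $\tilde\Theta(1)$. Only a $1/2$-approximation against an arbitrary strategy's realized box set survives because a small committed subgraph cannot shadow a strategy $\hat\pi$ that spreads its probing over many branches; the charging argument is to split $\E[\max_{i\in\mathcal{O}(\hat\pi)}R_i-\sum_{i\in\mathcal{O}(\hat\pi)}c_i]$ into the contribution of the branch that actually realizes the maximum and the remainder, show $\hat G$ captures at least the larger of the two, and absorb the depth-$D$ truncation error as the additive $q\delta$ term (Thm.~\ref{thm:forest_approx}).

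\emph{Main obstacle.} The only place that needs real work is the single-line engine: proving the geometric decay of the reservation-value table with a rate that depends on the problem parameters (transition kernel, $c_{\min}$, reward range) but not on $1/\delta$ or on the graph size, so that the truncation depth is genuinely $\poly\log(1/\delta)$. Everything after that is a union bound plus the already-established contraction lemma, with the one remaining subtlety being the $1/2$ charging argument in the forest case, which must be carried out against an arbitrary adaptive strategy rather than merely against the optimum.
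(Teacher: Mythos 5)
Your overall architecture (single-line truncation engine, then union bound for multi-line, then contraction plus an adaptivity-gap argument for forests) matches the paper's, but your single-line engine relies on a different mechanism and has a concrete gap.

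The paper's truncation argument (Lem.~\ref{lem:dist_max_reward} and Thm.~\ref{thm:app:greedy_optimal}) does not mention costs or reservation-value decay at all. It runs entirely through Markov-chain mixing: irreducibility and aperiodicity of $P$ give geometric convergence of the (unconditional) reward distribution to the stationary distribution $\pi$, and since $\pi_k>0$, after $t_\delta=\tilde\Theta(1)$ steps the running maximum hits the top state $v_k$ with probability $\ge 1-\delta$ (Lem.~\ref{lem:dist_max_reward}). Once $v_k$ is observed no optimal strategy continues (further probing only pays cost with no possible improvement), so $\Pr[\tau^*\le t_\delta]\ge\Pr[R_{\max}(t_\delta)=v_k]\ge 1-\delta$, and the two-$\delta v_k$ accounting in Thm.~\ref{thm:app:greedy_optimal} follows by comparing $\tau^*$ to the clipped policy $\bar\tau$.

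Your per-step stopping-probability bound $\E[(R_d-g_d)^+\mid\text{reached }d]=c_d\ge c_{\min}$ is the reservation-value identity from the classic independent Pandora's box. It is not the generalized reservation value used here: Def.~\ref{def:GRV} balances $\E[(\max_{j\ge i}^{\tau^*}R_j-\sigma_i)^+ - \sum_{j\ge i}^{\tau^*}c_j]=0$, so the expectation is over the whole optimally stopped suffix, not a single box, and $\sigma_i$ is a function of the current state $s^{i-1}$. From the generalized identity you do not get $\Pr[R_d>\sigma_d]\ge c_{\min}$, nor monotonicity of $\sigma_d$ in $d$ in the direction you need (Lem.~\ref{lem:properties_GRV} only gives monotonicity under appending boxes to the tail, and under static transition on an infinite line $\sigma$ is depth-independent, not strictly decreasing). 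Moreover your argument silently assumes $c_{\min}>0$ and normalized rewards, neither of which the paper assumes; Thm.~\ref{thm:greedy_optimal_single_line} only assumes irreducibility, aperiodicity, and $\pi_k>0$. The mixing-time route is the one that actually avoids any dependence on $c_{\min}$.

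For the forest case your custom charging argument is also not what the paper does: the paper directly invokes the adaptivity gap of \cite{bfll20} (Lem.~\ref{lem:forest_adaptivity}), which says a non-adaptive (fixed-line) strategy already recovers a $1/2$ fraction against any adaptive one, and then simply enumerates rooted lines of length $\le t_\delta$ via bounded-depth DFS, giving the $\Delta(G)^{\tilde\Theta(1)}$ time and the extra $q\delta$ truncation error via Thm.~\ref{thm:greedy_optimal_multi_line}. Your split-by-argmax-branch charging might be made to work, but the $1/2$ factor here is most cleanly obtained by citing that adaptivity gap rather than reproving it. If you replace your geometric-decay engine with the mixing-time argument and cite the adaptivity gap for the forest factor, the rest of your outline (union bound for multi lines, contraction for the forest) is consistent with the paper.
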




\subsection{Literature Review}

\textbf{Pandora's Box Problem}.
This problem originates from \cite{w79}. We recommend ~\cite{bc24} for recent developments on this problem. The most related work is that of order constraints and with correlations:

Previous work by {bfll20} focused on order constraints, where some boxes must be opened after others, and rewards are independent accross boxes, making it a \emph{special case} of ours. Their optimal strategy is partially adaptive, whereas ours is fully adaptive. Consequently, our setting is fundamentally more challenging than theirs.

Recent studies have shown growing interest in the Pandora’s Box problem with correlations~\cite{cdkt19, cgt+20, cgmt21, gt23}. These works study the cost minimization version of the Pandora’s Box problem and focus on deriving adaptive strategies that approximate the fully adaptive (FA) or partially adaptive optimal solutions. Their results show that approximating the FA optimal within a constant factor is \emph{NP-hard}. In contrast, our setting assumes structured correlations, allowing for the exact optimization of FA strategies within polynomial time. 



\textbf{Data Driven Algorithm Design}. Our framework relates to data-driven algorithm design~\cite{gr16} with cost, where practitioners refine parameterized algorithms via training instances to maximize expected future performance, including~\cite{ ggm06, bb12, jt16, ldrt17, hky18}.

 We defer to Appendix.~\ref{sec:app:literature_review} for more detailed literature review.


\section{Problem Formulation}\label{sec: problem_formulation}

\subsection{Markovian's Pandora's Box}\label{subsec:pandora_pre}

In this paper, we consider the Markovian Pandora's box with order constraints. Subject to (partial) ordering constraints, some boxes must be opened after others. These boxes have \emph{known fixed} probing costs, and payoffs correlated across boxes in a \emph{Markovian} fashion given by the underlying directed graph of the order constraints.

\begin{problem}[Markovian Pandora's Box]\label{prob:mar_pandora}
    Given a set of $n$ boxes $\B= \{b_1, \cdots, b_n\}$. For every $i \in [n]$, box $b_i$ has a \emph{known} fixed probing cost $c_i$, and a random reward $R_i$, where $R_i$ follows a \emph{known} distribution $\D_i$. The rewards of the boxes are correlated in a Markov fashion. The correlation of the rewards and the order constraints of probing the boxes are given by the same \emph{directed} acyclic graph $G = (\B, E)$, where the boxes are vertices connected by directed edges in $E$. 
    \squishlist
    \item \textbf{Partial Ordering}: For any edge $(b_i, b_j) \in E$, box $b_i$ must be probed before box $b_j$, and we use $b_i \prec b_j$ to denote this relation.
    \item \textbf{Markov property}: For any boxes $b_i \prec  b_j \prec b_k$ that forms a directed line in $G$, then the reward of $b_i$ and $b_k$ is conditionally independent given the reward of $b_j$: 
    \begin{align*}
        &\Pr [R_i =y, R_k = x | R_j = z] = \\
        & \quad \quad \Pr [R_i =y | R_j  = z] \cdot \Pr[R_k = x | R_j = z]
    \end{align*}
    for any $x,y,z \in \R_{+}$.
    \squishend

Our goal is to find a policy $\pi^*$ that iteratively probes the boxes, and maximizes the expected payoff, defined as the expected maximum reward minus the total probing costs: 
\begin{equation}
    \E \Big[\max_{i \in \mathcal{O}(\pi^*)} R_i - \sum_{i\in \mathcal{O}(\pi^*)}c_i\Big]
\end{equation}
where $\mathcal{O} (\pi)$ denote the (random) set of boxes opened following strategy(policy) $\pi$. 
\end{problem}

\subsection{Adaptivity Gap}\label{subsec:adapt_gap}
We introduce three classes of strategies which have different level of adaptivities. 

\begin{definition}[Adaptivity in Strategy Design: NA, PA, FA]\label{def:adaptive_strategy}
The strategies in Markovian Pandora's box are defined by an ordering $\omega$ and a stopping time $\tau$. A strategy is:
\squishlist
    \item \textbf{Non-adaptive}: if both $\tau$ and $\omega$ are independent of the realized rewards.
    \item \textbf{Partially adaptive}: if $\omega$ is independent of the realized rewards, but $\tau$ depends on them.
    \item \textbf{Fully adaptive}: if both $\tau$ and $\omega$ depend on the realized rewards.
\squishend
\end{definition}


The optimal strategy for the classic Pandora’s Box Problem, known as Weitzman’s rule \cite{w79}, is \emph{partially adaptive}. It assigns each box a reservation value based on its reward distribution and cost and probes boxes in decreasing order of these values. The process stops once a sufficiently high reward is found. For Pandora’s Box with correlation, the optimal strategy becomes \emph{fully adaptive}. While Weitzman’s rule provides a constant-factor approximation to the best PA strategy \cite{gt23}, achieving a constant-factor approximation against the best FA strategy is NP-hard \cite{cgmt21}. This makes it valuable to develop strategies that approximate the best PA strategy, as explored in \cite{cgt+20}. Similarly, in Pandora’s Box with order constraints \cite{bfll20}, PA strategies remain a key focus of study, as they achieve a constant-factor approximation of the best FA strategy while preserving computational efficiency.

Our solution represents the optimal strategy against \emph{fully adaptive} strategies, and we derive a closed-form solution when the underlying DAG forms a forest. Given that FA strategies are the most complex class of strategies to analyze, one might wonder if there are adaptivity gap between the performance against the best FA strategies and others. The following lemma demonstrates that PA strategies are not optimal, justifying the need to consider FA strategies.

\begin{lemma}[The sub-optimality of PA strategies]\label{lem:sub_opt_PA}
There exist an instance of Markovian Pandora's box (Lem.~\ref{lem:app:sub_opt_PA}) where the best FA strategy outperforms best PA strategies. 
\end{lemma}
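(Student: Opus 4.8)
The plan is to establish the separation with a single explicit instance whose precedence graph is a rooted tree (a fortiori a forest), on which some fully adaptive policy strictly beats every partially adaptive one. The underlying mechanism is precisely the ``curse of adaptivity'' highlighted in the introduction: after probing the root and reading its reward, an FA policy learns \emph{which branch carries the prize} and descends into that branch directly, whereas a PA policy must commit to a relative order of the two branches before observing anything, and therefore, in the unlucky realization, is forced to burn probing cost on the worthless branch before it can reach the valuable one.

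Concretely I would take three boxes: a root $b_0$ with children $b_1$ and $b_2$, so $E=\{(b_0,b_1),(b_0,b_2)\}$, all with the same cost $c_0=c_1=c_2=c>0$. Let $R_0$ be uniform on two distinct signal values, say $\{1,2\}$, both negligible next to a large prize $M$ (take $M>3c$, $M$ large); the signal itself is almost worthless as a reward, but $b_0$ must still be probed by the precedence constraint in order to reach its children. The Markov transitions are deterministic and symmetric: on signal $1$, $(R_1,R_2)=(M,0)$; on signal $2$, $(R_1,R_2)=(0,M)$. The Markov property required in Problem~\ref{prob:mar_pandora} holds vacuously, since the longest directed path has length one and $R_1,R_2$ are (trivially) conditionally independent given $R_0$.

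I would then evaluate the two optima directly. An FA policy probes $b_0$, reads the signal, probes whichever child realizes $M$, collects it, and stops; this yields payoff $M-2c$ in \emph{every} realization, and it is optimal, since $M$ is the largest attainable reward, collecting $M$ costs at least $2c$ (the root by precedence, plus one child), and failing to collect $M$ caps the payoff at $2<M-2c$. For PA, the precedence constraints force the ordering to begin with $b_0$, leaving only the two (symmetric) linear extensions $(b_0,b_1,b_2)$ and $(b_0,b_2,b_1)$ (a randomized ordering is a mixture of these, with value the average). Fixing $\omega=(b_0,b_1,b_2)$ and optimizing the stopping time realization by realization: on signal $1$ the policy probes $b_1$, collects $M$, and stops (payoff $M-2c$); on signal $2$, box $b_1$ is worthless, so to reach $b_2$ the policy must first spend $c$ on $b_1$, and — using $M>3c$ to rule out stopping early — its best continuation is to probe both children (payoff $M-3c$). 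Hence this ordering has value $\tfrac12(M-2c)+\tfrac12(M-3c)=M-\tfrac52 c$, and by symmetry so does the other; the best-PA value is therefore $M-\tfrac52 c$, strictly below the best-FA value $M-2c$. The few degenerate policies (never probing $b_0$; probing $b_0$ and always stopping) are immediately seen to be worse once $M>2c$.

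The main obstacle, a mild one, is the clean enumeration on the PA side: one has to confirm that the precedence graph genuinely admits only the two symmetric orderings, that randomizing over orderings cannot beat the best deterministic one, and that the claimed stopping rule is optimal in each branch of each ordering — all of which reduces to checking a handful of inequalities between $M$ and $c$. Once the parameters are pinned (any $c>0$ and, e.g., $M>3c$), the strict gap $c/2$ follows. If a less degenerate witness is preferred, one can perturb the deterministic transitions into noisy ones and the constant separation persists.
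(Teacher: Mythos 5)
Your proof is correct, and it establishes the claim with a genuinely different witness than the paper's. The paper (Lemma~\ref{lem:app:sub_opt_PA}) uses a two-vertex chain $A \prec B$ together with an \emph{unconstrained} box $C$ that is independent of the chain; the FA advantage there comes from choosing, after seeing $R_A$, whether to continue down the chain to $B$ or to jump to the outside alternative $C$. Your construction instead is a depth-one rooted tree $b_0 \to \{b_1, b_2\}$ in which the root's reward deterministically pins down which leaf carries the prize; the FA advantage comes from choosing \emph{which branch to descend}. Both are forest precedence graphs and both satisfy the Markov property (in your case vacuously, since there is no directed path $b_i \prec b_j \prec b_k$; and the sibling rewards are degenerate, hence trivially conditionally independent given $R_0$). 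What your version buys is a clean, parameterized family with an explicit gap of exactly $c/2$, rather than a single numerical instance; what the paper's version buys is that it isolates the FA--PA gap already on the multi-line case (a chain plus an isolated vertex) rather than requiring a branching vertex. One minor cleanup: the threshold $M > 3c$ alone does not guarantee $M - 3c > 2 - c$ (e.g., $c=1$, $M=3.5$); you should state $M > 2 + 2c$, or simply "$M$ sufficiently large", which is what you clearly intend and what makes the case analysis on the PA side go through.
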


Since \cite{bfll20} examine a special case of our Markovian Pandora’s Box model, their negative result also implies that finding the optimal solution under a general DAG is \emph{NP-hard} in our setting. To maintain tractability, we focus on precedence graphs that form a \emph{forest}.

\begin{theorem}[Lower bound for Pandora's problem]
Computing a $0.9997$-approximate optimal solution for the Pandora’s Box Problem with order constraints, where the precedence graph forms a DAG, is \emph{NP-hard}.
\end{theorem}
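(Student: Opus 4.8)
The plan is to prove the lower bound by a polynomial, gap-preserving reduction from a canonical NP-hard gap problem. I would start from Max-E3-SAT: by the PCP theorem there is a constant $\epsilon_0\in(0,1)$ such that, given an E3-SAT formula $\phi$ on $m$ clauses, it is NP-hard to distinguish the case where $\phi$ is satisfiable from the case where every assignment leaves at least $\epsilon_0 m$ clauses unsatisfied. (Any fixed gap suffices as the source of hardness; the specific constant $0.9997$ is what falls out of the arithmetic below and is not claimed to be tight.) Alternatively, one could reuse the NP-hardness reduction of \cite{bfll20} directly and argue that the instances it produces already exhibit a constant multiplicative gap between the ``yes'' and ``no'' cases; the analysis is similar, so I describe the from-scratch version.

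Given $\phi$, I would build a Pandora's Box instance with a precedence DAG in two layers. A \emph{variable layer}: for each variable $x_i$, a gadget consisting of a ``true'' box and a ``false'' box, wired with precedence edges and a calibrated cost so that an optimal policy must effectively commit to exactly one of the two literal-boxes before it can make further progress, and so that opening both literal-boxes of the same variable is strictly dominated (this enforces that the literal-boxes opened along the way always encode a consistent partial assignment). A \emph{clause layer}: for each clause $C_j$, a ``witness'' box, together with one high-value \emph{payoff box} $B$ carrying a large, essentially deterministic reward $V$, whose predecessors are the $m$ witness boxes. Costs are chosen so that the net payoff of any policy that reaches $B$ equals $V$ minus the total cost paid to unlock it, and unlocking $B$ forces opening all $m$ witness boxes plus a consistent set of literal-boxes. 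Since DAG precedence is conjunctive whereas ``$C_j$ is satisfied'' is a disjunction over its three literals, I would realize the disjunction through the optimization rather than the precedence: give $C_j$ three alternative entry boxes, one gated by each satisfying literal-box, all leading to the same witness, so the policy pays a small penalty for any clause whose witness it is forced to reach without having set a satisfying literal. The cheapest way to unlock $B$ then corresponds to picking an assignment and paying a penalty proportional to the number of clauses it fails.

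With this construction: in the yes case a policy following a satisfying assignment pays only the fixed variable-setup cost, reaches all witnesses penalty-free, opens $B$, and nets expected payoff at least $V_{\mathrm{yes}}$; in the no case every policy that reaches $B$ has, by the variable gadgets, opened literal-boxes encoding one assignment, which satisfies at most $(1-\epsilon_0)m$ clauses, so it incurs at least $\epsilon_0 m$ penalties and nets at most $V_{\mathrm{no}}<V_{\mathrm{yes}}$ (and not reaching $B$ at all is even worse). Choosing $V$, the per-clause penalty, and the setup costs so that $V_{\mathrm{no}}/V_{\mathrm{yes}}\le 0.9997$ finishes the reduction: a $0.9997$-approximation algorithm would output a policy of value exceeding $V_{\mathrm{no}}$ exactly on the satisfiable instances, deciding E3-SAT. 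One should also check $\mathrm{OPT}>0$ in both cases so the ratio is meaningful, which holds by making $V$ large.

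The main obstacle is the no-case analysis against \emph{adaptive} policies. Because rewards are random, a partially- or fully-adaptive policy can interleave openings and gamble on which literal- or witness-box turns out valuable before committing; the ``consistency'' and ``$\epsilon_0 m$ penalties'' claims above are clean only for the non-adaptive combinatorial picture. I would neutralize this by supporting all rewards in the variable and clause layers on a narrow range (e.g., two-point distributions with tiny spread), small enough that no realization can flip the sign of any of the cost–benefit comparisons that drive the gadgets; an exchange/coupling argument then shows that conditioning on realized rewards never lets a policy pay strictly less in expectation to unlock $B$ than the best assignment-based route, reducing the adaptive analysis to the non-adaptive one. A secondary technical point is verifying that the disjunction-via-alternative-entry-boxes faithfully encodes OR — i.e., that the policy always routes each witness through its cheapest satisfied entry and is never better off taking an unsatisfied entry — which follows once the penalty costs and the payoff $V$ are separated by a large enough margin.
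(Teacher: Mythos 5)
The paper does not actually prove this theorem from scratch: it is a direct corollary of the hardness result of \cite{bfll20}, who showed that the Pandora's Box Problem with order constraints and \emph{independent} rewards is NP-hard to approximate within $0.9997$ on DAGs. Since independent rewards are a special case of Markovian rewards (take each transition matrix to ignore its input, so $\Pr[R_j \mid R_i] = \Pr[R_j]$), any instance produced by their reduction is also an instance of the Markovian problem, and the hardness transfers verbatim. The paper says exactly this in the surrounding text, and the specific constant $0.9997$ is inherited from \cite{bfll20}. Your plan of reusing that reduction in a black-box way is thus the intended proof; the from-scratch reduction you spend most of your space on is a genuinely different route, and it would be a contribution if it worked, but it has a gap you do not resolve.

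The gap is in the clause gadget. In this model a DAG edge is a hard \emph{conjunctive} precedence: a box can be opened only after \emph{all} of its in-neighbors have been opened. So ``three alternative entry boxes, one gated by each satisfying literal-box, all leading to the same witness'' forces the policy to open all three entry boxes before the witness, and hence (via the entry boxes' own predecessors) to open \emph{all three} literal-boxes of every clause. That directly contradicts the variable gadget, which is supposed to let the policy commit to exactly one literal per variable; there is no room left for ``routing through the cheapest satisfied entry'' because no routing choice exists. You flag this as ``a secondary technical point,'' but it is the crux: encoding a disjunction with only conjunctive precedences is exactly what makes this reduction hard, and ``realize the disjunction through the optimization'' is not yet a gadget, just a wish. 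Relatedly, the variable gadget is asserted (``wired with precedence edges and a calibrated cost so that an optimal policy must effectively commit to exactly one'') without a construction; with boxes that always have nonnegative rewards, an adaptive policy is never strictly \emph{harmed} by opening both literal-boxes beyond the extra cost, so ``strictly dominated'' needs an argument, not just calibration. The adaptive-vs-nonadaptive coupling argument is also only sketched. None of these are fatal to the general idea, but as written the reduction does not go through; the clean path is the one-line observation that \cite{bfll20}'s instances are already Markovian instances.
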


\textbf{Notations}.
For the remainder of the paper, we focus on the case where reward distributions have \emph{finite support}. WLOG, we let $n$ denote the number of boxes and assume that all boxes share the same finite set of possible values. Specifically, each reward $R_i$ for $i \in [n]$ takes $k$ values from $V := \{v_1, \ldots, v_k\}$. For every $i \in [n]$, we use $s^i$ and $R_i$ interchangeably to denote the reward of box $b_i$, and we denote the probability density function (pdf) of its reward as $\vp_i$, i.e., $\vp_i[s_q] = \Pr [R_i = s_q]$. We use $P_i \in \R_{+}^{K \times K}$ to denote the (probability) transition matrix from $\D_i$ to $\D_{i+1}$, i.e., $\vp_{i+1} = \vp_i \cdot P_{i+1}$.

We say an algorithm runs in polynomial time if its running time is in $\text{poly}(k, n)$. We use $\Tilde{O}$ as a variant of the Big-O that ignores polylog factors.

\section{Exact Optimization for Single Line}\label{sec:1_hyperbox}

In this section, we introduce the optimization approach for the Markovian Pandora’s Box with \emph{line} constraints. This solution serves as a fundamental building block for our approach to the subsequent forest and multi-line settings.

\subsection{Algorithm for Line Constraint}\label{subsec:alg_1_hyperbox}

Before presenting the details, we formally define a hyperbox as a sequentially ordered set of boxes forming a directed path within the original graph.

\begin{definition}[Hyperbox]\label{def:hyper}
    Given a instance of Markovian Pandora's box with $n$ boxes $\B$ associated with a DAG $G = (\B, E)$, a hyperbox $\hyperb:= \{b_1, \ldots, b_n\}\subseteq G$ is a subgraph of $G$ such that $\mathcal{L}$ is a directed line.
\end{definition}

Given a Markovian Pandora's box with a line constraint, we are ready to present our optimal solution for the one-line case (Alg.~\ref{alg:pd_1path}), which depends on the generalized reservation value (GRV) (Def.~\ref{def:GRV}) of a hyperbox. More specifically, our solution iteratively evaluate the GRV of the next box when exploring along a line. The process continues until the GRV of the next box falls below the current maximum reward, at which point the search terminates. 

\begin{algorithm}[!ht]\caption{Markov Pandora's Box, Single Line}\label{alg:pd_1path}
\begin{algorithmic}[1]
    \Require Ordered set of boxes $\{b_1,\ldots, b_n\}$, probing cost $\{c_1,\ldots, c_n\}$, GRV $\sigma_i(s)$ for all $i$, $x$ and $s$ (Alg.~\ref{alg:app:gw_1path}).
    \State Initialize $x \gets 0$, $i \gets 1$, $j \gets 1$, $\sigma \gets \sigma_i(0,0)$
    \While{$x < \sigma$} 
        \State Pay $c_i$ to open box $b_i$, observe reward/state $s^i$.
        \State $x \gets \max\{x, s^i\}$. \Comment{Update max reward}
        \State $\sigma \gets \sigma_{i+1}(s^i)$. \Comment{Lookup the GRV}
        \State $i \gets i+1$.
    \EndWhile
    \State \textbf{Return} box $b_j$ that is opened and with the max reward.
\end{algorithmic} 
\end{algorithm}

\subsection{Generalized Reservation Value}\label{subsec:GRV}

In this section, we show that probing according to the GRV maximizes the expected payoff. In the line case, all adaptive strategies are partially adaptive, as the order constraint uniquely determines the probing sequence. Consequently, optimizing the strategy simplifies to finding the optimal adaptive stopping time. 

The optimal stopping time depends on the current state $(x, s^{i-1}, i)$, where $x$ is the highest observed reward, $s^{i-1}$ is the state of the last opened box $b_{i-1}$, and $i$ is the next box available for probing. For any state $(x, s^{i-1}, i)$ and strategy $\tau$, we denote $\tau(x, s^{i-1}, i)$ as the random stopping time conditioned on this state. We derive the equivalent reward given any stopping time $\tau$:


\begin{definition}[Equivalent Reward]\label{def:exp_reward}
    Given $\tau$ and $(x, s^{i-1}, i)$, we define the expected future reward following $\tau$, starting at state $(x, s^{i-1}, i)$, as:
    \[\Phi^{\tau}(x, s^{i-1}, i) := \E [\max \{x , \max _{j=i} ^{\tau(x, s^{i-1}, i)}R_j\} - \sum_{j=i}^{\tau(x, s^{i-1}, i)} c_j]\]
\noindent In addition, we use 
    \[
    \Phi (x, s^{i-1}, i)  = \Phi ^{\tau^*}(x, s^{i-1}, i)= \max _\tau \Phi^{\tau}(x, s^{i-1}, i) 
    \]
to denote the expected future reward following the optimal strategy $\tau^*$ starting at state $(x, s^{i-1}, i)$.
\end{definition}

 \noindent Given an optimal stopping time $\tau^*$, $\Phi$ could be solved inductively by Bellman’s principle of optimality:
  \begin{align*}
     & \Phi^{\tau^*}(x, s^{i-1},i) = \max \{x, -c_i + \\
     & \quad \E[\max \{ \max\{x, s^i\}, \max_{j=i+1}^{\tau^*(x,s^i,i+1)} s^j - \sum_{j=i+1}^{\tau^*(x,s^i,i+1)} c_j \}] \} \\
     & = \max \{x, -c_i + \E_{s^i}[\phi^{\tau^*} (\max\{x,s^i\}, s^i, i+1) ]\}.
 \end{align*}

Inside the max operator, the first term represents the utility of not exploring box $i$, while the second captures the expected utility of optimally exploring future boxes. At a certain threshold $x$, the decision-maker is indifferent between continuing and stopping. We define this threshold as the \emph{generalized reservation value} (GRV):

\begin{definition}[Generalized Reservation Value]\label{def:GRV}
	\label{def:grv}
	Given any state $(x, s^{i-1}, i)$, we define the generalized reservation value at current state for box $b_i$, denoted $\sigma_i$, as the smallest solution to: 
	\begin{equation}
	\E \Big[\Big( \max _{j=i} ^{\tau^*(\sigma_i, s^{i-1}, i)}R_j - \sigma_i\Big)_+ - \sum_{j=i}^{\tau^*(\sigma_i, s^{i-1}, i)} c_j\Big] = 0
	\label{eq:grve}
	\end{equation}
    where $\tau^*$ is the optimal strategy.
\end{definition}



We concluded this section by presenting a few properties of the GRV. More details in Appendix.~\ref{sec:app:multi_line}
\begin{lemma}[Properties of GRV]\label{lem:properties_GRV}
 Given a Markovian Pandora's box with line precedence graph $\mathcal{L}=\left[b_1, \ldots, b_n\right]$, the generalized reservation value of every box $i \in [n]$ satisfies the following property: Given any state $s^{i-1}$,
 \squishlist
    \item $\sigma_i$ is independent of the current max reward $x$.
    \item $\sigma_i(s^{i-1}, i)$ is nondecreasing as additional boxes are appended to $\hyperb$.
    \item Let $\eta$ be the (random) index of the first box that has generalized reservation value smaller than $\sigma_i(s^{i-1}, i)$, then $\sigma_i(s^{i-1}, i)$ depends only on the (sub)hyperbox $\hat{\hyperb} := \{b_i, \ldots, b_{\eta}\}$. If $i = \eta$ with probability $1$, then $\sigma_i(s^{i-1}, i)$ depends only on $b_i$. 
 \squishend

\end{lemma}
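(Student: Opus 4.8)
The plan is to recast the GRV through the ``relative value'' function
\[
\kappa_i(x,s^{i-1},i)\;:=\;\max_{\tau}\;\E\Big[\big(\max\nolimits_{j=i}^{\tau}R_j-x\big)_+-\sum\nolimits_{j=i}^{\tau}c_j\;\Big|\;R_{i-1}=s^{i-1}\Big],
\]
and to build everything on two facts about it. First, since $\max\{x,M\}=x+(M-x)_+$, the Bellman recursion for $\Phi$ is equivalent to $\Phi(x,s^{i-1},i)=x+\kappa_i(x,s^{i-1},i)$, so the policy $\tau^{*}$ that is optimal for $\Phi$ is exactly the one attaining the maximum defining $\kappa_i$ (the additive constant $x$ does not move the argmax), and Definition~\ref{def:GRV} is precisely the equation $\kappa_i(\sigma_i,s^{i-1},i)=0$. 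Second, unrolling the recursion gives
\[
\kappa_i(x,s^{i-1},i)=\max\Big\{0,\;-c_i+\E_{s^i}\big[(s^i-x)_++\kappa_{i+1}(\max\{x,s^i\},\,s^i,\,i+1)\big]\Big\}.
\]
Since reward supports are finite, each $\kappa_i(\cdot,s^{i-1},i)$ is continuous and piecewise linear, and I would run a backward induction on $i$ (base case the last box, where $\kappa_n$ solves $c_n=\E[(R_n-\sigma_n)_+]$) showing that $\kappa_i(\cdot,s^{i-1},i)$ is non-increasing — equivalently the ``continue minus stop'' gap has slope in $[-1,0]$ — and hence that the optimal action at $(x,s^{i-1},i)$ is to open $b_i$ iff $x<\sigma_i(s^{i-1},i)$. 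This already yields property~(i): $\sigma_i$ is the zero-crossing of a function of $(s^{i-1},i)$ and the suffix chain alone, carrying no dependence on the running maximum $x$.

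For property~(ii) I would fix $s^{i-1}$ and append boxes to the tail of $\hyperb$, producing $\hyperb'$. Every adaptive stopping policy feasible for the $b_i$-subproblem on $\hyperb$ remains feasible on $\hyperb'$, so the larger feasible set gives $\kappa_i^{\hyperb'}(x,s^{i-1},i)\ge\kappa_i^{\hyperb}(x,s^{i-1},i)$ for every $x$. Then whenever $x<\sigma_i^{\hyperb}(s^{i-1},i)$ we have $\kappa_i^{\hyperb'}(x,s^{i-1},i)\ge\kappa_i^{\hyperb}(x,s^{i-1},i)>0$, forcing $x<\sigma_i^{\hyperb'}(s^{i-1},i)$; hence $\sigma_i^{\hyperb}\le\sigma_i^{\hyperb'}$, and the general statement follows by appending one box at a time.

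Property~(iii) is the delicate one. Note first $\eta\ge i+1$, since $\sigma_i(s^{i-1},i)$ is not strictly below itself. Setting $x=\sigma_i$ in the unrolled recursion and using $\kappa_i(\sigma_i,\cdot)=0$ together with the fact that the continuing policy is optimal at the indifference point, I get $c_i=\E_{s^i}\big[(s^i-\sigma_i)_++\kappa_{i+1}(\max\{\sigma_i,s^i\},s^i,i+1)\big]$. The key observation is that on any realization where $\sigma_{i+1}(s^i,i+1)\le\sigma_i$ — i.e.\ where $\eta=i+1$ — we have $\max\{\sigma_i,s^i\}\ge\sigma_i\ge\sigma_{i+1}(s^i,i+1)$, so by property~(i) the recursive term $\kappa_{i+1}(\max\{\sigma_i,s^i\},s^i,i+1)$ vanishes; only realizations with $\eta>i+1$ contribute a recursive term, and that term is evaluated at an argument $\ge\sigma_i$. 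Iterating this expansion along prefixes of the chain, every surviving recursive call $\kappa_m(y_m,s^{m-1},m)$ has $y_m=\max\{\sigma_i,s^i,\dots,s^{m-1}\}\ge\sigma_i$ and therefore vanishes as soon as $\sigma_m(s^{m-1},m)\le\sigma_i$, which first happens at $m=\eta$. Thus the expansion bottoms out at $b_\eta$, so $\sigma_i(s^{i-1},i)$ is determined by the costs and Markov reward laws of $b_i,\dots,b_\eta$ only. The branching of this expansion also consults the intermediate values $\sigma_m(s^{m-1},m)$ for $i<m<\eta$; but any box whose GRV is $<\sigma_i\le\sigma_m(s^{m-1},m)$ also has GRV $<\sigma_m(s^{m-1},m)$, so the analogous ``first drop'' index for the $b_m$-subproblem is $\le\eta$, and a nested induction on the depth $\eta-i$ shows each such $\sigma_m$ is itself a function of $\hat{\hyperb}$. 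Finally, if $\eta=i+1$ almost surely the recursive term is identically zero and the equation collapses to $c_i=\E[(R_i-\sigma_i)_+\mid s^{i-1}]$, Weitzman's single-box reservation value, so $\sigma_i(s^{i-1},i)$ depends on $b_i$ alone.

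The hard part will be making the unrolling in property~(iii) fully rigorous despite $\eta$ being random and defined self-referentially through $\sigma_i$: one must argue simultaneously that the expansion is well-founded and that every intermediate GRV touched during its branching already lies ``inside'' $\hat{\hyperb}$. The clean route is the nested induction on $\eta-i$ sketched above, using the monotonicity $\eta_m\le\eta$ of the first-drop index; once that is in place, properties~(i) and~(ii) are routine consequences of the decomposition $\Phi=x+\kappa$ and the monotonicity of $\kappa$.
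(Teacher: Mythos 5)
Your proposal is correct and follows essentially the same route as the paper: your relative value $\kappa_i$ is precisely the paper's $H_i(x,s^{i-1})=\Phi(x,s^{i-1},i)-x$ from Lemma~\ref{lem:app:phi_prop}, property~(i) is immediate because the defining equation~\eqref{eq:grve} never references the running max $x$, and property~(ii) is the same enlarged-feasible-set-of-stopping-times argument the paper uses (appending boxes only increases $H_i$, hence its zero-crossing). Your unrolling of the Bellman recursion for property~(iii), with the nested induction on the depth $\eta-i$ and the observation that $\eta_m\le\eta$ for intermediate $m$, is a considerably more careful version of the paper's one-sentence assertion that the optimal policy started at state $(\sigma_i,s^{i-1},i)$ stops by box $\eta-1$; you have correctly identified and resolved the self-referential subtlety in the definition of $\eta$ that the appendix proof glosses over.
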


\subsection{Correctness and Running Time Analysis}\label{subsec:correct_time_line}

\noindent We first show the existence and the uniqueness of GRV:

\begin{theorem}[Optimality of GRV] \label{thm:unique_fair_cap}
The smallest solution to (\ref{eq:grve}) exists, and hence the generalized reservation value is well defined (Def.~\ref{def:grv}). Given the current state $(x, s^{i-1}, i)$, and $\sigma := \sigma_i(s^{i-1}, i)$: 

\squishlist
    \item If the generalized reservation value $\sigma > x$, all optimal stopping strategies proceeds. 
    \item If $\sigma < x$, all optimal stopping strategies terminates.
    \item If $\sigma = x$, there exists an optimal stopping time $\tau^*(x, s^{i-1}, i) \geq i$.
\squishend

\end{theorem}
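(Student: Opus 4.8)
The plan is to phrase everything through two functions of the running maximum $x$ (with the preceding state $s^{i-1}$ held fixed): the excess value $g_i(x):=\Phi(x,s^{i-1},i)-x$ and the one-step continuation value $C_i(x):=-c_i+\E_{s^i}[\Phi(\max\{x,s^i\},s^i,i+1)]$, so that Bellman's principle reads $\Phi(x,s^{i-1},i)=\max\{x,C_i(x)\}$, equivalently $g_i(x)=(C_i(x)-x)_+$. First I would establish, by backward induction on $i$ (base case $\Phi(x,s^{n},n+1)=x$, i.e.\ $g_{n+1}\equiv 0$), the structural claim: $g_i$ is nonnegative, nonincreasing and continuous in $x$ — in fact piecewise linear, since $V$ is finite — and $g_i(x)=0$ for all $x\ge \max V$. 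The induction step uses the identity $C_i(x)-x=-c_i+\E_{s^i}[(s^i-x)_+ + g_{i+1}(\max\{x,s^i\})]$, obtained by substituting $\Phi(y,s^i,i+1)=y+g_{i+1}(y)$: both summands on the right are nonincreasing in $x$ (the first trivially; the second because $g_{i+1}$ is nonincreasing by the inductive hypothesis and $x\mapsto\max\{x,s^i\}$ is nondecreasing), so $C_i(\cdot)-x$ and hence $g_i$ are nonincreasing; continuity propagates in the same way; and for $x\ge\max V$ one has $\max\{x,s^i\}=x$, $(s^i-x)_+=0$ and $g_{i+1}(x)=0$, whence $C_i(x)-x=-c_i<0$ and $g_i(x)=0$.

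Second, I would identify equation~(\ref{eq:grve}) with the equation $g_i(\sigma_i)=0$. Using the rewriting $\max\{x,m\}=x+(m-x)_+$, for any strategy $\tau$ one has $\Phi^{\tau}(x,s^{i-1},i)-x=\E[(\max_{j=i}^{\tau}R_j-x)_+-\sum_{j=i}^{\tau}c_j]$; applied to the optimal $\tau^{*}$, the left-hand side of~(\ref{eq:grve}) is exactly $\Phi^{\tau^{*}}(\sigma_i,s^{i-1},i)-\sigma_i=\Phi(\sigma_i,s^{i-1},i)-\sigma_i=g_i(\sigma_i)$. By the structural claim, $\{x\ge 0:g_i(x)=0\}$ is nonempty (it contains $[\max V,\infty)$) and, being the zero set of a continuous nonincreasing nonnegative function, is a closed half-line $[\sigma_i,\infty)$; its left endpoint $\sigma_i\le\max V$ is the smallest solution, so the generalized reservation value is well defined, and moreover $g_i(x)>0$ for every $x<\sigma_i$.

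Third, I would read off the three cases from the comparison of $C_i(x)$ with $x$. From $\Phi(x,s^{i-1},i)=\max\{x,C_i(x)\}$: at state $(x,s^{i-1},i)$ every optimal strategy opens $b_i$ iff $C_i(x)>x$, every optimal strategy stops iff $C_i(x)<x$, and if $C_i(x)=x$ both actions are optimal. If $\sigma_i>x$ then $g_i(x)>0$, so $C_i(x)>x$, and every optimal strategy — hence Algorithm~\ref{alg:pd_1path} — proceeds. If $\sigma_i=x>0$ then $g_i(x)=0$ gives $C_i(x)\le x$, while letting $x'\uparrow\sigma_i$ in $C_i(x')>x'$ and invoking continuity of $C_i$ gives $C_i(\sigma_i)\ge\sigma_i$; hence $C_i(\sigma_i)=\sigma_i$, opening $b_i$ is optimal, and some optimal $\tau^{*}(x,s^{i-1},i)\ge i$ exists (the degenerate case $\sigma_i=x=0$ being immediate). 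The remaining case $\sigma_i<x$ demands the \emph{strict} inequality $C_i(x)<x$, whereas $g_i(x)=0$ a priori only gives $C_i(x)\le x$; this is the main obstacle. I would close it by contradiction and induction down the line: if $C_i(x_0)=x_0$ for some $x_0>\sigma_i$, monotonicity of $C_i(\cdot)-x$ forces $C_i(x)-x\equiv 0$ on $[\sigma_i,x_0]$; since this function is $-c_i$ plus the expectation of a sum of nonincreasing nonnegative terms, each term must be constant in $x$ on $[\sigma_i,x_0]$ almost surely — constancy of $(s^i-x)_+$ forces $R_i\le\sigma_i$ almost surely given $s^{i-1}$, and constancy of the positive term $g_{i+1}(\max\{x,s^i\})$ propagates the identical situation to box $b_{i+1}$ — and iterating reaches the terminal box $b_n$, where the continuation value is $-c_n+\E[(R_n-x)_+]$, and a positive constant here contradicts the strict decrease of $x\mapsto\E[(R_n-x)_+]$ wherever $\Pr[R_n>x]>0$. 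Everything outside this leaf-propagation step is routine bookkeeping around the Bellman recursion; I will also take for granted, as established in Section~\ref{subsec:GRV}, that the Markov property makes $(x,s^{i-1},i)$ a sufficient state, so that $\Phi$ and $g_i$ are genuine functions of it.
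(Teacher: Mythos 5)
Your proposal is correct and arrives at the same conclusions, but the route differs in two respects, one cosmetic and one substantive. Cosmetically, you establish the structural properties of $g_i(x):=\Phi(x,s^{i-1},i)-x$ (the paper's $H_i$) by a clean backward Bellman induction through the identity $C_i(x)-x=-c_i+\E\bigl[(s^i-x)_+ + g_{i+1}(\max\{x,s^i\})\bigr]$, whereas the paper's Lemma~\ref{lem:app:phi_prop} argues directly by comparing $\Phi$ at two thresholds $a<b$ via the suboptimality of $\tau^*(b,\cdot)$ started from $a$; both routes deliver the same facts (nonnegativity, monotone nonincrease, $1$-Lipschitz/continuity of $H_i=g_i$, and $H_i\equiv 0$ above $\sigma_i$). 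Your treatment of the $\sigma=x$ case by letting $x'\uparrow\sigma_i$ and invoking continuity is morally the same as the paper's $\epsilon$-perturbation using Lipschitzness in Lemma~\ref{lem:app:unique_fair_cap}. Substantively, the paper's own proof only establishes existence of $\sigma_i$ and the indifference case $\sigma=x$; the $\sigma<x$ bullet, which asserts that continuing is \emph{strictly} suboptimal so that \emph{all} optimal strategies stop, is left implicit. Your leaf-propagation argument genuinely fills this: from $C_i(x_0)=x_0$ at some $x_0>\sigma_i$, monotonicity pins $C_i(\cdot)-(\cdot)\equiv 0$ on $[\sigma_i,x_0]$, each nonincreasing summand in the recursion must then be almost-surely constant there, and pushing this to the terminal box $b_n$ contradicts the strict decrease of $x\mapsto\E[(R_n-x)_+]$ wherever $\Pr[R_n>x]>0$; the argument implicitly uses $c_i>0$, which the model assumes. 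Two small remarks: both you and the paper wave at the corner $\sigma_i=x=0$ (if $C_i(0)<0$ there need not be any optimal strategy opening $b_i$, so ``immediate'' deserves a sentence), and the $\sigma<x$ strictness is not actually needed for the optimality of Algorithm~\ref{alg:pd_1path}, since once $H_i(x)=0$ stopping is optimal regardless of uniqueness --- which is presumably why the paper did not bother with it.
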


This theorem implies the correctness of our algorithm, as it aligns with the optimal stopping time characterized above. Here, the optimal stopping time is indifferent between proceed or stop when the current max reward equals the GRV of the next box in line. To ensure the uniqueness of the optimal stopping time, WLOG, we adopt the convention that the optimal strategy $ \tau^*(\sigma_i, s^{i-1}, i) $ always stops at $ b_{i-1} $.  

Next, we introduce a lemma for efficiently computing the equivalent reward table for any given state, a key component in our algorithm for calculating the GRV. 

\begin{lemma}\label{lem:computing_rv}
    There is an efficient algorithm (Alg.~\ref{alg:app:gw_1path}) that computes $\phi (x,s, i)$ for all $i$, $x$ and $s$. 
\end{lemma}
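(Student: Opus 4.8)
The plan is to compute the equivalent reward table $\Phi(x,s,i)$ by dynamic programming, working backwards along the line from $b_n$ to $b_1$, and exploiting the finite support $V=\{v_1,\dots,v_k\}$ to keep the table polynomial in size. First I would argue that, by Lemma~\ref{lem:properties_GRV}, the dependence on $x$ collapses: since $\sigma_i$ is independent of $x$, the optimal stopping decision at state $(x,s^{i-1},i)$ depends on $x$ only through the comparison $x \lessgtr \sigma_i(s^{i-1},i)$, so it suffices to tabulate $\Phi$ at the values $x \in V \cup \{0\}$ together with a closed-form expression for how $\Phi$ grows linearly in $x$ once $x$ exceeds the relevant GRV (in that regime $\tau^*$ stops immediately and $\Phi(x,s,i)=x$). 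Thus the table has $O(k)$ rows for $x$, $O(k)$ rows for the conditioning state $s$, and $n$ columns for the index $i$, i.e. $O(nk^2)$ entries.

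Next I would set up the Bellman recursion already displayed in the excerpt,
\[
\Phi(x,s^{i-1},i)=\max\Bigl\{x,\; -c_i+\E_{s^i \sim P_i(s^{i-1},\cdot)}\bigl[\Phi(\max\{x,s^i\},s^i,i+1)\bigr]\Bigr\},
\]
with base case $\Phi(x,s^{n-1},n)=\max\{x,\,-c_n+\E_{s^n}[\max\{x,s^n\}]\}$, and observe that the conditional distribution of $s^i$ given $s^{i-1}$ is a single row of the known transition matrix $P_i$, readable in $O(k)$ time. Evaluating one entry of $\Phi$ therefore requires one expectation over $k$ possible values of $s^i$, each a lookup into the already-computed column $i+1$; this costs $O(k)$ per entry, for a total running time of $O(nk^3)$, which is $\poly(n,k)$. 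I would present this as Algorithm~\ref{alg:app:gw_1path} and verify by induction on $i$ (downward) that the tabulated values satisfy the defining equation of $\Phi$ from Definition~\ref{def:exp_reward}, using the Bellman optimality principle quoted in the text to justify that the recursion indeed computes $\max_\tau \Phi^\tau$.

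The one technical point that needs care — and the main obstacle — is handling the $x$-argument cleanly: naively $x$ ranges over a continuum, and although $\max\{x,s^i\}$ keeps $x$-values inside $V\cup\{x_0\}$ once we start from a fixed initial $x_0$, we must make sure the recursion is closed under the operations it performs. The clean fix is the piecewise-linear structure: for fixed $(s,i)$ the map $x \mapsto \Phi(x,s,i)$ is nondecreasing, convex, piecewise linear with breakpoints only at elements of $V$, and has slope $1$ for $x$ large; so representing $\Phi(\cdot,s,i)$ by its $O(k)$ breakpoint values determines it everywhere, and the expectation of such a function against a discrete distribution is again of the same form, so the class is closed under the recursion. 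Establishing this closure (monotonicity and the piecewise-linear/slope-$1$-tail shape, proved by the same downward induction) is what makes the $O(k)$-sized-per-state tabulation legitimate; once it is in hand, correctness and the $\poly(n,k)$ running-time bound follow immediately.
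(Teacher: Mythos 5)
Your proposal is correct and takes essentially the same approach as the paper: a backward dynamic program over the three-dimensional state $(x,s,i)$, filling the table from $i=n$ down to $i=1$ via the Bellman recursion. The one place you overcomplicate things is the $x$-argument: the paper simply tabulates $x$ over the finite support $S$ (closed under $\max\{x,s^i\}$ since rewards take values in $S$), so the piecewise-linear/convexity machinery you introduce, while true, is unnecessary; also note that the paper's Algorithm~\ref{alg:app:gw_1path} additionally tracks the distributions $\frm$ and $\frc$ of the future max reward and future cost alongside $\Phi$ (bookkeeping later reused when contracting a hyperbox into a single box with random cost), whereas your leaner recursion on $\Phi$ alone suffices for this lemma.
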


For more details, we defer the readers to Appendix~\ref{sec:app:details_exact_opt}. The proof's intuition is to first evaluate $\phi$ at the last box of the hyperbox and then backtrack to the first.

\noindent Now we are ready to present the main theorem of this section, that our algorithm can be implemented in polynomial time, and maximizes the expected payoff. 

\begin{theorem}[Generalized Reservation Rule is Optimal for Single Hyperbox]\label{thm:GMR_path}
Algorithm \ref{alg:pd_1path} is optimal for the Markovian Pandora’s Box problem under single-line precedence graph
and can be computed in polynomial time and space.
\end{theorem}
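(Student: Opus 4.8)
The plan is to establish two things: (i) Algorithm~\ref{alg:pd_1path} implements an optimal stopping time, and (ii) all quantities it needs — in particular the GRV values $\sigma_i(s^{i-1})$ — can be precomputed in $\poly(k,n)$ time and stored in $\poly(k,n)$ space. Optimality follows almost immediately from the results already in this section. By Definition~\ref{def:adaptive_strategy} and the remark that in a line the probing order is forced, the only freedom is the stopping time, so it suffices to show the algorithm executes an optimal $\tau^*$. By Theorem~\ref{thm:unique_fair_cap}, at state $(x,s^{i-1},i)$ the optimal policy continues iff $\sigma_i(s^{i-1},i) > x$, stops iff $\sigma_i(s^{i-1},i) < x$, and is indifferent (with a valid optimal choice being to continue) when $\sigma_i(s^{i-1},i)=x$. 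Algorithm~\ref{alg:pd_1path} does exactly this: it opens $b_i$ precisely while $x < \sigma$ where $\sigma=\sigma_i(s^{i-1})$, i.e.\ it proceeds in the strict-inequality case and the indifference case and halts otherwise. Hence by induction on $i$ (using Bellman optimality as displayed before Definition~\ref{def:GRV}) the realized payoff of the algorithm equals $\Phi(0,0,1)=\max_\tau \Phi^\tau(0,0,1)$, the optimum.

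For the running time, the key observation is that the GRV $\sigma_i(s^{i-1},i)$ depends on $s^{i-1}$ only through the conditional reward distributions of $b_i, b_{i+1},\dots$, which are determined by propagating $s^{i-1}$ through the transition matrices $P_i, P_{i+1}, \dots$ Since each $R_j$ takes values in the fixed set $V=\{v_1,\dots,v_k\}$, there are only $nk$ relevant states $(i,s^{i-1})$ (box index times value of the previous box), so it suffices to compute each $\sigma_i(v_q)$ once. I would compute these by backward induction over the line, exactly the procedure referenced in Lemma~\ref{lem:computing_rv} (Alg.~\ref{alg:app:gw_1path}): starting from $b_n$, build the equivalent-reward table $\phi(x,s,i)$ for $x\in V$, $s\in V$, and all $i$, using the Bellman recursion $\phi^{\tau^*}(x,s^{i-1},i) = \max\{x,\ -c_i + \E_{s^i}[\phi^{\tau^*}(\max\{x,s^i\},s^i,i+1)]\}$, where the expectation over $s^i$ is a sum of $k$ terms with weights read off from $P_i$ applied to $s^{i-1}$. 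Each table entry costs $O(k)$ to fill and there are $O(nk^2)$ entries, so the table is built in $O(nk^3)$ time and $O(nk^2)$ space; from $\phi$ one extracts each $\sigma_i(s)$ as the smallest $x$-threshold at which the two arguments of the outer $\max$ agree (equivalently, by solving~\eqref{eq:grve}, which over the finite grid reduces to scanning candidate values), adding only lower-order cost.

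Putting the pieces together: precompute $\{\sigma_i(v_q)\}_{i\in[n],q\in[k]}$ in polynomial time and space via the backward pass, then run Alg.~\ref{alg:pd_1path}, which performs at most $n$ iterations each doing an $O(1)$ comparison and table lookup. The total is $\poly(k,n)$, and correctness is as argued above.

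The main obstacle I anticipate is not optimality — that is essentially handed to us by Theorem~\ref{thm:unique_fair_cap} — but making the two reductions in the complexity argument fully rigorous: first, that $\sigma_i(s^{i-1},i)$ genuinely depends on $s^{i-1}$ only through the value $s^{i-1}\in V$ and not on $x$ or on the history (this uses the Markov property from Problem~\ref{prob:mar_pandora} together with the first bullet of Lemma~\ref{lem:properties_GRV}), and second, that extracting the \emph{smallest} solution of the fixed-point equation~\eqref{eq:grve} from the tabulated $\phi$ can be done exactly and efficiently — one must check that the relevant function of the threshold is monotone so that the crossing point is well-defined and locatable on the finite value grid without blowing up the running time. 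Handling the boundary conventions (e.g.\ the tie-breaking rule that $\tau^*(\sigma_i,s^{i-1},i)$ stops at $b_{i-1}$) consistently between the definition of $\sigma_i$ and the algorithm's halting test is the remaining fiddly point.
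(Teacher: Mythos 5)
Your proposal is correct and follows essentially the same route as the paper: correctness is delegated to Theorem~\ref{thm:unique_fair_cap}, the $\phi$-table is built by the backward pass of Lemma~\ref{lem:computing_rv} (Alg.~\ref{alg:app:gw_1path}), and each $\sigma_i(s)$ is then extracted from the table by exploiting monotonicity of $\phi(\cdot,s,i)-x$ (the paper does this by binary search; your scanning argument is equivalent on the finite grid). One small slip: you write that the algorithm ``proceeds in the strict-inequality case and the indifference case,'' but the while-condition $x<\sigma$ actually \emph{halts} at indifference; this is harmless because both choices are optimal when $x=\sigma_i$, as the paper notes with its tie-breaking convention, and you yourself flag the tie-break as the fiddly point at the end.
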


\begin{proof}
The correctness follows from Thm~\ref{thm:unique_fair_cap}. 

For the running time analysis, we first show that the reservation lookup can be done in polytime, notice that $\hat{\sigma} := \sigma_i(s)$ satisfies that for any $x > \hat{\sigma}$, $\phi (x, s, i) = x$; and for any $x < \hat{\sigma}$, $\phi (x, s, i) \geq x$, we could binary search on the range of $x$ to recover the generalized reservation value. Since the $\phi$ table only requires polynomial time (Lem.~\ref{lem:computing_rv}), the overall running time is in polynomial. 

Regarding the space, storing the $\phi$ tables requires polynomial space, and as the algorithm proceed, the algorithm only need to store the max reward. Thus Algorithm \ref{alg:pd_1path} requires polynomial space. 
\end{proof}


\section{Exact Optimization for Multiple Lines}\label{sec:multi_boxes}

In this section, we will show that the optimal strategy is probing the hyperboxes according to the current GRV of their available box (i.e., the first unopened box). Notice that this strategy becomes fully adaptive for the setting where the precedence graph consists of multiple lines. 

We begin by presenting the definition of Pandora's box with random cost, with reward and cost correlated.

\begin{definition}[Pandora's box with Random Cost]\label{def:pandora_box_random_cost}

A box $b$ is classified as a \textit{Pandora’s Box with Random Cost} if it has a hidden reward $R$ and an opening cost $c$, where both $R$ and $c$ are \textit{random variables} drawn from \textit{known} distributions $\mathcal{D}_R$ and $\mathcal{D}_c$, respectively. Here the opening cost $c$ varies and is \textit{correlated} with the reward $R$ under a \textit{joint distribution} $\Gamma$. 
\end{definition}
Next, we introduce how to equivalently represents a hyperbox as a single box, where the GRV of the hyperbox equals the GRV of the equivalent box. More details in Appendix.~\ref{sec:app:multi_line}.

\begin{lemma}[Equivalent Single Box for Hyperbox]\label{lem:equ_box}
For a stopping time $\tau$ and a hyperbox $\hyperb := \{b_1, \ldots, b_n\}$, there exists a box $\hat{b}$ with random cost (Def.~\ref{def:pandora_box_random_cost}) such that following $\tau$ over $\hyperb$ has the same utility distribution as $\hat{b}$.
\end{lemma}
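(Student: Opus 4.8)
# Proof Proposal for Lemma~\ref{lem:equ_box}

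The plan is to explicitly construct the random-cost box $\hat{b}$ from the hyperbox $\hyperb = \{b_1,\ldots,b_n\}$ and the stopping time $\tau$, and then verify that the joint law of (reward collected, total cost paid) matches. The key observation is that once the probing order is fixed (which it is, inside a hyperbox, by the line constraint), running $\tau$ over $\hyperb$ is itself a random experiment: nature draws the Markov chain of rewards $R_1, R_2, \ldots$ along the line, and $\tau$ stops at some (random) index $m = \tau$, at which point the realized outcome is the pair $\bigl(\max_{j=1}^{m} R_j,\ \sum_{j=1}^{m} c_j\bigr)$. So I would simply \emph{define} the distribution $\Gamma$ of $\hat{b}$ to be exactly the pushforward of the underlying probability measure under the map $\omega \mapsto \bigl(\max_{j\le \tau(\omega)} R_j(\omega),\ \sum_{j\le\tau(\omega)} c_j(\omega)\bigr)$. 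That is, $\hat{R} := \max_{j\le\tau} R_j$ and $\hat{c} := \sum_{j\le\tau} c_j$, with their induced joint distribution. By construction, the utility of opening $\hat{b}$, namely $\hat{R} - \hat{c}$, has the same distribution as $\max_{j\le\tau} R_j - \sum_{j\le\tau} c_j$, which is precisely the utility of following $\tau$ over $\hyperb$.

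The steps, in order, would be: (1) Fix the sample space induced by the Markov chain along $\hyperb$ (rewards take values in $V = \{v_1,\ldots,v_k\}$, transitions given by the matrices $P_i$), and note $\tau$ is a stopping time adapted to the filtration generated by $R_1, R_2, \ldots$, so $\{\tau = m\}$ depends only on $R_1,\ldots,R_m$. (2) Define the pair of random variables $(\hat{R}, \hat{c})$ as above, and let $\Gamma$ be their joint distribution; since $\hyperb$ is finite and rewards have finite support, this is a finitely-supported distribution on $V \times \R_+$ (costs are deterministic given the realized length, so $\hat{c}$ ranges over at most $n$ values). (3) Check that $\Gamma$ is a valid "Pandora's box with random cost" instance in the sense of Def.~\ref{def:pandora_box_random_cost} — it is, trivially, since we only need a joint law of reward and cost. (4) Conclude that $\hat{R} - \hat{c} \stackrel{d}{=} \max_{j\le\tau} R_j - \sum_{j\le\tau} c_j$ by definition, which is the claimed equality of utility distributions.

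I do not expect a genuine obstacle here; the lemma is essentially a definitional repackaging, and the only care needed is bookkeeping. The one subtlety worth spelling out is the role of the state $s^{0}$ (or more generally the conditioning state passed into the hyperbox from whatever precedes it in the forest): $\tau$ and hence $(\hat R,\hat c)$ are defined conditionally on the realized reward of the box feeding into $b_1$, so strictly speaking $\Gamma$ depends on that conditioning state, and I would note this dependence explicitly so the contraction argument in the forest case (Lem.~\ref{lem:app:update_GRV_forest}) can invoke it cleanly. A second minor point is to record that, because $\tau$ is order-measurable and the order inside a hyperbox is forced, no "adaptive reordering" freedom is lost in the reduction — the equivalence is exact, not merely an approximation. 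The remark following the lemma about GRVs matching ($\sigma$ of $\hyperb$ equals $\sigma$ of $\hat b$) then follows immediately from Def.~\ref{def:GRV}, since the defining equation~\eqref{eq:grve} for the GRV is written purely in terms of the utility distribution $\max_{j} R_j - \sum_j c_j$ under $\tau^*$, which we have just shown is preserved.
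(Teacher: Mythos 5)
Your construction is essentially identical to the paper's: both define the equivalent box $\hat{b}$ by pushing the joint law of $(R_1,\ldots,R_n)$ forward through the map $\omega \mapsto \bigl(\max_{j\le\tau(\omega)} R_j,\ \sum_{j\le\tau(\omega)} c_j\bigr)$ and observe that the utility $\hat{R}-\hat{c}$ then has the required distribution by definition. The extra remarks you add (explicitly recording the dependence on the conditioning state $s^0$, and noting that no reordering freedom is lost since the order inside a hyperbox is forced) are useful clarifications but do not change the argument.
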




    

We begin by presenting a key lemma for our main theorem, which establishes that under certain Markovian correlations, the GRV remains an optimal decision rule.

\begin{lemma}[Probing Equivalent Boxes]\label{lem:three_box_lemma}
Given three Pandora's boxes $A, B, C$ with random cost (Def.~\ref{def:pandora_box_random_cost}), with the following property:
\squishlist
    \item The reward and cost of $B$ is independent of the reward and cost of $A$.
    \item The reward and cost (hence payoff) of $C$ depends on both $A$ and $B$ in a Markovian fashion.
    \item The reservation value $\sigma (A) > \sigma (B)> \sigma(C)$, given any realizations of $A$ and $B$ \footnote{Here $\sigma(C)$ is a random variable depends on $A$ and $B$. }, i.e., for any possible value of $x$ of $A$ and possible value of $y$ of $B$: 
    \begin{align*}
        \sigma (A) \geq \sigma (B) \geq [\sigma(C) | R_{A} = x, R_{B} = y]
    \end{align*}
    \item We have a precedence constraint that $A$ and $B$ must be probed before $C$. 
\squishend
then conditioned on any competitive reward $X$, the optimal probing strategy is $A \prec B \prec C$. 
\end{lemma}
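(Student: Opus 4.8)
The plan is to prove optimality of the order $A \prec B \prec C$ by an exchange argument, leveraging the characterization of the optimal stopping rule from Theorem~\ref{thm:unique_fair_cap} together with the single-box-equivalence of Lemma~\ref{lem:equ_box}. Fix an arbitrary competitive (incumbent) reward $X$ and consider any candidate strategy over $\{A,B,C\}$ respecting the precedence constraint (so $C$ can only appear after both $A$ and $B$ have been opened). Such a strategy is fully adaptive, but the precedence constraint forces the first probe to be either $A$ or $B$; I want to show that starting with $A$ (weakly) dominates starting with $B$, and that once $A$ has been opened, probing $B$ before $C$ (weakly) dominates the alternative. Concatenating these two local exchanges yields the claim.

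First I would handle the $\{A,B\}$-type comparison using the classical Weitzman exchange argument in the generalized-reservation-value language. Because the reward/cost of $B$ is independent of those of $A$, and because after conditioning on the realization of $A$ the box $C$ together with whatever remains forms (via Lemma~\ref{lem:equ_box}) a single equivalent box whose GRV is dominated by $\sigma(B)$ by hypothesis, the ``continuation value'' available after opening either $A$ or $B$ first can be expressed through the $\Phi$-functional of Definition~\ref{def:exp_reward}. The key inequality is the standard one: if $\sigma(A)\ge\sigma(B)$, then for any incumbent $X$ the payoff of ``open $A$, then act optimally on $\{B,C\}$'' is at least that of ``open $B$, then act optimally on $\{A,C\}$''. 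This is proved by splitting on whether $X$ already exceeds $\sigma(A)$ (both sides stop, equal), lies between $\sigma(B)$ and $\sigma(A)$ (the $A$-first strategy strictly wants to continue while $B$-first would stop, and continuing is better), or lies below $\sigma(B)$ (a direct term-by-term comparison of the two expansions, which collapses by independence of $A$ and $B$). I would carry out the case $X<\sigma(B)$ by writing both expected payoffs explicitly, pairing the realization of $A$ in one with the realization of $B$ in the other, and using monotonicity of $\phi$ in its first argument (Lemma~\ref{lem:properties_GRV}) plus the fact that $\sigma(A)\ge\sigma(B)$ means the ``keep $A$'' option is never worse than the ``keep $B$'' option.

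Next I would treat the $\{B,C\}$ ordering \emph{after $A$ is already opened}. Here the subtlety is that $C$'s reward and cost depend on both $A$ and $B$ Markovianly, so after opening $A$ the relevant object is the conditional box $[C\mid R_A]$; but the hypothesis $\sigma(A)\ge\sigma(B)\ge[\sigma(C)\mid R_A=x,R_B=y]$ was stated for \emph{all} realizations, so in particular $\sigma(B)\ge[\sigma(C)\mid R_A=x]$ pointwise after marginalizing out $B$ (or, more carefully, for every realization $y$ of $B$ the bound holds, and I can then take the appropriate conditional expectation). With that pointwise GRV domination of the $C$-box by $B$, the same Weitzman-style three-case exchange as above — now applied to the pair $(B, [C\mid\text{history}])$ with the Markov property giving the needed conditional independence of $B$'s realization from $C$'s residual randomness given $A$ — shows that $B$ should be probed before $C$. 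Combining: any optimal strategy can be transformed, without loss of payoff, first into one that opens $A$ first, then into one that opens $B$ second, hence into $A\prec B\prec C$.

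The main obstacle I anticipate is making the second exchange rigorous in the presence of the double Markovian dependence of $C$ on both $A$ and $B$: unlike the clean independent case, ``swapping $B$ and $C$'' changes the information state, since opening $C$ before $B$ reveals a reward whose conditional law depends on the still-unknown $B$. I would control this by conditioning throughout on the realized value of $A$ (fixed once $A$ is opened), invoking the Markov property of Problem~\ref{prob:mar_pandora} to get that, given $R_A$, the pair $(R_B, \text{cost}_B)$ and the residual randomness of $C$ interact only through the stated GRV ordering, and then reducing to the already-established two-box exchange via Lemma~\ref{lem:equ_box} applied to whichever of $B$ or the $C$-subtree is probed second. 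A secondary technical point is the tie-handling when $X$ exactly equals one of the reservation values, which I would dispatch using the convention fixed after Theorem~\ref{thm:unique_fair_cap} (optimal stopping breaks ties by stopping), so that all the ``$\ge$'' inequalities above are consistent and no strict-vs-weak issues arise.
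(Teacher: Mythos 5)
Your core comparison --- exchanging the order $A\prec B\prec C$ against $B\prec A\prec C$ --- is indeed the paper's approach, but the proposal has two problems, one structural and one substantive.

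The structural problem is that your ``second exchange'' is vacuous. The precedence constraint says \emph{both} $A$ and $B$ must be probed before $C$, so once $A$ is opened the only remaining decision is whether to continue to $B$ or stop; there is no option to probe $C$ before $B$. Consequently the only genuine decision point is whether the \emph{first} probe is $A$ or $B$, and the entire content of the lemma is the comparison of $A\prec B\prec C$ against $B\prec A\prec C$. Your identification of ``the main obstacle'' as making the second exchange rigorous in the face of $C$'s Markovian dependence on both $A$ and $B$ is therefore worry about a non-existent step, and suggests the problem structure hasn't been fully internalized.

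The substantive gap is in the key case $X<\sigma(B)$. You propose to close it by ``writing both expected payoffs explicitly, pairing the realization of $A$ in one with the realization of $B$ in the other, and using monotonicity of $\phi$.'' That does not suffice: the two strategies can stop at different information states, so pointwise coupling of realizations does not give a direct pointwise comparison, and monotonicity of $\Phi$ (which, incidentally, is Lemma~\ref{lem:app:phi_prop}, not Lemma~\ref{lem:properties_GRV}) is not the decisive ingredient. What actually drives the paper's argument is the reservation-value identity $\E[(R_B - \sigma_B)_+ - c_B] = 0$ (and similarly for $A$), which is used to substitute for the expected costs, combined with an explicit case analysis over the nine cells obtained by partitioning the joint realization of $(R_A,R_B)$ according to whether each lies above $\sigma_A$, in $(\sigma_B,\sigma_A)$, or below $\sigma_B$. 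After that substitution, the utility difference $\E[\util(D_2)-\util(D_1)]$ collapses to a sum of manifestly nonnegative terms. Without invoking the reservation-value identity the calculation does not close; so as written your plan gestures at the Weitzman exchange without supplying its essential algebraic step.
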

We show this lemma by first principles, where we compare the utility of the ordering $A \prec B \prec C$ with that of $B \prec A \prec C$ through a case-by-case analysis of 9 outcomes from the joint distribution of $A$, $B$, and $C$, and aggregate them by the law of total expectation. For more details on the complete proof, please refer to Appendix.~\ref{sec:app:multi_line}.

We are now ready to establish the correctness of GRV for multi-line setting, showing that probing boxes based on their latest GRV maximizes the expected payoff. 

\begin{theorem}[Generalized Reservation Value for Multi-Line Markovian Pandora's Box]\label{thm:ML_MPB}
Given a Markovian Pandora’s Box instance whose precedence graph consists of $m$ lines, the optimal strategy for maximizing expected utility is to probe the hyperboxes based on its latest GRV.
\end{theorem}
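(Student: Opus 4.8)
The plan is to reduce the multi-line statement to the single-hyperbox result (Thm.~\ref{thm:GMR_path}) together with an exchange argument captured by Lemma~\ref{lem:three_box_lemma}. First I would use Lemma~\ref{lem:equ_box} to collapse each of the $m$ lines into a single Pandora's box with random cost, so that the instance becomes an $m$-box (unordered) Pandora's problem where box $\ell$ carries the (state-dependent) GRV $\sigma^{(\ell)}$ of the first unopened box on line $\ell$. The key point to carry over is that, by Lemma~\ref{lem:properties_GRV}, the GRV of the available box on a line does not depend on the current competitive reward $x$, and it only increases as the line is shortened by opening boxes; hence after any prefix of probes the "state" of line $\ell$ is summarized by a well-defined scalar $\sigma^{(\ell)}$.

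Next I would prove the index/exchange property: at any point in the execution, among all lines with an available box, it is optimal to probe the line whose available box has the largest current GRV, and to then continue using Weitzman-style stopping, i.e. stop once the running max reward exceeds the largest remaining GRV. The engine here is Lemma~\ref{lem:three_box_lemma}: given two lines (playing the roles of $A$ and $B$, after contracting their currently-available boxes, with $A$ the one of larger GRV) and any later box $C$ that depends on them Markovianly with smaller GRV, the ordering $A\prec B\prec C$ dominates $B\prec A\prec C$ conditioned on any competitive reward $X$. I would apply this repeatedly as an adjacent-transposition bubble-sort argument: take any optimal fully adaptive policy, and show that if at some decision node it probes a line whose available GRV is not the current maximum, we can swap that probe with the probe of the line of maximum available GRV without decreasing expected payoff. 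The subtlety is that after a swap the two lines may themselves feed into common descendants; this is exactly why Lemma~\ref{lem:three_box_lemma} is stated with $C$ depending on \emph{both} $A$ and $B$, so the exchange is valid even with shared downstream structure. Combined with Thm.~\ref{thm:unique_fair_cap} (which settles, within a single line, that one should proceed iff the next GRV exceeds the running max), the bubble-sort terminates at the claimed GRV-greedy policy.

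I would then formalize the induction: induct on the number of probes already made (equivalently on the total number of boxes across all lines). In the base case a single line reduces to Thm.~\ref{thm:GMR_path}. For the inductive step, condition on the first probe. The GRV-greedy policy probes line $\ell^*$ with the largest available GRV; by the exchange argument this first choice is without loss of optimality, and after observing its realized reward the instance becomes a smaller multi-line instance (line $\ell^*$ shortened, possibly exhausted), to which the inductive hypothesis applies — noting that opening the available box on $\ell^*$ only \emph{raises} the remaining available GRVs (Lemma~\ref{lem:properties_GRV}), so GRV-greedy on the residual instance is consistent with GRV-greedy overall. The stopping rule is handled uniformly: stopping is optimal exactly when the running max weakly exceeds every available GRV, which is the contracted-box analogue of Thm.~\ref{thm:unique_fair_cap}.

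The main obstacle I expect is the exchange step in full generality: Lemma~\ref{lem:three_box_lemma} compares only two "contracted" boxes plus one common descendant, whereas in a real execution a line may have many future boxes and several lines may share descendants in intricate ways. I would address this by arguing that the contraction of Lemma~\ref{lem:equ_box} is compositional — once the first box of a line is opened, the \emph{rest} of that line is again a hyperbox and can be re-contracted — so that at every decision node the relevant comparison genuinely involves only the two currently-available contracted boxes and their (contracted) joint descendant, matching the hypotheses of Lemma~\ref{lem:three_box_lemma} exactly. A secondary technical point is the measure-zero indifference case $\sigma^{(\ell^*)} = x$ or ties among available GRVs; these are dispatched by the tie-breaking convention adopted after Thm.~\ref{thm:unique_fair_cap} (an indifferent decision node admits an optimal continuation, so any tie-break is optimal), and by noting that the exchange inequalities in Lemma~\ref{lem:three_box_lemma} hold with the appropriate (weak) direction when GRVs coincide.
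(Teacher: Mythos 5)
Your proposal follows essentially the same route as the paper: collapse each line to a single box with random cost (Lemma~\ref{lem:equ_box}), use Lemma~\ref{lem:three_box_lemma} as the exchange engine, and induct on the number of boxes. You frame the exchange as repeated adjacent transpositions (bubble sort), whereas the paper performs a single exchange at the root of the decision tree between the first-probed hyperbox $n$ and the max-GRV hyperbox $n^*$ and then invokes the inductive hypothesis to fix the continuation; these are equivalent in effect and neither is materially harder. Two points are worth flagging.

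First, your remark that ``several lines may share descendants in intricate ways'' and that Lemma~\ref{lem:three_box_lemma} is stated so as to handle ``shared downstream structure'' misreads the setting: in the multi-line case the precedence graph is a disjoint union of directed paths, so there are \emph{no} boxes downstream of two distinct lines. The reason $C$ must be allowed to depend on both $A$ and $B$ in Lemma~\ref{lem:three_box_lemma} is different: $C$ is the contraction (via Lemma~\ref{lem:equ_box}) of the \emph{remainders of both lines together with all other lines}, and the remainder of line $n^*$ is Markovian on $A$ while the remainder of line $n$ is Markovian on $B$. The dependence is through the Markov chains along each line separately, not through shared vertices. This is a harmless confusion for the multi-line theorem, but it would bite if you tried to carry the same intuition into the forest case.

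Second, the claim that ``opening the available box on $\ell^*$ only raises the remaining available GRVs (Lemma~\ref{lem:properties_GRV})'' is false, and it cites the lemma in the wrong direction. Lemma~\ref{lem:properties_GRV} says $\sigma_i$ is nondecreasing as additional boxes are \emph{appended} to the tail of the hyperbox; it says nothing about what happens to $\sigma_{i+1}(s^i, i+1)$ after $b_i$ is opened and $s^i$ is revealed. Conditioning on a realized reward can push the next GRV up or down depending on the observation (indeed the paper's own argument relies on the GRV of line $n$ \emph{falling} below $\sigma_{n^*}^0$ after some probes). Fortunately this claim is not needed: GRV-greedy on the residual instance is by definition the same rule as GRV-greedy on the original instance evaluated at the current state, so consistency of the greedy policy across the induction is automatic and you can simply delete this sentence.

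Modulo these two corrections the argument is sound and matches the paper's proof in structure, hypotheses used, and level of detail.
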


\begin{proof}
We denote $n^*$ as the hyperbox with the highest GRV when no boxes are opened, and we denote $\sigma^0_{n^*}$ as the value of this GRV. We denote $n \neq n^*$ as another fixed hyperbox. WLOG, we denote the GRV of the hyperbox as the GRV of the first unopened box conditioned on the realized reward. Let $\pi^*$ denote the fully adaptive strategy that probes the hyperbox with the highest current GRV. We want to show that there doesn't exist any strategy that could outperform $\pi^{*}$ by induction on the number of boxes. 

\textbf{Base Case}: When there are only two boxes, it's immediate to see that $\pi^*$ is the optimal strategy.

\textbf{Induction Step}: Suppose $\pi^*$ is optimal for $q-1$ boxes; we show it remains optimal for $q$ boxes. If the optimal strategy starts by probing the hyperbox $n^*$, then by induction, $\pi^*$ is already optimal for $q$ boxes.

Now, suppose the best strategy $\hat{\pi}$ starts with a hyperbox $n$ whose GRV $\sigma_n^{0}$ is lower than $\sigma_{n^*}^{0}$. By the induction hypothesis, $\hat{\pi}$ follows $\pi^*$ after probing its first hyperbox. The order of $\hat{\pi}$ is as follows: probe $n$ until the next box’s GRV falls below $\sigma_{n^*}^{0}$, then switch to $n^*$, and finally proceed optimally through the remaining boxes.

We construct a new strategy $\bar{\pi}$ with strictly higher expected utility, contradicting the optimality of $\hat{\pi}$. The order of $\bar{\pi}$ is: probe $n^*$ first, then switch to $n$ when the next box’s GRV falls below $\sigma_{n^*}^{0}$, and finally continue optimally as in $\hat{\pi}$. 

Notice that the GRV for the segments containing $n^*$ is $\sigma_{n^*}^{0}$, but the GRV for the segments containing $n$ is smaller than $\sigma_n^{0}$, and that both strategies explore the same segments of $n^*$ and $n$ but in different orders, we are able to apply Lemma~\ref{lem:three_box_lemma} to show $\bar{\pi}$ yields a higher expected payoff than $\hat{\pi}$. This lead to a contradiction. Thus, by induction, we show $\pi^*$ remains optimal for any number of boxes.
\end{proof}

Finally, we present GRV can be implemented in polynomial time and space. Given the optimality of the GRV under the optimal policy, each box's GRV depends only on the realized rewards within its hyperbox, enabling the payoff table to compute GRVs of all states \emph{without} iterative updates to the adaptive order. More details in Appendix.~\ref{sec:app:multi_line}.

\begin{theorem}[Generalized Reservation Value, Multi Lines]\label{thm:GMR_path}
GRV for multi-line setting can be implemented in polynomial time and space. 
\end{theorem}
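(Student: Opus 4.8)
The plan is to turn the structural results already in hand into an explicit two-phase algorithm and then bound its resources. The ingredients are: (i)~Thm.~\ref{thm:ML_MPB}, which says the optimal policy $\pi^*$ simply probes, at each step, the hyperbox whose currently available box has the largest GRV; (ii)~Lem.~\ref{lem:properties_GRV}, by which every GRV $\sigma_i(s^{i-1},i)$ is independent of the running maximum $x$ and, more importantly, depends only on the contiguous sub-hyperbox from $b_i$ up to the first later box of smaller GRV; and (iii)~Lem.~\ref{lem:computing_rv} together with Alg.~\ref{alg:app:gw_1path}, which compute the single-line equivalent-reward table $\phi(x,s,i)$, and hence (via the binary-search characterization implicit in Thm.~\ref{thm:unique_fair_cap}) all single-line GRVs, in time $\poly(k,n)$.

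\emph{Precomputation.} For each of the $m$ hyperboxes $\hyperb^{(t)}$ I would run Alg.~\ref{alg:app:gw_1path} in isolation to obtain the full table $\{\sigma^{(t)}_i(s)\}_{i,s}$, which has size $O(n_t k)$ and costs $\poly(k,n_t)$ time; summing over $t$ gives $\poly(k,n)$ time and space for the whole precomputation. The point — and this is where Lem.~\ref{lem:properties_GRV} is essential — is that this per-hyperbox computation is \emph{correct as is}: since $\sigma^{(t)}_i(s)$ depends neither on $x$ nor on anything outside the segment of $\hyperb^{(t)}$ beginning at $b_i$, it does not change when rewards are realized in other hyperboxes or when $\pi^*$ interleaves the hyperboxes in some data-dependent order, so no recomputation of the tables is ever needed during execution.

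\emph{Online execution and complexity.} I would maintain, for each hyperbox $t$, a pointer $i_t$ to its first unopened box, the realized state $s_t$ of its last opened box (a sentinel $0$ if none), the cached scalar $\sigma_t := \sigma^{(t)}_{i_t}(s_t)$ obtained by one table lookup, and the global maximum reward $x$. At each step pick a hyperbox $t^\star$ with $\sigma_{t^\star}$ maximal; if $\sigma_{t^\star} \le x$ stop and return the best box opened so far, otherwise pay $c_{i_{t^\star}}$, open the box, observe $s$, set $x\gets\max\{x,s\}$, advance $i_{t^\star}$ and $s_{t^\star}\gets s$, and refresh $\sigma_{t^\star}$ with one lookup. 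Correctness is exactly Thm.~\ref{thm:ML_MPB} (this is the highest-current-GRV policy) together with Thm.~\ref{thm:unique_fair_cap} for the simultaneous stopping rule, which is legitimate precisely because each $\sigma_t$ is $x$-independent. Each iteration does $O(m)$ (or $O(\log m)$ with a heap) work, there are at most $n$ iterations, and besides the precomputed tables only $O(m)$ scalars are stored, so the whole procedure runs in $\poly(k,n)$ time and space.

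The only step that is more than bookkeeping — and hence the main obstacle — is justifying that the GRVs computed line-by-line \emph{in isolation} are the very numbers that govern the genuinely fully adaptive $\pi^*$, even though the optimal stopping time $\tau^*$ inside Def.~\ref{def:GRV} is a priori the \emph{global} optimum rather than a single-line one. I would resolve this with Lem.~\ref{lem:properties_GRV}(3): when box $b^{(t)}_{i_t}$ is about to be probed, the optimization defining $\sigma^{(t)}_{i_t}(s_t)$ never leaves the segment of $\hyperb^{(t)}$ from $b_{i_t}$ to its first smaller-GRV successor, so the relevant $\tau^*$ agrees there with the single-line optimal stopping time and Alg.~\ref{alg:app:gw_1path} returns the correct value. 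With this localization established, the time and space bounds stated above are immediate.
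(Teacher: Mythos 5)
Your proposal follows essentially the same route as the paper: precompute the single-line $\phi$/GRV tables independently for each hyperbox via Alg.~\ref{alg:app:gw_1path} (hence $\poly(k,n)$ time and space), then run the highest-current-GRV greedy with constant-time lookups and a running maximum, appealing to Thm.~\ref{thm:ML_MPB} for correctness. You make explicit — via Lem.~\ref{lem:properties_GRV}(3) — why the per-line tables need no recomputation as other hyperboxes are explored, a point the paper's proof leaves implicit (it is addressed in the prose around the theorem rather than inside the proof), but the decomposition, the reliance on Lem.~\ref{lem:computing_rv}, and the final resource bounds match the paper's argument.
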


\section{Exact Optimization for Forest}\label{sec:forest}
In this section, we present how to use previous solutions from the multi-line setting to derive a optimal fully adaptive strategy (i.e., probing according to GRV) for tree and forest. We first present the definitions for minimal tree. 

\begin{definition}[Tree, Branch Vertex, Minimal Tree]\label{def:tree}
A directed \emph{tree} $\T$ is a connected DAG where removing any edge disconnects the graph. A directed \emph{forest} is a DAG whose components are trees. A \emph{branch} vertex in a directed tree is a vertex with at least $2$ outgoing edges. A \emph{minimal tree} is a tree whose strict subgraphs contain no branch vertices.
\end{definition}

The key intuition behind our algorithm is that in a minimal tree, probing the root $r$ reduces the induced graph $\bar{G}$ to \textit{multiple lines}, where the optimal solution is \emph{known}. By extending Lem.~\ref{lem:equ_box}, we contract $\bar{G}$ into a single box with random cost $\hat{b}$ and use the following to calculate the equivalent reward for $r$ given the current reward $x$:
  \begin{align*}
     \Phi(x,r) = & \max\{x, \E[\max\{R_r, x\} - c_r], \\
     & \quad \E[\max\{R_r, x, R_{\hat{b}}\} - c_r - c_{\hat{b}}]\}
 \end{align*}
The terms in the max operator correspond to the utility of not opening $r$, opening only $r$, and opening $r$ while optimally exploring the remaining nodes, respectively. The GRV is fully specified once $\phi$ is known.

Next, we present our algorithm for computing the GRV for every box, which iteratively finds and contracts the minimal trees of the underlying graph into a line, and thus eliminating all the branch vertices iteratively. \footnote{In the presence of opened boxes, for any unopened box $i$, we apply the algorithm to the subtree rooted at $i$ to compute its GRV.}

\begin{algorithm}[!ht]\caption{Updating GRV, Forest}\label{alg:pd_forest}
\begin{algorithmic}[1]
    \Require An instance of Markovian Pandora's box with precedence graph $G$ as a forest. 
    \State $\hat{G} \gets G$, $t \gets 1$.
    \While {there are minimal trees in graph $G_0$}
    \For {every minimal tree $\T_i$ with root $r_i$.}
        \For {every possible state $R_{r_{i}}$ of $r_i$. }
        \State Condition on $R_{r_{i}}$, compute $\phi$ and GRV $\sigma$ for every possible state for vertices/boxes in $\T_i \setminus \{r_i\}$.
        \State Contract $\T_i \setminus \{r_i\}$ into one vertex $\hat{v}_i$, compute its reward and cost distribution condition on $R_{r_{i}}$. 
        \EndFor
        \State Update $\hat{G}$ accordingly. 
    \EndFor 
    \EndWhile
    \State Compute the GRV of remaining boxes.
\end{algorithmic} 
\end{algorithm}

More details regarding the optimality and polynomial runtime of probing according to GRV is in App.~\ref{sec:app:omit_forest}. 


\newpage

\section{Solutions for Static Transition}\label{sec:rapid_mixing}

In this section, we propose a more computationally efficient solution under the static transition assumption, where directed edges within the same component share a common probability transition matrix. Rather than backtracking through \emph{all} descendants to compute the GRV \emph{bottom up}, as in previous solutions, our approach propagates information \emph{top down} from the current box. This significantly improves efficiency as the number of boxes grows.

\subsection{Pandora's Box under Static Transition}\label{subsec:ass_box_static transition}
In this section, we introduce the formal model of Pandora's box under static transition. 

We begin by presenting the definition of static transition: 

\begin{assumption}[Pandora's Box with Static Transition]\label{ass:MC_cost}
Under the setting and notations of Prob.~\ref{prob:mar_pandora}. We further assume the boxes from the same component \footnote{A component of a directed graph is a set of vertices where each vertex can reach every other in the set via a directed path.} has the \emph{same} probability of transitioning from $v_j$ from $v_i$ is determined by a box-independent constant $p_{i,j}$.
\end{assumption}

Next, we formally define the transition matrix.
\begin{definition}[Transition Matrix]\label{def:state_transition}
Given that every box could take one of the $k$ possible values $\{v_1,\ldots, v_k\}$, the transition matrix $\trans:= (p_{i,j})_{k \times k} \in  [0,1]^{k \times k}$. 

Furthermore, for every $j \in [k]$, $\sum_{l\in [k] } p_{jl} = 1$. 
\end{definition}

Next, we outline key assumptions on the transition matrix. For readers familiar with Markov chains, this is equivalent to treating each directed line subgraph of the precedence graph as a Markov chain with rewards as variables and assuming it is irreducible and aperiodic. For details, see Appendix~\ref{subsec:app:MC_preliminary}.

\begin{assumption}[Properties of Transition Matrix]\label{ass:MC_property}
A transition matrix $P$ with state space $\{v_1, \ldots, v_k\}$ satisfies: 
\squishlist
    \item \textbf{Irreducibility}, if for all $i,j \in [k]$, there exists an integer $n \geq 1$ such that: $P^n_{i,j} > 0,$ where $P^n_{i,j}$ denotes the $(i,j)$-entry of $P^n$, representing the probability of transitioning from state $v_i$ to state $v_j$ in $n$ steps.

    \item \textbf{Aperiodic}, if for every $i \in [k]$, the greatest common divisor (gcd) of the set $\{n \geq 1 : P^n_{i,j} > 0\}$ is 1, i.e., $ d = \text{gcd}\{n \geq 1 : P^n_{i,j} > 0\}$, and matrix P is aperiodic if $d = 1$ for all $v_i \in \{v_1, \ldots, v_k\}$.
\squishend
\end{assumption}

Under these two assumptions, we are able to show that for every line as a subgraph of the precedence graph, the (unconditional) reward distribution along probing along this line converges to a unique stable distribution (Lem.~\ref{lem:app:convege_line_stable}). Furthermore, this stable distribution is independent of the reward distribution of the first box in line.  

\newpage
\subsection{Static Transition Under Multiple Lines}\label{subsec:1_hyperbox_rapid_mix}

In this section, we study the case where the precedence graph consists of multiple lines. In the case of static cost, the equivalent reward table can be computed more efficiently using the following theorem.

\begin{theorem}[Markov Chain with Static Cost]\label{thm:MC_constant_cost}
Under Ass.~\ref{ass:MC_cost}, where the reward of each line follows an (infinite) Markov Chain (Def.~\ref{def:app:MC}) and each probe incurs a constant cost $c$. Given that for any $i \in [k-1]$, $p_{i,k}>0$, the equivalent reward $ \phi $ only depends on  $ y $ as the current maximum reward and $ x $ as the current reward. The optimal strategy is to continue if $ \phi(y, x) > y $, and stop if $ \phi(y, x) \leq y $.

\end{theorem}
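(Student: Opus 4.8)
## Proof Proposal

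\textbf{Overall approach.} The plan is to show that under the static-transition assumption combined with irreducibility and aperiodicity, the state that governs the optimal stopping decision collapses from the triple $(x, s^{i-1}, i)$ (current max reward, last realized reward, box index) used in the general line case (Def.~\ref{def:exp_reward}, Def.~\ref{def:grv}) down to just the pair $(y, x)$ — current max and current reward — because the box index $i$ becomes irrelevant once the line is an infinite Markov chain with a fixed transition matrix and fixed per-probe cost. The optimality of the threshold rule itself (``continue if $\phi(y,x) > y$, stop otherwise'') is then inherited directly from Thm.~\ref{thm:unique_fair_cap}; the real content is the \emph{index-independence} of $\phi$.

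\textbf{Key steps, in order.} First I would write down the Bellman recursion for $\Phi^{\tau^*}$ specialized to this setting: since every box has cost $c$ and the transition from $s^{i-1}$ to $s^i$ is governed by the same matrix $\trans$ regardless of $i$, the one-step operator
\[
(T\psi)(y, x) \;=\; \max\Bigl\{\, y,\; -c + \E_{\,s' \sim \trans(x,\cdot)}\bigl[\psi(\max\{y, s'\}, s')\bigr]\Bigr\}
\]
depends on $i$ \emph{only} through the horizon, i.e. through how many boxes remain. Second, I would argue that because the line is infinite, the horizon is always ``infinite'' — there is no last box — so the value function is the fixed point of $T$ rather than a finite backward induction, and hence carries no index dependence: $\phi(y,x) = (T\phi)(y,x)$. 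To make this rigorous I would (a) establish that $T$ is a contraction (or monotone with a suitable potential) on an appropriate space of bounded functions — here the condition $p_{i,k} > 0$ for all $i \in [k-1]$ guarantees that from any state there is positive probability of hitting the top value $v_k$, which bounds continuation and gives the needed compactness/uniqueness — and (b) invoke Lem.~\ref{lem:app:convege_line_stable} to control the unconditional reward drift so that the expected net gain of continuing forever is finite and the fixed point is well-defined. Third, having established $\phi = \phi(y,x)$ as the unique fixed point, I would read off the stopping rule from Thm.~\ref{thm:unique_fair_cap}: the GRV $\sigma$ at the current box is the threshold where $\phi(y,x) - y$ changes sign, monotonicity in $y$ (Lem.~\ref{lem:properties_GRV}, first bullet, plus $\phi(y,x) = y$ for $y$ large) gives that ``$\phi(y,x) > y$'' is exactly the continuation region, and ``$\phi(y,x) \le y$'' the stopping region, with ties resolved by the stopping convention already adopted after Thm.~\ref{thm:unique_fair_cap}.

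\textbf{Main obstacle.} The delicate point is the passage from the finite-line value function to the infinite-line fixed point and the proof that $T$ has a unique bounded fixed point. In the general finite line, $\phi$ genuinely depends on $i$ because the number of remaining boxes shrinks; one must show that as the line length $\to \infty$ this dependence washes out and the limit is attained as an honest fixed point, not merely a limsup. The role of Assumption~\ref{ass:MC_property} (irreducibility + aperiodicity) together with $p_{i,k} > 0$ is precisely to force geometric-type convergence — every state reaches the maximal reward quickly enough that the ``option value'' of continuing is uniformly bounded and the Bellman operator is a contraction in the sup norm on functions of $(y,x)$. I expect the bulk of the work (deferred to the appendix in the paper's style) to be verifying this contraction/uniqueness and checking that monotonicity of $\phi$ in both arguments is preserved by $T$, so that the sign of $\phi(y,x) - y$ cleanly partitions the state space into the two claimed regions.
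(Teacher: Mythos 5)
Your proposal matches the paper's approach: the paper writes the Bellman equation in the state $(y,x)$, shows the resulting operator is a contraction in Chebyshev (sup) distance with modulus $\max_i(1-p_{i,k}) < 1$, and invokes the Banach fixed-point theorem (Lem.~\ref{lem:app:Banach_fix_point}) to obtain a unique $\phi$ independent of the box index, after which the stopping rule follows from the threshold structure as you say. The only small misstep is your appeal to Lem.~\ref{lem:app:convege_line_stable} and mixing-time control; the contraction holds for a more elementary reason, namely that the $p_{j,k}$-weighted term in the Bellman update multiplies the fixed constant $v_k$ and hence cancels when taking differences, leaving a contraction factor of $\sum_{l<k} p_{j,l} = 1 - p_{j,k} < 1$ with no need to track reward drift or stationary-distribution convergence.
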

In this case, the equivalent reward $ \phi $ is \textit{independent} of the current box index. Thus, the equation derived from \textit{Bellman’s principle of optimality} corresponds to a \textit{fixed-point iteration} where the variables are $ \phi $ under different states. We show that this iteration leads to a \textit{unique solution} and converges rapidly. For details, see Appendix~\ref{subsec:app:rapid_mix_MC_static_cost}.

\begin{lemma}[Rapid Convergence of Max Reward]\label{lem:dist_max_reward}
Given any directed line subgraph $\hyperb := \{b_1,\ldots, b_n\}$ of the precedence graph with transition matrix $P_M$, assume that $P_M$ is irreducible and aperiodic, there exist a stable distribution $\pi$, such that given $\delta \in (0,1)$, with probability $1- \delta$, \\ for $t \geq \max \{ 2 \frac{C_{\trans}}{ (\sum_{i = j}^{k} \pi_i) (1 - \alpha_{\trans})},  \frac{\log (1 / \delta)}{\log (1 - (\sum_{i = j}^{k} \pi_i) / 2)} \}$, 
\begin{align*}
    \Pr [R_{\max}(t) \geq v_j] \geq 1- \delta,
\end{align*} 
, for any $j \in [k]$, where we define the max reward of the first $t$ boxes in $\hyperb$ as  $R_{\max} (t) := \max \{R_1,\ldots, R_t\}$.
\end{lemma}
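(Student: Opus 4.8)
The plan is to reduce the statement to two separate probabilistic facts: (i) after sufficiently many steps the distribution of the running box-reward $R_t$ is close (in some suitable sense) to the stationary distribution $\pi$, and (ii) once the marginal is close to $\pi$, each fresh box independently has probability roughly $\sum_{i=j}^k \pi_i$ of meeting or exceeding the threshold $v_j$, so the probability that the \emph{max} fails to reach $v_j$ decays geometrically in the number of such "good" steps. Formally, write $q_j := \sum_{i=j}^k \pi_i$. I would split the time horizon $t$ into a burn-in phase of length $t_0 := 2 C_{\trans}/\big(q_j(1-\alpha_{\trans})\big)$ and a collection phase of the remaining $t - t_0$ steps; the two terms in the stated lower bound on $t$ are exactly what is needed to make each phase do its job.

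\textbf{Step 1 (mixing of the marginal).} Using irreducibility and aperiodicity of $P_M$, invoke the standard spectral/coupling bound on convergence to stationarity: there are constants $C_{\trans}$ and a contraction rate $\alpha_{\trans} \in (0,1)$ (the second-largest eigenvalue modulus, or the Dobrushin coefficient) such that $\|\vp_t - \pi\|_{\text{TV}} \le C_{\trans}\,\alpha_{\trans}^{\,t}$, regardless of the initial distribution $\vp_1$ — this is the content referenced just before the lemma (Lem.~\ref{lem:app:convege_line_stable}). Choosing $t_0$ so that $C_{\trans}\,\alpha_{\trans}^{t_0} \le q_j/2$ gives, for every step $s \ge t_0$, $\Pr[R_s \ge v_j] \ge q_j - q_j/2 = q_j/2$. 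Solving $C_{\trans}\alpha_{\trans}^{t_0}\le q_j/2$ for $t_0$ yields $t_0 = \frac{\log(2C_{\trans}/q_j)}{\log(1/\alpha_{\trans})}$, which one bounds by $\frac{2C_{\trans}}{q_j(1-\alpha_{\trans})}$ using $\log(1/\alpha_{\trans}) \ge 1-\alpha_{\trans}$ and a crude bound on the numerator; this matches the first term in the max.

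\textbf{Step 2 (amplification over the collection phase).} For steps $s = t_0+1, \dots, t$, I would lower-bound $\Pr[R_{\max}(t) < v_j]$ by the probability that \emph{all} these steps fail. The subtlety is that the $R_s$ are not independent, so I cannot simply multiply $\big(1-q_j/2\big)$ across steps. To handle this, use the Markov property together with the fact that from \emph{any} state the one-step probability of landing in $\{v_j,\dots,v_k\}$ in the next, say, a bounded number of steps is bounded below — or, more cleanly, condition sequentially: $\Pr[R_{s} < v_j \mid R_{t_0},\dots,R_{s-1}]$ is controlled by re-running the mixing bound from the (known) state $R_{s-1}$, which again gives $\le 1 - q_j/2$ provided we again wait past the burn-in. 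The cleanest route is to only use a \emph{subsequence} of steps spaced $t_0$ apart, on each of which the conditional success probability is $\ge q_j/2$ by Step 1 applied with arbitrary starting state; there are $\Theta((t-t_0)/t_0)$ such steps, and the failure probability is at most $(1-q_j/2)^{\Theta((t-t_0)/t_0)}$. Requiring this to be $\le \delta$ gives the second term $\frac{\log(1/\delta)}{\log(1/(1-q_j/2))} = \frac{\log(1/\delta)}{-\log(1-q_j/2)}$, as claimed (up to the displayed sign/absolute-value convention).

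\textbf{Main obstacle.} The delicate point is Step 2: handling the Markovian dependence between consecutive boxes without losing too much in the exponent. A naive union bound or a naive "treat as independent" argument is not justified, and using only an $t_0$-spaced subsequence costs a factor $t_0$ in the required horizon — which, fortunately, is exactly the form of the bound being claimed (a max of the two quantities, not a sum of independent-looking terms). I would make this rigorous via a stopping-time / strong Markov argument: let $\hat\eta$ be the first step at which $R_s \ge v_j$; on the event $\{\hat\eta > t\}$, decompose into disjoint blocks of length $t_0$ and apply the burn-in bound conditionally at the start of each block, yielding the geometric decay. Everything else — the spectral bound in Step 1, the inequality $\log(1/\alpha) \ge 1-\alpha$, and the final intersection of the two high-probability events via a union bound with total slack $\delta$ — is routine.
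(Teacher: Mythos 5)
Your route is genuinely different from the paper's, and the difference is substantive. The paper's own proof begins with the identity
\[
\Pr\bigl[R_{\max}(t) = v_k\bigr] \;=\; 1 - \prod_{t_0 \in [t]} \Pr\bigl[R_{t_0} < v_k\bigr],
\]
asserted to hold ``by definition,'' then bounds each marginal via the convergence-rate estimate of Lem.~\ref{lem:app:convege_line_stable} and applies AM--GM to the product $\prod_{t_0}(1-\pi_k+\epsilon_{t_0})$. That opening factorization is precisely the step you flag as the delicate point: it requires the $R_{t_0}$'s to be mutually independent, which contradicts the Markovian correlation structure that is the whole point of this paper. For chains with positive correlation (low states tend to lead to low states), $\Pr[\text{all } R_{t_0} < v_j]$ is typically \emph{larger} than the product of the marginals, so the paper's ``$=$'' can actually fail in the direction that hurts. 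You were right not to trust the independent-product shortcut, and your strong-Markov block decomposition is the standard rigorous way to make the amplification step survive the dependence; in that sense your Step 2 is sounder than what the paper actually writes down.

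There is, however, a quantitative slip in your closing remark. The $t_0$-spaced subsequence argument yields roughly
\[
t \;\gtrsim\; t_0 \cdot \frac{\log(1/\delta)}{-\log(1 - q_j/2)}\,,
\]
a \emph{product} of the burn-in length and the amplification length, not the stated $\max\{t_0,\ \log(1/\delta)/(-\log(1-q_j/2))\}$. A factor-$t_0$ blowup is not ``exactly the form of the bound being claimed''; a max is much smaller than a product whenever both terms are large. So your argument, while rigorous, proves a weaker horizon bound than the lemma asserts. To recover a genuine max one would need a per-step (not per-block) hitting bound, e.g.\ via the one-step minorization $\min_i \Pr[R_{s+1}\geq v_j \mid R_s = v_i] \geq \min_i \sum_{\ell \geq j} p_{i,\ell}$, which under the assumption $p_{i,k}>0$ used elsewhere gives geometric decay without the $t_0$ gap but with a different constant in the base. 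As it stands, the paper's claimed $\max$ rests on the unjustified independence factorization, and the bound that is actually established by a correct argument is yours, up to the product-vs-max discrepancy.
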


\begin{proof}

From Lem.~\ref{lem:app:convege_line_stable}, given any subset $S \subseteq V = \{v_1,\ldots, v_k\}$, we have for $\pi_S := \sum_{v \in S} \pi_v$:
\begin{align*}
  \pi_S - C_{\trans} \alpha^t_{\trans} \leq \Pr [R_t \in S] \leq \pi_S + C_{\trans} \alpha^t_{\trans} 
\end{align*}

Next, we lower bound the probability that $R_{\max} (t) = v_k$, where we define $\pi_{\min} = \min_{j \in [k]} \pi_j$, $\eps_t := \min \{ \pi_{\min},  C_{\trans} \alpha^t_{\trans} \}$, and $\bar{\eps}_t := \frac{1}{t} \sum_{t_0 \in [t]} \eps_{t_0}$.
\begin{align*}
    &\Pr (R_{\max} (t) = v_k) = 1 - \Pi_{t_0 \in [t]} \Pr[R_t < v_k] \geq \\
 & 1 - \Pi_{t_0 \in [t]} (1 - \pi_k + \eps_{t_0}) \geq 1 - (1 - \pi_k)^t (1 +\frac{ \bar{\eps}_t} { 1 - \pi_k})^t
\end{align*}
where the first equality follows by definition, the second inequality follows by the convergence to stable distribution, the third inequality follows from AM-GM (Lem.~\ref{lem:app:AM_GM}).

Next we upper bound $\bar{\eps}_t$:
\begin{align*}
    & \bar{\eps}_t =  \frac{1}{t} \sum_{t_0 \in [t]} \eps_{t_0} 
    \leq  \frac{1}{t} \sum_{t_0 =1}^{\infty} C_{\trans} \alpha_{\trans}^{t_0} \
    =  \frac{1}{t} C_{\trans} \frac{1}{1 - \alpha_{\trans}}
\end{align*}
Thus, $\Pr[R_{\max} (t) = v_k] \geq 1 - [1 - \pi_k + \frac{1}{t} \frac{C_{\trans}}{(1 - \alpha_{\trans})}]^t$.

Notice that for $t \geq 2 \frac{C_{\trans}}{ \pi_k (1 - \alpha_{\trans})}$, we have:
\begin{align*}
    (1 - \pi_k + \frac{1}{t} \frac{C_{\trans}}{1 - \alpha_{\trans}}) \leq 1 - \pi_k / 2
\end{align*}
Given $\delta>0$, for $t \geq \max \{ 2 \frac{C_{\trans}}{ \pi_k (1 - \alpha_{\trans})}, \frac{ \log (1 / \delta)}{\log (1 - \pi_k / 2)} \}$,  
\begin{align*}
    & \Pr[R_{\max} (t) = v_k] \geq 1 - (1 - \pi_k + \frac{1}{t} \frac{C_{\trans}}{1 - \alpha_{\trans}})^t \\
    & \geq 1 - (1 - \pi_k/2)^t \geq 1 - \delta
\end{align*}

Following the same logic, we could lower bound the probability that $R_{\max} \geq v_j$.
\end{proof}

The lemma implies that when the precedence graph is a directed line, there exists a near optimal policy that explores at most $\tilde{\Theta}(1)$ boxes. This result directly leads to the following theorem. For more details, please refer to Appendix~\ref{subsec:app:multi_line_general}.

\begin{theorem}[Near Optimal Probing, Single Line]\label{thm:greedy_optimal_single_line}
Given a Markovian Pandora’s box with static transition, where the precedence graph is a single line $ \hyperb $ and the transition matrix $ \trans $ is irreducible, aperiodic, and associated with a stable distribution $ \pi $. There exists constant $C_{\trans}>0$ and $\alpha_{\trans} \in (0,1)$, that, given any $\delta \in (0,1)$, let $t_{\delta} = \max \{ 2 \frac{C_{\trans}}{ \pi_k (1 - \alpha_{\trans})},  \frac{\log (1 / \delta)}{\log (1 - \pi_k / 2)} \}$. Then, the expected utility from optimal probing on the subgraph $ \hyperb|t_{\delta} $, containing only the first $ t_{\delta} $ boxes of $ \hyperb $, is near-optimal, i.e.,
\begin{align*}
   \OPT (\hyperb|{t_\delta}) \geq   \OPT(\hyperb) - 2\delta v_k
\end{align*}
where $\OPT(\hyperb)$ is the expected utility from optimal probing graph $\hyperb$.
\end{theorem}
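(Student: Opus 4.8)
# Proof Proposal for Theorem~\ref{thm:greedy_optimal_single_line}

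\textbf{Overall strategy.} The plan is to show that truncating the line $\hyperb$ to its first $t_\delta$ boxes loses at most $2\delta v_k$ in expected utility. I would do this by coupling an optimal policy on the full line $\hyperb$ with a policy on the truncated line $\hyperb|t_\delta$, and bounding the utility difference by the event that the optimal full-line policy ever probes a box with index $> t_\delta$. The key leverage is Lemma~\ref{lem:dist_max_reward}: after $t_\delta$ probes, the running maximum reward equals $v_k$ (the largest possible value) with probability at least $1-\delta$. Once the running maximum is $v_k$, no optimal policy has any incentive to continue, since every future reward is at most $v_k$ and every probe costs a positive amount; hence on the high-probability event the full-line optimum never probes beyond box $t_\delta$, and on that event the two policies coincide.

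\textbf{Key steps, in order.} First I would fix the optimal fully-adaptive policy $\pi^*$ on $\hyperb$ (which by Theorem~\ref{thm:GMR_path} / the GRV characterization of Section~\ref{sec:1_hyperbox} is a stopping time along the fixed line order). Second, I would define the ``bad'' event $\mathcal{B}$ that $\pi^*$ probes at least one box with index exceeding $t_\delta$; equivalently, that $R_{\max}(t_\delta) < \sigma_{t_\delta+1}(\cdot)$, i.e., the running maximum after $t_\delta$ probes is still below the GRV of box $t_\delta+1$. Since $\sigma_{t_\delta+1} \le v_k$ always, the event $\mathcal{B}$ is contained in $\{R_{\max}(t_\delta) < v_k\}$, which by Lemma~\ref{lem:dist_max_reward} has probability at most $\delta$ for the stated choice of $t_\delta$ (applying the lemma with $j=k$). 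Third, I would construct the truncated policy $\hat\pi$ that runs $\pi^*$ verbatim but is forced to stop at box $t_\delta$; then $\mathcal{O}(\hat\pi) = \mathcal{O}(\pi^*)$ on $\mathcal{B}^c$, so the utilities agree there. Fourth, on $\mathcal{B}$ I would bound the utility gap crudely: the full-line utility is at most $v_k$ (maximum reward, minus nonnegative costs), and the truncated utility is at least $-\sum_{i\le t_\delta} c_i$... — but to avoid a cost-dependent term, it is cleaner to compare $\OPT(\hyperb|t_\delta)$ to $\pi^*$ restricted-and-stopped, noting that on $\mathcal{B}$ the difference in the $\max$-reward term is at most $v_k$ and the truncated policy pays a subset of the costs $\pi^*$ pays (so the cost difference only helps). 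This yields $\OPT(\hyperb|t_\delta) \ge \OPT(\hyperb) - \Pr[\mathcal{B}]\cdot v_k \ge \OPT(\hyperb) - \delta v_k$. I would then absorb the extra factor of $2$ (to reach $2\delta v_k$) as slack covering the possibility that $\pi^*$ stops strictly before reaching $v_k$ but the truncation still cuts it off, or simply to accommodate the union over $j\in[k]$ phrasing in Lemma~\ref{lem:dist_max_reward}; in the worst case one loses $v_k - v_{k-1}$ per probe beyond truncation but the clean bound $2\delta v_k$ subsumes this.

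\textbf{Main obstacle.} The delicate point is step four: making the utility comparison without picking up an uncontrolled additive cost term. The naive bound ``full-line utility minus truncated utility $\le$ something'' risks a $\sum c_i$ term because the truncated policy might have been better off stopping even earlier. The resolution is to not compare to $\OPT(\hyperb|t_\delta)$ directly but to the specific feasible truncated policy $\hat\pi$ (run $\pi^*$, force-stop at $t_\delta$): since $\hat\pi$ opens a subset of the boxes $\pi^*$ opens and collects a subset-or-equal max reward, the gap is exactly the reward lost on the tail, which is at most $v_k$ and only occurs on $\mathcal{B}$. Everything else — verifying $\sigma_{t_\delta+1}\le v_k$ (immediate from Def.~\ref{def:GRV}, since GRV cannot exceed the largest achievable reward), and plugging Lemma~\ref{lem:dist_max_reward} — is routine. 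I would present the coupling argument carefully and defer the constants $C_{\trans}, \alpha_{\trans}$ to the statement of Lemma~\ref{lem:app:convege_line_stable} in the appendix.
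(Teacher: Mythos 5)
Your proposal is correct, and it takes essentially the same route as the paper's proof (Thm.~\ref{thm:app:greedy_optimal} in the appendix): both construct a truncated stopping time $\bar\tau = \min(\tau^*, t_\delta)$, both invoke Lemma~\ref{lem:dist_max_reward} with $j=k$ to get $\Pr[\tau^* > t_\delta] \le \delta$, and both split the expectation over the good and bad events. The one genuine difference is in how the bad-event contribution is controlled, and here your ``cleaner'' coupling is actually \emph{sharper} than the paper's argument. The paper separately bounds $\E[\util(\hyperb)\mid\tau^* > t_\delta]$ from above by $v_k$ and $\E[\util_{\bar\tau}(\hyperb|t_\delta)\mid\tau^* > t_\delta]$ from below by $-v_k$ (using the observation that a rational optimal policy never pays cumulative cost exceeding $v_k$), so the two bounds combine additively to lose $2v_k\cdot\Pr[\mathcal{B}]$. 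You instead bound the \emph{pathwise difference} $\util(\pi^*) - \util(\hat\pi)$ on the bad event: since $\mathcal{O}(\hat\pi)\subseteq\mathcal{O}(\pi^*)$, the cost terms cancel in your favor and only the max-reward gap (at most $v_k$) survives, giving the loss $v_k\cdot\Pr[\mathcal{B}]\le\delta v_k$. Your argument therefore proves the stated inequality with room to spare.

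You should note that your final hand-wave is unnecessary and slightly confused. There is no factor of $2$ to ``absorb'': your coupling already yields $\OPT(\hyperb|t_\delta)\ge\OPT(\hyperb)-\delta v_k$, which dominates the claimed $\OPT(\hyperb)-2\delta v_k$. The ``possibility that $\pi^*$ stops strictly before reaching $v_k$'' is not a separate case — it falls on the good event if $\tau^*\le t_\delta$ and on the bad event otherwise, and your pathwise bound handles both. Lemma~\ref{lem:dist_max_reward} is also a single statement ``for any $j\in[k]$''; it does not require a union bound over $j$, and you only ever apply it with $j=k$, so no additional slack is consumed there either. If you tighten the write-up to delete that final sentence, your proof is both correct and a modest improvement on the paper's.
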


This theorem implies that it's near optimal to truncate any hyperbox by its first $t_\delta$ steps, which we generalize to the multi-line setting. More details in App.~\ref{subsec:app:multi_line_general}.

\begin{theorem}[Near Optimal Probing, Multi Lines]\label{thm:greedy_optimal_multi_line}
Given a Markovian Pandora’s box with static transition, where the precedence graph $G$ consists of $q$ directed lines $ \hyperb_1, \ldots, \hyperb_q$. Suppose every transition matrix $P_j$ of each $\hyperb_j, j \in [q]$ are irreducible, aperiodic, and associated with a stable distribution $ \pi(j) $. There exists constant $C>0$ and $\alpha \in (0,1)$, that, given any $\delta \in (0,1)$, let $t_{\delta} = \max \{ 2 \frac{C}{ \pi^* (1 - \alpha)},  \frac{\log (1 / \delta)}{\log (1 - \pi^* / 2)} \}$, where $\pi^* = \max_{j \in [q]} \pi (j)_k$. Then, the expected utility from optimal probing on the subgraph $ \cup_{j \in [q]} (\hyperb_j |t_{\delta}) $, containing only the first $ t_{\delta} $ boxes of each hyperbox, is near-optimal, i.e.,
\begin{align*}
   \OPT [\cup_{j \in [q]} (\hyperb_j |t_{\delta})] \geq   \OPT(G) - 2q\delta v_k
\end{align*}
where $\OPT(G)$ is the expected utility from optimal probing graph $G$. 
\end{theorem}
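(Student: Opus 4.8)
The plan is to reduce the multi-line statement to $q$ applications of the single-line result (Thm.~\ref{thm:greedy_optimal_single_line}), exploiting the fact that the optimal strategy on a multi-line precedence graph probes hyperboxes according to their latest GRV (Thm.~\ref{thm:ML_MPB}), so the only loss incurred by truncation is the loss on each individual line, summed. First I would fix a uniform truncation length $t_\delta$ that works simultaneously for all $q$ lines. Since each transition matrix $P_j$ is irreducible and aperiodic, Lem.~\ref{lem:dist_max_reward} gives, for each line $j$, a threshold past which $R_{\max}(t) = v_k$ with probability $1-\delta$; taking $\pi^* = \max_{j\in[q]} \pi(j)_k$ (the ``slowest'' line) and the corresponding constants $C,\alpha$, the choice $t_\delta = \max\{2C/(\pi^*(1-\alpha)),\ \log(1/\delta)/\log(1-\pi^*/2)\}$ ensures that probing the first $t_\delta$ boxes of \emph{any} line already attains the maximal value $v_k$ with probability at least $1-\delta$.

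Next I would bound, line by line, the utility lost by restricting attention to $\hyperb_j|t_\delta$ instead of the full $\hyperb_j$. By exactly the argument behind Thm.~\ref{thm:greedy_optimal_single_line}: the only way the optimal policy on $\hyperb_j$ could beat the policy restricted to the first $t_\delta$ boxes is by picking up reward beyond box $t_\delta$; but conditioned on the event (of probability $\geq 1-\delta$) that $R_{\max}(t_\delta) = v_k$, no further probing on that line can improve the max reward at all, and on the complementary event (probability $\leq\delta$) the gain is at most $v_k$ — and moreover any extra probing on the truncated tail only adds cost, so $\OPT(\hyperb_j|t_\delta) \geq \OPT(\hyperb_j) - 2\delta v_k$ with the factor $2$ absorbing both the truncation-of-reward term and the book-keeping of the convention used in the GRV definition. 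I would then invoke Thm.~\ref{thm:ML_MPB} to assert that an optimal strategy for $G$ never needs to look past box $t_\delta$ on any line more often than an optimal single-line strategy would: formally, couple the run of the optimal multi-line policy with $q$ independent single-line instances and observe that the event ``the multi-line optimum extracts reward from position $>t_\delta$ on line $j$'' is contained in the event that single-line optimum on $\hyperb_j$ does so, so the total loss telescopes to $\sum_{j=1}^q 2\delta v_k = 2q\delta v_k$.

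The main obstacle I anticipate is making the ``line-by-line decoupling'' fully rigorous: the adaptive multi-line policy interleaves probes across lines, so its behavior on line $j$ is not literally a single-line policy — it may abandon line $j$ early to probe another line and return later, and its stopping decision on line $j$ depends on rewards seen on other lines. The clean way around this is to use the GRV characterization: since the GRV of a box on line $j$ depends only on realized rewards \emph{within} line $j$ (Lem.~\ref{lem:properties_GRV}, and its multi-line extension in Thm.~\ref{thm:ML_MPB}), the decision of whether the optimal policy ever probes past box $t_\delta$ on line $j$ is governed entirely by line-$j$ GRVs falling below the running maximum — a quantity that, conditioned on $R_{\max}(t_\delta)=v_k$ on that line, is forced. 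So the argument should go through by conditioning on the per-line events $\{R_{\max}^{(j)}(t_\delta) = v_k\}$, applying a union bound to get total failure probability $\leq q\delta$, and on the good event showing the truncated subgraph $\cup_j(\hyperb_j|t_\delta)$ contains every box any optimal $G$-policy would ever want to open; the residual $2q\delta v_k$ then comes from the bad event plus the same slack as in the single-line case. A secondary nuisance is confirming that restricting the precedence graph to $\cup_j(\hyperb_j|t_\delta)$ does not break feasibility or change the GRVs of the retained boxes — this follows from Lem.~\ref{lem:properties_GRV}, item 2 (GRVs are monotone under appending boxes), so dropping the tails can only decrease GRVs, which is harmless for the near-optimality bound.
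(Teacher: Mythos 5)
Your final argument matches the paper's mechanism, but your intermediate framing introduces a detour that is never actually needed, and the ``telescoping'' step as first stated would not go through as written.

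The paper does not sum $q$ separate single-line bounds, and it does not couple with $q$ independent single-line instances. Instead it works directly on the multi-line instance: let $\tau^*$ be the optimal stopping time on $G$, define $E1$ as the event that $\tau^*$ never probes past index $t_\delta$ on any line and $E2$ as its complement, and construct a single truncated strategy $\bar\tau$ that agrees with $\tau^*$ on $E1$ and halts at the boundary on $E2$. The union bound over the per-line events $\{R^{(j)}_{\max}(t_\delta)<v_k\}$ gives $\Pr[E2]\le q\delta$, because crossing $t_\delta$ on line $j$ requires the running maximum to still be below $v_k$. Then $\OPT(G)=\Pr[E1]\E[\util(G)\mid E1]+\Pr[E2]\E[\util(G)\mid E2]$, and $\OPT(\cup_j\hyperb_j|t_\delta)\ge\E[\util_{\bar\tau}]\ge\Pr[E1]\E[\util(G)\mid E1]+\Pr[E2]\E[\util_{\bar\tau}\mid E2]$. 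The first $q\delta v_k$ comes from dropping $\Pr[E2]\E[\util(G)\mid E2]\le q\delta v_k$; the second $q\delta v_k$ comes from the fact that on $E2$, $\util_{\bar\tau}\ge -v_k$ (the cost $\bar\tau$ has already paid cannot exceed $v_k$, else $\tau^*$ would not have been optimal on the larger graph). That is exactly the ``factor 2'' you gesture at, but the two halves have different sources — one is lost reward, the other is stranded cost — so it is not literally a per-line sum of the single-line $2\delta v_k$ bounds. Your ``obstacle paragraph'' recovers essentially the right structure (per-line events, union bound, GRV characterization forcing $\tau^*$ to stop once $R^{(j)}_{\max}=v_k$); I would drop the coupling-with-independent-single-line-instances device entirely, state the two events $E1,E2$, exhibit $\bar\tau$ explicitly, and run the law-of-total-expectation computation as above. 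One more loose end: you should set $C=\max_j C_j$ and $\alpha=\max_j\alpha_j$ (from the per-line mixing constants of Lem.~\ref{lem:app:convege_line_stable}) so a single $t_\delta$ works uniformly; the statement asserts such constants exist, and that is where they come from.
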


\subsection{Extension To Forest}\label{subsec:transition_forest}

In this section, we show that for the forest setting, there still exist an induced subgraph with a much smaller size that guarantees roughly $1/2$-approximation of the expected payoff. We first define the approximate ratio below:

\begin{definition}[Approximate Ratio]\label{def:approx_ratio}
    A strategy $\pi$ is said to be a $C$-approximation of the Markovian Pandora's box, if, given any strategy $\hat{\pi}$, the expected utility of $\pi$ satisfies:
    \begin{align*}
        \E [\max_{i \in \mathcal{O}(\pi)} R_i - \sum_{i \in \mathcal{O}(\pi)} c_i] \geq C \cdot  \E[\max_{i \in \mathcal{O}(\hat{\pi})} R_i] - \sum_{i \in \mathcal{O}(\hat{\pi})} c_i].
    \end{align*}
    where $\mathcal{O}(\pi)$ denotes the random set selected by strategy $\pi$. 
\end{definition}

Next, we present a lemma that there exists a non-adaptive strategy that guarantees the utility of best adaptive strategy.\footnote{As proven in Thm.3~\cite{bfll20}, this originally holds for order constraints with independent boxes, and their proof generalizes easily to our setting with a forest precedence graph.}

\begin{lemma}[Adaptivity Gap~\cite{bfll20}]\label{lem:forest_adaptivity}
    Consider the Markovian Pandora's box with a forest precedence graph, then for every adaptive strategy $\hat{\pi}$, there exists a non-adaptive strategy $\pi$ that obtains $1/2$ approximation. 
\end{lemma}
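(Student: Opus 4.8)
\textbf{Proof proposal for Lemma~\ref{lem:forest_adaptivity}.}

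The plan is to mimic the adaptivity-gap argument of Theorem~3 in~\cite{bfll20}, checking that the only place their proof uses independence of box rewards is replaced, in our setting, by the Markov structure along each directed path of the forest — and that this replacement causes no loss. Concretely, let $\hat\pi$ be an arbitrary adaptive strategy and let $\mathcal{O}(\hat\pi)$ be its (random) set of opened boxes. Since the precedence graph is a forest, $\mathcal{O}(\hat\pi)$ is always a union of root-containing subtrees. The utility of $\hat\pi$ is $\E[\max_{i\in\mathcal{O}(\hat\pi)}R_i] - \E[\sum_{i\in\mathcal{O}(\hat\pi)}c_i]$; I would handle the reward term and the cost term separately, exactly as in~\cite{bfll20}.

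First I would bound the expected maximum reward. Define for each box $i$ the ``contribution'' event that $i$ is the argmax among $\mathcal{O}(\hat\pi)$ (breaking ties by a fixed order), and write $\E[\max_{i\in\mathcal{O}(\hat\pi)}R_i] = \sum_i \E[R_i \cdot \1(i\text{ is the realized argmax})]$. The key structural observation is that whether $\hat\pi$ reaches box $i$ depends only on the rewards revealed strictly before $i$ in $\hat\pi$'s order, all of which lie on ancestor paths or sibling subtrees; conditioning on the reward of $i$'s parent, the reward $R_i$ is independent of everything $\hat\pi$ saw outside $i$'s own subtree, by the Markov property in Problem~\ref{prob:mar_pandora}. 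This lets me build a non-adaptive probing set $S$ by, for each box $i$, including the whole ancestor-path from the root to $i$ with probability proportional to the probability that $\hat\pi$ opens $i$; then committing to open all of $S$ non-adaptively and returning its max reward recovers at least half of $\E[\max_{i\in\mathcal{O}(\hat\pi)}R_i]$, because each box's marginal contribution to the max is preserved under its own (parent-conditioned) distribution and the union bound / submodularity-of-max argument of~\cite{bfll20} loses only the factor $1/2$.

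Second, for the cost term I would use that $S$ is constructed so that $\E[\sum_{i\in S}c_i] \le \E[\sum_{i\in\mathcal{O}(\hat\pi)}c_i]$: the probability that the non-adaptive strategy opens box $i$ equals (or is at most) the probability that $\hat\pi$ opens $i$, since costs are deterministic and the inclusion probabilities are defined to match $\hat\pi$'s opening marginals, and precedence-closure of $S$ only adds ancestors that $\hat\pi$ must itself have opened. Combining, the non-adaptive strategy $\pi$ that opens $S$ gets reward at least $\tfrac12\E[\max_{i\in\mathcal{O}(\hat\pi)}R_i]$ while paying at most $\E[\sum_{i\in\mathcal{O}(\hat\pi)}c_i]$, and since the benchmark utility is at most $\E[\max_{i\in\mathcal{O}(\hat\pi)}R_i]$, a short rescaling (or the standard ``take the max of $S$ and the empty set'' truncation) yields the claimed $1/2$-approximation.

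The main obstacle is making the conditional-independence step fully rigorous: in~\cite{bfll20} the rewards are globally independent, so ``what $\hat\pi$ has seen so far'' is trivially independent of the unopened boxes, whereas here I must argue that, along each root-to-leaf path, conditioning on the parent's realized reward screens off the descendant's reward from the siblings and ancestors that influence $\hat\pi$'s branching decisions — this is exactly the Markov property, but it has to be invoked carefully because $\hat\pi$'s decision to reach $i$ can depend on rewards of boxes in $i$'s own subtree only through boxes strictly above $i$, never below, by precedence. I expect this to go through cleanly once the ordering of revelations is spelled out, and everything else is the verbatim argument of~\cite{bfll20}; hence I would state the lemma with a short self-contained reduction and defer the routine parts to the cited theorem.
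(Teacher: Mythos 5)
The paper itself does not supply a detailed proof of this lemma; it simply appeals to Theorem~3 of~\cite{bfll20} via a footnote and asserts that the argument ``generalizes easily.'' Your proposal is, therefore, an attempt to fill in an argument the paper leaves implicit, and the obstacle you flag in your last paragraph is in fact a genuine gap, not merely a detail to be spelled out.

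The issue is in your reward bound. In~\cite{bfll20} the crucial fact is that, with globally independent rewards, the event ``$\hat\pi$ opens box $i$'' is independent of $R_i$ (the decision to reach $i$ is a function of \emph{other} rewards only), so the non-adaptive set~$S$ built to match the opening marginals $p_i = \Pr[i\in\mathcal{O}(\hat\pi)]$ sees each $R_i$ with the \emph{same} conditional distribution as $\hat\pi$ does. In the Markovian forest this independence fails: $\hat\pi$'s decision to descend into $i$ depends on $R_{\mathrm{parent}(i)}$, which is correlated with $R_i$, so $\hat\pi$ can selectively open $i$ exactly when its reward is likely to be large. Matching the scalar marginal $p_i$ then samples $R_i$ from the \emph{unconditional} law, which can be far worse. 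Concretely, take a root $A$ with $R_A\in\{0,1\}$, $\Pr[R_A=1]=\epsilon$, cost~$0$, and a single child $B$ with $R_B = M\,R_A$, cost~$0$. The adaptive policy opens $A$ and opens $B$ only when $R_A=1$, so $p_B=\epsilon$ and $\E[\max_{i\in\mathcal O(\hat\pi)}R_i]=\epsilon M$. Your non-adaptive set with matching marginals opens $B$ with probability $\epsilon$ \emph{independently of $R_A$}, giving
\begin{align*}
\E\Big[\max_{i\in S}R_i\Big] \;=\; \epsilon\cdot\E[\max(R_A,R_B)] + (1-\epsilon)\cdot\E[R_A] \;=\; \epsilon^2 M + \epsilon(1-\epsilon),
\end{align*}
which for, say, $\epsilon=0.01$, $M=10$ is about $0.011$, far below $\tfrac12\,\epsilon M = 0.05$. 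So the inequality ``$\E[\max_{i\in S}R_i]\ge \tfrac12\,\E[\max_{i\in\mathcal O(\hat\pi)}R_i]$'' does not hold for your construction. (The lemma itself is not falsified by this instance, since the \emph{deterministic} non-adaptive policy that always opens $\{A,B\}$ matches the adaptive reward exactly; but that is a different $S$ than the one you propose, chosen by a mechanism your sketch does not supply.)

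In short, the ``verbatim argument of~\cite{bfll20}'' does not port over: the step $\E[\1(i\in\mathcal O(\hat\pi))\,(R_i-\theta)_+]=p_i\,\E[(R_i-\theta)_+]$ (or its equivalent) uses independence of the opening event from $R_i$, and that is precisely what the Markov coupling along each root-to-leaf path destroys. A correct generalization must either (a) work with path- or subtree-level quantities whose laws are invariant to $\hat\pi$'s branching, rather than per-box marginals, or (b) explicitly account for the bias $\E[R_i\mid i\in\mathcal O(\hat\pi)]\ne\E[R_i]$ and show it is bounded by the contribution of the already-paid ancestors, and neither is carried out in your sketch.
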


Next we are ready to present our results on a (rougly) $1/2$ approximation for the forest setting. 

\begin{theorem}[$1/2$ Approximation, Forest]\label{thm:forest_approx}

Given a Markovian Pandora’s box with static transition, where the precedence graph $G$ is a forest. Given any $\delta \in (0,1)$,there exists an algorithm such that it finds a best fixed line subgraph $\hyperb$ within $\Delta(G)^{\tilde{\Theta}(1)}$ time, such that given any alternative adaptive policy $\pi$:
    \begin{align*}
        \E [\max_{i \in \mathcal{O}(\pi)} R_i - &\sum_{i \in \mathcal{O}(\pi)} c_i] \geq \\
        &1/2 \cdot  \E[\max_{i \in \mathcal{O}(\hat{\pi})} R_i] - \sum_{i \in \mathcal{O}(\hat{\pi})} c_i] -q \delta.
    \end{align*}
where $\Delta(G)$ is the degree and $q$ is the number of trees in $G$. 
    
\end{theorem}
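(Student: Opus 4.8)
The plan is to combine two ingredients already available in the excerpt: the near-optimal line-truncation result (Theorem~\ref{thm:greedy_optimal_multi_line}) and the forest adaptivity gap (Lemma~\ref{lem:forest_adaptivity}). First I would invoke Lemma~\ref{lem:forest_adaptivity}: for any adaptive policy $\hat\pi$ on the forest $G$, there is a \emph{non-adaptive} policy $\pi_{\mathrm{NA}}$ achieving at least $1/2$ of the expected payoff of $\hat\pi$. Since $\pi_{\mathrm{NA}}$ is non-adaptive and $G$ is a forest, the set of boxes it opens (before stopping) is, without loss of generality, a union of root-to-node directed paths; and because a non-adaptive stopping time can only help if it commits to a single ``best'' path to keep probing, one can argue that the optimal non-adaptive payoff over $G$ is attained (up to the same factor, or exactly) by probing along a single fixed directed line $\hyperb^\star$ of $G$ — or at worst a union of a bounded number of lines, one per tree. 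This reduces the forest problem to the line/multi-line case.

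Next I would apply Theorem~\ref{thm:greedy_optimal_single_line} (resp.\ Theorem~\ref{thm:greedy_optimal_multi_line}) to the line $\hyperb^\star$: truncating $\hyperb^\star$ to its first $t_\delta = \max\{2 C_\trans/(\pi_k(1-\alpha_\trans)),\, \log(1/\delta)/\log(1-\pi_k/2)\}$ boxes loses at most an additive $2\delta v_k$ (or $2q\delta v_k$ over $q$ trees) in expected utility. Composing, the truncated line subgraph $\hyperb$, which has size $\tilde\Theta(1)$ per tree, satisfies $\OPT(\hyperb) \ge \tfrac12 \OPT_{\mathrm{adaptive}}(G) - q\delta$ after absorbing $v_k$ into the $\Theta$ (or rescaling rewards so $v_k = \Theta(1)$, consistent with the normalization in the theorem statement). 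For the running time: there are at most $\Delta(G)$ choices of child at each of the $O(\log(1/\delta))$ branch levels one might follow, so enumerating all candidate truncated lines and evaluating each via the polynomial-time GRV machinery of Section~\ref{sec:1_hyperbox} takes $\Delta(G)^{\tilde\Theta(1)}$ time; I then output the best one. Finally, evaluating $\OPT$ on a fixed line of length $t_\delta$ is polynomial by Theorem~\ref{thm:GMR_path}, so the whole search is within the claimed bound.

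The step I expect to be the main obstacle is the reduction, inside the non-adaptive world, from ``union of paths over the forest'' to ``a single bounded-length line per tree.'' A non-adaptive strategy on a forest could in principle open several incomparable branches of the same tree, and it is not a priori clear that restricting to one root-to-leaf line per tree only costs a constant factor that can be folded into the existing $1/2$; one must argue that, for a non-adaptive stopping rule, the marginal contribution of extra branches beyond the best single line is either zero (because you would never pay to open a box you have already decided, non-adaptively, to ignore) or is dominated by re-accounting. This is exactly where the footnote's claim that the Theorem~3 proof of \cite{bfll20} ``generalizes easily'' is doing the work, and I would either cite that generalization directly or give a short exchange-argument: swapping out a sub-optimal branch for a prefix of the best line never decreases the non-adaptive payoff, by the same reservation-value monotonicity used in Lemma~\ref{lem:properties_GRV}. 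A secondary, more routine, obstacle is bookkeeping the additive error: making sure the $2\delta v_k$ per line from Theorem~\ref{thm:greedy_optimal_multi_line} aggregates to exactly the stated $q\delta$ (which presumably uses the normalization $v_k \le 1/2$ or simply redefines $\delta$), and confirming that the $1/2$ factor multiplies only the adaptive optimum and not the additive slack — both of which are straightforward once the reduction is in place.
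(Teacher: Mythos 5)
Your proposal matches the paper's proof in all essential respects: apply Lemma~\ref{lem:forest_adaptivity} to pass from adaptive to non-adaptive strategies (losing a factor $1/2$), truncate each line to depth $t_\delta$ via the multi-line near-optimality theorem (Thm.~\ref{thm:greedy_optimal_multi_line}), enumerate candidate root-to-node directed lines by depth-bounded DFS in $\Delta(G)^{\tilde\Theta(1)}$ time, and set $\delta = \delta_0/v_k$ to absorb the $v_k$ factor into the additive slack. The step you flag as the main obstacle---reducing a non-adaptive strategy on a tree, which could in principle open incomparable branches, to a single fixed line per tree---is equally compressed in the paper's proof, which just asserts a DFS search for the cheapest line without an explicit exchange argument; the implicit justification there is that under static transitions every root-to-node path of a given length has the same reward distribution, so only cumulative cost distinguishes paths, and a single cheapest line of length $\le t_\delta$ already attains $v_k$ with probability at least $1-\delta$, making any additional branch pure cost beyond the $q\delta v_k$ already charged.
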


\section{Conclusions}\label{sec:conclusion}

In this work, we introduced the \emph{Markovian Pandora’s Box} problem, extending the classical framework to incorporate \emph{structural constraints} and \emph{probabilistic dependencies}. We developed the first \emph{optimal algorithm} for this problem on a \emph{forest-structured graph}, demonstrating that despite its fully adaptive nature, the solution can be computed efficiently in \emph{polynomial time and space}.  Furthermore, under \emph{static transition}, we derived \emph{faster} algorithms via \emph{subgraph optimization}, ensuring near-optimal performance while significantly reducing computational complexity. Our results provide new insights into constrained sequential exploration with Markovian correlations in search and selection problems.






\ifdefined\isarxiv
\newcommand{\etalchar}[1]{$^{#1}$}

\else
\bibliography{ref}
\bibliographystyle{neurips2023}

\fi

\appendix

\section{More Details from Literature Review}\label{sec:app:literature_review}

\paragraph{Pandora's Box and Friends}
Pandora's box, originates from~\cite{w79}, has since attracted a lot of research interests in studying its variants and application scenarios. These variants include:

\squishlist
    \item \textbf{Pandora's box with Order Constraints}. Previous work by \citet{bfll20} focused on order constraints, where some boxes must be opened after others, and rewards are independent accross boxes, making it a \emph{special case} of ours. Their optimal strategy is partially adaptive, whereas ours is fully adaptive. Consequently, our setting is fundamentally more challenging than theirs.
    \item \textbf{Pandora's box with Correlations}. Recent studies have shown growing interest in the Pandora’s Box problem with correlations~\cite{cdkt19, cgt+20, cgmt21, gt23}. These works study the cost minimization version of the Pandora’s Box problem and focus on deriving adaptive strategies that approximate the fully adaptive (FA) or partially adaptive optimal solutions. Their results show that approximating the FA optimal within a constant factor is \emph{NP-hard}. In contrast, our setting assumes structured correlations, allowing for the exact optimization of FA strategies within polynomial time.
    \item \textbf{Online Variants}. \citet{gt22} studies an online learning variant of Pandora's Box and Min Sum Set Cover, proposing a computationally efficient algorithm that is constant-competitive with the optimal search order, extending to a bandit setting and generalized selection under matroid constraints. \citet{gksw24} explores \emph{Prophet Inequality} and \emph{Pandora’s Box} in the \emph{Multi-Armed Bandits} model, developing \emph{near-optimal} regret-minimizing algorithms $\tilde{O}(\text{poly}(n)\sqrt{T})$ that balance \emph{exploration and exploitation} by maintaining confidence intervals on the optimal policy’s indices.

    \item \textbf{Nonobligatory Inspection}. Pandora’s Box with nonobligatory inspection is a variant of Weitzman’s Pandora’s problem, introduced by~\citet{d18}, where the searcher is not required to pay the inspection cost before selecting an alternative. Unlike the original problem, this version cannot be solved optimally by a simple ranking-based policy. 
    
    \citet{bk19} provides the \emph{first non-trivial approximation guarantees} for this problem, introducing \emph{committing policies} that are computationally efficient and proving that the optimal committing policy achieves a \(1 - \frac{1}{e} \approx 0.63\) approximation, improving to \( \frac{4}{5} \) for the \emph{two-box case}. \citet{bc23} provides a \emph{structural characterization} of the \emph{optimal policy} for this problem, establishes its \emph{NP-completeness}, and develops a \emph{polynomial-time approximation scheme (PTAS)} using a novel reduction, while also proving a \emph{tight $0.8$-approximation} for committing policies across general distributions. Concurrent work~\cite{fll23} also establishes the \emph{NP-hardness} of computing an optimal policy for this problem, and develops a \emph{polynomial-time approximation scheme (PTAS)} that achieves an expected payoff of at least \((1 - \epsilon)\) of the optimal for any \(\epsilon > 0\).

    \item \textbf{Others}. Several other notable variants of Pandora’s Box have been studied, including settings where boxes can be partially opened~\citep{ajs20}, problems with generalized objective functions~\citep{o15}, models incorporating deadlines~\citep{beff24}, cases with time-dependent rewards and costs~\citep{afrt24}, and scenarios where box information is strategically revealed~\citep{dfh+23}. 

\squishend

\paragraph{Data Driven Algorithm Design} The Pandora’s Box problem provides a fundamental framework for decision-making under costly information, making it particularly relevant to data-driven algorithm design when cost considerations are taken into account. Data-driven algorithm design~\cite{gr16} includes greedy heuristic selection, self-improving algorithms~\citep{cms10,acc+11}, and parameter tuning in optimization and machine learning~\cite{ggm06, bb12,sla12,jt16,bnvw17,ldrt17, kll17, hky18, wgs18, bdsv18, akl+19, bdd+19,kthh+19,smv+20}.

\paragraph{Connections to Other Problems}

Another relevant line of research is the \emph{stochastic probing} problem, which involves deciding both when and which elements to probe. ~\citet{gn13} proposes and examines this problem where elements in a universe are active with given probabilities, and an algorithm must probe elements to determine their activity while satisfying outer and inner packing constraints (such as matroid and knapsack intersections) to maximize total weight. As an application, they provide the first polynomial-time $\Omega(1/k)$-approximate sequential posted price mechanism for $k$-matroid intersection constraints.~\citet{asw16} generalizes the stochastic probing problem by extending the objective from linear to monotone submodular functions and presents a $\frac{(1-1/e)}{k_{\text{in}}+k_{\text{out}}+1}$-approximation algorithm for settings with $k_{\text{in}}$ inner matroid constraints and $k_{\text{out}}$ outer matroid constraints, along with an improved $\frac{1}{k_{\text{in}}+k_{\text{out}}}$-approximation for linear objectives. The studies by \citet{gns16,gns17} focus on the adaptivity gaps of stochastic probing problems, particularly in settings with prefix-closed outer constraints and submodular or XOS objectives. Here the adaptivity gap refers to the ratio between the optimal expected value of the best adaptive policy (which makes decisions dynamically based on observed outcomes) and the best non-adaptive policy (which commits to a fixed decision sequence in advance).

Other relevant problems also exhibit similar information structures and/or solution concepts, such as search~\cite{a17,kk18}, ranking~\cite{dgmm22}, Markov game~\cite{ll22}, sorting and selection~\cite{gk01}, revenue maximization~\cite{kww16, cdkt19} and costly information~\cite{cfg+00, cjk+15, chkk15, ls17, s18, bbs18, gjss19,cchs24}.
\section{More Details from Problem Formulation}\label{sec:app:formulation}

\begin{lemma}[The sub-optimality of PA strategies]\label{lem:app:sub_opt_PA}
Consider the box $A$, $B$ and $C$, whose value denoted as $v_A$, $v_B$, $v_C$, respectively. The box is ordered such that box $A$ must be opened before box $B$. Then there doesn't exist any partially adaptive strategy that is optimal, for the box distribution specified as follows:

\begin{align*}
    v_A  &=  
    \begin{cases}  
        900 \text{ w.p. } 0.1, & 1 \text{ w.p. } 0.9  
    \end{cases}, \\
    v_B &=  
    \begin{cases}  
        v_A + 20 \text{ w.p. } 0.5, & v_A - 10 \text{ w.p. } 0.5  
    \end{cases}, \\
    v_C &=  
    \begin{cases}  
        50 \text{ w.p. } 0.5, & 10 \text{ w.p. } 0.5  
    \end{cases}.
\end{align*}

and for $c_A = 20$, $c_B = 3$, $c_C = 5$.
\end{lemma}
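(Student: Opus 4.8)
The plan is to exhibit the optimal fully adaptive (FA) strategy explicitly, compute its expected payoff, and then argue that no partially adaptive (PA) strategy can match it because the PA restriction forces a suboptimal decision in at least one realization. First I would set up the instance: box $A$ is opened first (by the order constraint), and after seeing $v_A \in \{900, 1\}$ the searcher decides whether to open $B$ (whose reward is tightly coupled to $v_A$) and/or $C$ (whose reward is independent of both, $v_C \in \{50,10\}$ with small cost $c_C=5$ and $c_B = 3$). The key observation is that the \emph{relative} value of opening $C$ depends on the already-realized $v_A$: when $v_A = 900$, the current running maximum is already huge, so neither $B$ nor $C$ is worth opening; when $v_A = 1$, the running maximum is tiny, so opening $C$ (and possibly $B$) is attractive. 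Crucially, whether one should open $C$ \emph{before} or \emph{after} $B$ — or at all — and whether one opens $B$ at all, changes with the realization of $A$, which a PA strategy cannot exploit because its order $\omega$ is fixed in advance.

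The concrete steps: (1) enumerate the FA decision tree. After observing $v_A$: if $v_A = 900$, stop immediately (payoff $900 - c_A = 880$; opening $B$ gains at most $20$ with prob $1/2$ at cost $3$, net $+7 > 0$ actually — so I must check this carefully; opening $C$ is clearly bad). Here is where I need to be careful with the numbers: with $v_A = 900$, $v_B = 920$ w.p.\ $1/2$ and $v_B = 890$ w.p.\ $1/2$, so $\E[\max\{900, v_B\}] = \tfrac12\cdot 920 + \tfrac12 \cdot 900 = 910$, and opening $B$ yields $910 - 3 = 907 > 900$, so actually the FA strategy \emph{does} open $B$ when $v_A = 900$. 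Then $C$ is still not worth it. If $v_A = 1$: $v_B = 21$ or $v_B = -9$ each w.p.\ $1/2$; running max after $B$ is $21$ or $1$. Compare against opening $C$: $\E[\max\{\cdot, v_C\}]$ with $v_C \in \{50,10\}$. I would work out the optimal adaptive continuation from the state $v_A = 1$ — likely open $B$ then $C$ (or $C$ then $B$), computing the exact expected payoff by backward induction over the $\leq 3$-box tree. (2) Compute $\E[\text{FA}]$ by combining the $v_A = 900$ branch (prob $0.1$) and the $v_A = 1$ branch (prob $0.9$). (3) For the PA lower bound, enumerate all PA orderings consistent with $A \prec B$: namely $A, B, C$; $A, C, B$; and $C, A, B$; together with the optimal stopping rule for each fixed order. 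For each, the stopping time may adapt but the order is frozen, so I compute $\OPT_{\text{PA}}$ as the max over these three orders of the best-stopping-time payoff, and show it is strictly less than $\E[\text{FA}]$.

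The main obstacle I expect is step (3): showing that \emph{every} PA order is strictly suboptimal, rather than just one. The intuition is a tension — the order $A,C,B$ wastes the cost of probing $C$ in the high-$v_A$ world, while the order $A,B,C$ forces, in the low-$v_A$ world, a decision about $C$ that must be made after $B$'s realization is known, which is fine, but the order cannot also place $C$ early when that would help; meanwhile $C,A,B$ pays $c_C$ unconditionally. I would make this rigorous by explicitly tabulating, for each of the three PA orders, the $9$ joint outcomes (or the relevant collapsed cases after optimal stopping) and their probabilities, summing to get each PA order's value, then comparing numerically. A cleaner framing: since in the FA solution the set of boxes opened genuinely depends on $v_A$ in a way no fixed order with adaptive stopping can replicate (when $v_A=900$ we want $\{A,B\}$; when $v_A = 1$ we want a different nonmonotone continuation), any PA order is forced to either over-probe in one branch or under-probe in the other, and the gap is bounded below by a positive constant I extract from the numbers. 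I would present the final comparison as a short table of three numbers versus the FA value, which closes the proof.
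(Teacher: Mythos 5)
Your plan is correct and matches the paper's own proof: enumerate the feasible PA orderings ($A,B,C$; $A,C,B$; $C,A,B$), compute the optimal-stopping value for each, then exhibit the fully adaptive strategy (open $A$; branch to $B$ if $v_A=900$, to $C$ if $v_A=1$) and verify it strictly exceeds every PA value. You are in fact slightly more thorough than the paper, which asserts ``the best PA strategy will order $A\prec C$'' and only tabulates two orders, whereas you propose to check all three explicitly — the argument is otherwise the same numerical case analysis.
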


\begin{proof}[Proof Sketch]
It's easy to see that the best PA strategy will order $A \prec C$, so it's sufficient to consider the expected utility of PA strategies of the following two orders: 1)\ $A \prec B \prec C$, and 2)\ $A \prec C \prec B$. From the calculation, order~1) will give us an expected utility of $90$, and order~2) will give us an expected utility of $92.5$, \footnote{We already stop optimally.} The best FA strategy is to first probe $A$, and 1)\ if $v_A = 1$, probe box $C$, 2)\ if $v_A = 900$, probe $B$. Thus, the best adaptive strategy gives a total utility of $92.7$, outperforming the best PA strategy.
\end{proof}

\textbf{Notations}. 
We use $b_i$ and $i$ interchangeably to denote the same box. Following the notations in Prob.~\ref{prob:mar_pandora}: For every $i \in [n]$, we use $s^i$ and $R_i$ interchangeably to denote the reward of box $b_i$, and we denote the probability density function (pdf) of its reward as $\vp_i$, i.e., $\vp_i[s_q] = \Pr [R_i = s_q]$. We use $P_i \in \R_{+}^{K \times K}$ to denote the (probability) transition matrix from $\D_i$ to $\D_{i+1}$, i.e., $\vp_{i+1} = \vp_i \cdot P_{i+1}$. 

We denote $\ve_k$ as the $k$-th basis vector, where $\ve_k [k] = 1$, and $\ve_k[j] = 0$ for $j \neq k$. Suppose the algorithm just opened $b_j$ and observed its state $s_q$, we can update $\vp_j = \ve_q$ as the updated distribution of $b_j$. Using $\vp_{i+1} = \vp_i \cdot P_{i+1}$, we can update $\vp_\ell$ for all $\ell \ge j+1$. 


We use $\tilde{O}$ as a variant of Big-O notation that disregards polylogarithmic factors. When we refer to an algorithm as running in polynomial time, we mean that its running time is polynomial in both the number of possible reward values and the number of distinct boxes.

\section{More Details from Single Line}\label{sec:app:details_exact_opt}

\subsection{More Details for Generalized Reservation Value}

\begin{lemma}[Properties of $\Phi$ and $H_i$]\label{lem:app:phi_prop}
Given any state $(x, s^{i-1}, i)$, 

\squishlist
    \item $\Phi(\cdot, s^{i-1}, i )$ is 1-Lipschitz and monotone non-decreasing.
    \item Let $H_i(x, s^{i-1}) := \Phi (x, s^{i-1}, i) - x$, then $H_i(\cdot, s^{i-1})$ is nonnegative, 1-Lipschitz and monotone non-increasing.
    \item For $z_i$ as the generalized reservation value (Def.~\ref{def:GRV}) of the $i$-th box of the hyperbox, then $\Phi(x,s^{i-1}, i ) = x$ for any $x\ge z_i$.
\squishend
\end{lemma}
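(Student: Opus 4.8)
The plan is to establish the three claims about $\Phi$ and $H_i$ by a joint downward induction on $i$, i.e.\ starting from the last box $b_n$ of the hyperbox and working backwards, using the Bellman recursion
\[
\Phi(x, s^{i-1}, i) = \max\Bigl\{x,\; -c_i + \E_{s^i}\bigl[\Phi(\max\{x, s^i\}, s^i, i+1)\bigr]\Bigr\}
\]
as the engine. The base case is box $b_n$: there $\Phi(x, s^{n-1}, n) = \max\{x, -c_n + \E_{s^n}[\max\{x, s^n\}]\}$, and since $\E_{s^n}[\max\{x,s^n\}]$ is a maximum/expectation of $1$-Lipschitz nondecreasing functions of $x$, the pointwise max with the identity $x$ is again $1$-Lipschitz, nondecreasing, and lies above $x$; so $\Phi(\cdot, s^{n-1}, n)$ has all the stated properties, and $H_n(x, s^{n-1}) = \Phi - x$ is nonnegative, and $1$-Lipschitz and nonincreasing because subtracting the identity from a $1$-Lipschitz nondecreasing function yields a $1$-Lipschitz nonincreasing function (the slope lies in $[-1,0]$).

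For the inductive step, assume the three properties hold for $\Phi(\cdot, s^i, i+1)$ for every state $s^i$. First I would show the map $x \mapsto \E_{s^i}[\Phi(\max\{x, s^i\}, s^i, i+1)]$ is $1$-Lipschitz and nondecreasing: monotonicity is immediate since $x \mapsto \max\{x, s^i\}$ is nondecreasing and $\Phi$ is nondecreasing in its first argument; for the Lipschitz bound, $|\max\{x, s^i\} - \max\{x', s^i\}| \le |x - x'|$ combined with the inductive $1$-Lipschitz property of $\Phi$, then expectation preserves the bound. Hence $-c_i + \E_{s^i}[\Phi(\max\{x, s^i\}, s^i, i+1)]$ is $1$-Lipschitz nondecreasing, and $\Phi(\cdot, s^{i-1}, i)$, being the pointwise max of this with the identity function $x$, is again $1$-Lipschitz and nondecreasing — giving the first bullet. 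The second bullet follows since $H_i = \Phi(\cdot, s^{i-1}, i) - x = \max\{0,\; -c_i + \E_{s^i}[\Phi(\max\{x,s^i\}, s^i, i+1)] - x\}$ is manifestly nonnegative, and one checks that $x \mapsto -c_i + \E_{s^i}[\Phi(\max\{x,s^i\}, s^i, i+1)] - x$ is $1$-Lipschitz and nonincreasing (a $1$-Lipschitz nondecreasing function minus the identity has slope in $[-1,0]$; here I would want a slightly sharper observation, namely that on the region $\{s^i \le x\}$ the inner term moves at unit rate so the whole thing is genuinely nonincreasing), so its max with $0$ is still nonincreasing and, being a max of two $1$-Lipschitz functions, $1$-Lipschitz.

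For the third bullet, recall the generalized reservation value $z_i$ is defined as the smallest $x$ at which the decision-maker is indifferent, equivalently (by the Bellman form) the smallest $x$ with $-c_i + \E_{s^i}[\Phi(\max\{x, s^i\}, s^i, i+1)] \le x$, i.e.\ $H_i(x, s^{i-1}) = 0$. Since $H_i(\cdot, s^{i-1})$ is nonincreasing and nonnegative by the second bullet, once it hits $0$ at $x = z_i$ it stays $0$ for all $x \ge z_i$, which is exactly $\Phi(x, s^{i-1}, i) = x$ for $x \ge z_i$; I should also note $z_i$ is well-defined (finite) because $H_i(x, s^{i-1}) \le -c_i + \E_{s^i}[\max\{x, s^i\}] + (\text{bounded tail}) - x \to$ a nonpositive limit as $x$ grows past the support of all downstream rewards, using $c_i > 0$ — or simply cite Theorem~\ref{thm:unique_fair_cap} for existence. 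The main obstacle I anticipate is the careful bookkeeping in the second bullet: verifying that the composite $x \mapsto \E_{s^i}[\Phi(\max\{x,s^i\}, s^i, i+1)] - x$ is actually \emph{nonincreasing} (not merely that its max with $0$ is), which requires using that $\Phi$ increases at rate at most $1$ in its first argument while $\max\{x, s^i\}$ increases at rate $1$ whenever $x \ge s^i$ and rate $0$ whenever $x < s^i$ — one has to split the expectation over these two events and argue the net derivative is $\le 1$ everywhere; the clean way is to write, for $x' > x$,
\[
\E_{s^i}\bigl[\Phi(\max\{x', s^i\}, s^i, i+1) - \Phi(\max\{x, s^i\}, s^i, i+1)\bigr] \le \E_{s^i}\bigl[\max\{x', s^i\} - \max\{x, s^i\}\bigr] \le x' - x,
\]
which delivers both the Lipschitz and the monotonicity-after-subtraction claims simultaneously. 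Everything else is a routine propagation of $1$-Lipschitz/monotone properties through $\max$, expectation, and affine shifts.
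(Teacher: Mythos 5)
Your proof is correct, but it follows a genuinely different route from the paper's. You establish the Lipschitz and monotonicity properties by a downward induction over the Bellman recursion
\[
\Phi(x, s^{i-1}, i) = \max\Bigl\{x,\; -c_i + \E_{s^i}\bigl[\Phi(\max\{x, s^i\}, s^i, i+1)\bigr]\Bigr\},
\]
propagating $1$-Lipschitz nondecreasing through the composition with $x \mapsto \max\{x, s^i\}$, the expectation, and the outer $\max$ with the identity. The paper instead argues directly from the definition $\Phi(x, s^{i-1}, i) = \E[\max\{x, \max_{j=i}^{\tau^*} R_j\} - \sum_{j=i}^{\tau^*} c_j]$ via a stopping-time exchange: to get $\Phi(b) - \Phi(a) \le b - a$ it substitutes the optimal rule $\tau^*(b, \cdot)$ as a suboptimal rule at state $a$ and cancels the identical cost terms, and for monotonicity it uses $\tau^*(a, \cdot)$ as a suboptimal rule at $b$. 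The paper's route is self-contained from the definition and does not lean on the Bellman equation; your route implicitly relies on the Bellman identity already holding (which the paper only sketches in Section~3.2), but in exchange yields a cleaner structural argument in which the second and third bullets fall out as one-line corollaries of the first. One small inefficiency in your write-up: the paragraph fretting over whether $x \mapsto \E_{s^i}[\Phi(\max\{x,s^i\}, s^i, i+1)] - x$ is genuinely nonincreasing, including the aside about splitting the expectation over $\{s^i \le x\}$ versus $\{s^i > x\}$, is unnecessary — once you have shown (as you do) that $\E_{s^i}[\Phi(\max\{x,s^i\}, s^i, i+1)]$ is $1$-Lipschitz and nondecreasing, subtracting the identity immediately gives a function with increments in $[-(x'-x), 0]$, which is both $1$-Lipschitz and nonincreasing; the one-line chain you end with is all that is needed.
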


\begin{proof}
Given any $a < b$, 
\begin{align*}
    &\Phi (b, s^{i-1}, i) - \Phi (a, s^{i-1}, i)\\
	&\le \E \Big[\max \{b , \max _{j=i} ^{\tau^*(b, s^{i-1}, i)}R_j\} - \max \{a , \max _{j=i} ^{\tau^*(b, s^{i-1}, i)}R_j\}\Big]\\
	& \le b-a
\end{align*}
where in the first inequality, we used that $\tau^*(b, s^{i-1}, i)$ is a suboptimal strategy for $\Phi^\tau(a, s^{i-1}, i)$. Using the same reasoning, we have:
\begin{align*}
    \Phi (b, s^{i-1}, i) &= \E \Big[\max \{b , \max _{j=i} ^{\tau^*(b, s^{i-1}, i)}R_j\}\Big] - \sum_{j=i}^{\tau^*(b, s^{i-1}, i)} c_j \\
    & \geq \E \Big[\max \{b , \max _{j=i} ^{\tau^*(a, s^{i-1}, i)}R_j\}\Big] - \sum_{j=i}^{\tau^*(a, s^{i-1}, i)} c_j\\
    & \geq \E \Big[\max \{a , \max _{j=i} ^{\tau^*(a, s^{i-1}, i)}R_j\}- \sum_{j=i}^{\tau^*(a, s^{i-1}, i)} c_j \Big] = \Phi (a, s^{i-1}, i)
\end{align*}
where the first inequality follows from with the increase of the current realized reward, the optimal stopping rule $\tau^*$ will only stop earlier. Thus, $\Phi (\cdot, s^{i-1}, i)$ is monotone non-decreasing. Now consider $H_i$, we have 
\[
H_i(b, s^{i-1})- H_i(a, s^{i-1}) = \Phi (b, s^{i-1}, i) - \Phi (a, s^{i-1}, i) - (b-a) \le 0
\]
where we use that $\Phi (\cdot, s^{i-1}, i)$ is 1-Lipschitz. The above inequality implies that $H_i(x, s^{i-1})$ is 1-Lipschitz and monotone non-increasing. Lastly, $\Phi(x,s^{i-1}, i ) - x = 0$ for all $x\ge z_i$ follows from the that fact that $z_i$ is the smallest such that $H_i(z_i, s^{i-1}) = 0$ and $H_i$ is non-negative and monotone non-increasing.

\end{proof}

\begin{lemma}[Properties of GRV]\label{lem:app:properties_GRV}
 Given a Pandora's box with line precedence graph $\mathcal{L}=\left[b_1, \ldots, b_n\right]$, the generalized reservation value of every box $i \in [n]$ satisfies the following property: Given any state $s^{i-1}$ as the state of $(i-1)$-th box,
\begin{itemize}
    \item $\sigma_i(s^{i-1}, i)$ is nondecreasing as additional boxes are appended to $\hyperb$.
    \item Let $\eta$ be the (random) index of the first box that has generalized reservation value smaller than $\sigma_i(s^{i-1}, i)$, then $\sigma_i(s^{i-1}, i)$ depends only on the (sub)hyperbox $\hat{\hyperb} := \{b_i, \ldots, b_{\eta}\}$. If $i = \eta$ with probability $1$, then $\sigma_i(s^{i-1}, i)$ depends only on $b_i$. 
\end{itemize}
\end{lemma}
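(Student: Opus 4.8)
The plan is to read off both properties from the structural facts about $\Phi$ and $H_i$ recorded in Lemma~\ref{lem:app:phi_prop} together with the optimal-stopping characterization of Theorem~\ref{thm:unique_fair_cap}. For the first bullet, fix the state $s^{i-1}$ and compare the hyperbox $\hyperb=\{b_i,\ldots,b_n\}$ with an enlargement $\hyperb'=\hyperb\cup\{b_{n+1},\ldots\}$ obtained by appending boxes at the tail. Starting from state $(x,s^{i-1},i)$, every stopping time feasible for $\hyperb$ remains feasible for $\hyperb'$, so $\Phi_{\hyperb'}(x,s^{i-1},i)\ge\Phi_{\hyperb}(x,s^{i-1},i)$ for every $x$, and hence $H_i^{\hyperb'}(\cdot,s^{i-1})\ge H_i^{\hyperb}(\cdot,s^{i-1})\ge 0$. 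By Lemma~\ref{lem:app:phi_prop} each $H_i$ is continuous, nonnegative and non-increasing, so its zero set is a closed ray $[\sigma_i,\infty)$; pointwise domination of a non-increasing function moves the left endpoint of this ray weakly to the right, which gives $\sigma_i^{\hyperb'}\ge\sigma_i^{\hyperb}$.

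For the second bullet, abbreviate $\hat\sigma:=\sigma_i(s^{i-1},i)$ and consider the optimal stopping time $\tau^*=\tau^*(\hat\sigma,s^{i-1},i)$ that enters the defining identity~\eqref{eq:grve}, i.e.\ the run begun with current maximum exactly $\hat\sigma$. The key claim is that $\tau^*<\eta$ on every realization, so $\tau^*$ never opens $b_\eta$. Along any path, for each $j$ with $i\le j\le\eta-1$ minimality of $\eta$ gives $\sigma_j(s^{j-1},j)\ge\hat\sigma$, whereas as soon as $\tau^*$ reaches $b_\eta$ its current maximum equals $\max\{\hat\sigma,R_i,\ldots,R_{\eta-1}\}\ge\hat\sigma>\sigma_\eta(s^{\eta-1},\eta)$; by Theorem~\ref{thm:unique_fair_cap} a current maximum strictly above the GRV of the next available box forces every optimal rule to stop, so $\tau^*$ halts no later than $b_{\eta-1}$ (and possibly strictly earlier on that path). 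Consequently the random quantity $\bigl(\max_{j=i}^{\tau^*}R_j-\hat\sigma\bigr)_+-\sum_{j=i}^{\tau^*}c_j$ whose expectation~\eqref{eq:grve} sets to zero refers only to $R_i,\ldots,R_{\eta-1}$, the costs $c_i,\ldots,c_{\eta-1}$, and the stopping decisions taken at $b_i,\ldots,b_{\eta-1}$; by the Markov property of Problem~\ref{prob:mar_pandora} the conditional joint law of $R_i,\ldots,R_{\eta-1}$ given $s^{i-1}$ is pinned down by the transition data of $b_i,\ldots,b_\eta$ alone, so $\hat\sigma$ is a function of the data of $\hat\hyperb=\{b_i,\ldots,b_\eta\}$. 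In the degenerate case where $\tau^*$ opens only $b_i$ and nothing further (the situation the statement records as $\eta=i$), identity~\eqref{eq:grve} collapses to Weitzman's single-box reservation-value equation for $b_i$, so $\hat\sigma$ depends only on $b_i$.

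The delicate point, which I expect to be the main obstacle, is making the phrase \emph{depends only on $\hat\hyperb$} watertight: $\eta$ is a random index built from the whole GRV table, and the GRVs $\sigma_j$ invoked at $b_i,\ldots,b_{\eta-1}$ are themselves defined by a backward recursion that a priori references boxes beyond $b_\eta$, so there is a threat of circularity. I would dispel it by a reverse induction on $i$: Theorem~\ref{thm:unique_fair_cap} reduces the optimal rule to the local test that one proceeds iff the current maximum does not exceed the next box's GRV, so the argument above only ever compares $\hat\sigma$ with GRVs that are already fixed, and the inductive hypothesis applied to each $j$ with $i<j\le\eta-1$ says precisely that $\sigma_j$ depends only on $\{b_j,\ldots,b_{\eta_j}\}$, where $\eta_j$ is the first index past $j$ with GRV below $\sigma_j$. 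Since $\sigma_j\ge\hat\sigma$ forces $\eta_j\le\eta$, each such $\sigma_j$ already depends only on boxes inside $\hat\hyperb$; and the first bullet (monotonicity under appending) guarantees that truncating the hyperbox at $b_\eta$ lowers none of these $\sigma_j$ below $\hat\sigma$ and hence does not move $\eta$, which is exactly what closes the induction.
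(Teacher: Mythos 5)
Your argument tracks the paper's own proof closely. For the first bullet you make explicit what the paper leaves implicit: appending boxes enlarges the feasible set of stopping times, so $\Phi$ (hence $H_i$) can only increase pointwise, and since $H_i$ is nonnegative and monotone non-increasing (Lemma~\ref{lem:app:phi_prop}) its zero set is a ray whose left endpoint can only move rightward. That is the right derivation. For the second bullet you reproduce the paper's one-line argument — $\tau^*(\hat\sigma,s^{i-1},i)$ never reaches $b_\eta$ because the running maximum is always at least $\hat\sigma>\sigma_\eta$, so (\ref{eq:grve}) references only data from $b_i,\ldots,b_{\eta-1}$ — and you additionally flag the circularity that the paper's two-sentence proof glosses over: the GRVs $\sigma_j$ consulted along the way are themselves defined by a backward recursion over the whole line. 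Raising and discharging this via reverse induction on $i$, together with the observation $\sigma_j\ge\hat\sigma\Rightarrow\eta_j\le\eta$, is the correct fix and goes beyond what the paper writes down.

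One imprecision in your closing sentence: you attribute the fact that truncating at $b_\eta$ "lowers none of these $\sigma_j$ below $\hat\sigma$" to the first bullet. Monotonicity under appending only gives $\sigma_j^{\hat\hyperb}\le\sigma_j^{\hyperb}$ — it supplies an upper bound on $\sigma_j^{\hat\hyperb}$, not a lower bound. What actually keeps $\sigma_j^{\hat\hyperb}$ from dropping below $\hat\sigma$ is the inductive hypothesis you just invoked: since $\sigma_j$ is a function of $\{b_j,\ldots,b_{\eta_j}\}\subseteq\hat\hyperb$, truncation at $b_\eta$ does not change $\sigma_j$ at all, so $\sigma_j^{\hat\hyperb}=\sigma_j^{\hyperb}\ge\hat\sigma$ and $\eta$ stays put. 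Once this equality is in hand the optimal rule makes identical decisions at $b_i,\ldots,b_{\eta-1}$ on $\hyperb$ and on $\hat\hyperb$, so (\ref{eq:grve}) has the same solution, which is the conclusion. The argument is correct; only the cited justification for that step should be the inductive hypothesis, not monotonicity.
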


\begin{proof}
\squishlist
    \item The first property holds because the optimal policy stops at the additional boxes only if they yield a higher expected payoff. Consequently, appending boxes at the end of $\hyperb$ can only increase the expected payoff for any given state. As a result, this operation leads to a nondecreasing generalized reservation value.
    \item The second property is due to that the optimal stopping time will stop at $(\eta-1)$-th box, hence the GRV doesn't depend on any boxes starting from $\eta$.
\squishend
\end{proof}

\begin{lemma}\label{lem:app:unique_fair_cap}
The smallest solution to (\ref{eq:grve}) exists, and hence Definition \ref{def:GRV} is well defined.
Given current state $(x, s^{i-1}, i)$, if the generalized reservation value $z_i = x$, then there exists some optimal stopping time $\tau^*(x, s^{i-1}, i) \geq i$. 
\end{lemma}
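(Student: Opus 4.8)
The plan is to prove the two claims of Lemma~\ref{lem:app:unique_fair_cap} in sequence, leaning on the monotonicity and Lipschitz structure of $H_i$ established in Lemma~\ref{lem:app:phi_prop}. First I would show the smallest solution to (\ref{eq:grve}) exists. Observe that the left-hand side of (\ref{eq:grve}), viewed as a function of the candidate threshold $\sigma$, equals $H_i(\sigma, s^{i-1}) = \Phi(\sigma, s^{i-1}, i) - \sigma$ once we note that at the optimal stopping time we have $\Phi(\sigma, s^{i-1}, i) = \max\{\sigma, -c_i + \E_{s^i}[\Phi(\max\{\sigma, s^i\}, s^i, i+1)]\}$, so that $(\max_{j=i}^{\tau^*} R_j - \sigma)_+ - \sum c_j$ has expectation $\Phi(\sigma,s^{i-1},i) - \sigma = H_i(\sigma, s^{i-1})$. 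By Lemma~\ref{lem:app:phi_prop}, $H_i(\cdot, s^{i-1})$ is nonnegative, $1$-Lipschitz (hence continuous), and monotone non-increasing; moreover for $\sigma$ at least the largest value in $V$ the continuation term $-c_i + \E[\cdots]$ cannot exceed $\sigma$, so $H_i$ eventually hits $0$. Continuity plus the fact that the zero set $\{\sigma : H_i(\sigma, s^{i-1}) = 0\}$ is nonempty and closed (it is $[z_i, \infty)$ by monotonicity) gives that this set has a minimum, which is precisely $z_i$; hence Definition~\ref{def:GRV} is well defined.

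Second I would prove that if $z_i = x$ then some optimal stopping time $\tau^*(x, s^{i-1}, i)$ satisfies $\tau^*(x,s^{i-1},i) \ge i$, i.e., it is optimal to open box $b_i$. By Bellman optimality, $\Phi(x, s^{i-1}, i) = \max\{x,\; -c_i + \E_{s^i}[\Phi(\max\{x, s^i\}, s^i, i+1)]\}$, and stopping immediately is optimal iff the first argument achieves the max, while continuing (opening $b_i$) is optimal iff the second argument does. Since $z_i = x$ is a zero of $H_i$, we have $\Phi(x, s^{i-1}, i) = x$, so the first argument of the max equals $\Phi(x,s^{i-1},i)$; I need to show the second argument also equals $\Phi(x, s^{i-1}, i) = x$, which would make continuing one more step (then proceeding optimally) an optimal action. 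This should follow because $H_i$ is non-increasing and continuous with $H_i(x, s^{i-1}) = 0$, while for $\sigma$ slightly below $x$ we have $H_i(\sigma, s^{i-1}) > 0$ (as $x = z_i$ is the \emph{smallest} zero), i.e., $-c_i + \E_{s^i}[\Phi(\max\{\sigma, s^i\}, s^i, i+1)] > \sigma$; taking $\sigma \uparrow x$ and using the $1$-Lipschitz continuity of $\sigma \mapsto -c_i + \E_{s^i}[\Phi(\max\{\sigma, s^i\}, s^i, i+1)]$ yields $-c_i + \E_{s^i}[\Phi(\max\{x, s^i\}, s^i, i+1)] \ge x$. Combined with the reverse inequality from $\Phi(x,s^{i-1},i) = x$ and Bellman, we get equality, so opening $b_i$ and then following the optimal continuation $\tau^*(\max\{x,s^i\}, s^i, i+1)$ is an optimal stopping time, which is $\ge i$.

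The main obstacle I anticipate is the boundary/tie-breaking subtlety in the second part: showing that equality (not just one of the two inequalities) holds at $\sigma = x = z_i$, so that "continue" is genuinely among the optimal actions rather than strictly suboptimal. This requires carefully passing from the strict inequality valid for $\sigma < z_i$ to a non-strict one at $\sigma = z_i$ via continuity, and then invoking $\Phi(z_i, s^{i-1}, i) = z_i$ to close the loop — the argument is clean but must be done in the right order, and it is where the "smallest solution" choice in Definition~\ref{def:GRV} is actually used. Everything else is a direct application of Lemma~\ref{lem:app:phi_prop} and the Bellman recursion.
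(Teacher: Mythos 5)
Your proof is correct and, at its core, takes the same route as the paper: both rely on the nonnegativity, monotonicity, and $1$-Lipschitz continuity of $H_i$ from Lemma~\ref{lem:app:phi_prop} and pass to the limit $\sigma \uparrow z_i$ to conclude that opening $b_i$ is (weakly) optimal at $x = z_i$. The only cosmetic difference is packaging: you phrase the limiting step directly through the Bellman decomposition $\Phi(x,s^{i-1},i) = \max\{x,\,-c_i + \E_{s^i}[\Phi(\max\{x,s^i\},s^i,i+1)]\}$ and the Lipschitz continuity of the continuation map, whereas the paper compares $\Phi$ against the best ``must-open-$b_i$'' strategy $\tilde\tau$ via the intermediate comparator $\tau^*(z_i-\epsilon,s^{i-1},i)$ and lets $\epsilon\to 0$ — same idea, different bookkeeping.
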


\begin{proof}
Given any state $(x, s^{i-1}, i)$, consider function
\begin{align*}
    H_i(x, s^{i-1}) = \Phi (x, s^{i-1}, i) - x
\end{align*}
$H_i(x)$ is 1-Lipschitz and monotone non-increasing by lemma~\ref{lem:app:phi_prop}. Since $ H_i(0, s^{i-1}) = \Phi (0, s^{i-1}, i) \ge 0$ and $H_i(s_K, s^{i-1}) = 0$, there exist some $z_i \in S$, such that $H_i(z_i, s^{i-1}) = 0$. This proves the existence of $z_i$.\\

\noindent Now, we show that if $x = z_i$ is positive, then there exists an optimal stopping rule that proceeds to open $b_i$. 
Fix any $i$ such that $z_i > 0$. Let $\wt{\tau}$ be the best strategy among all strategies that open $b_i$. To show that $\wt{\tau}$ is indeed optimal, we show that 
\[
\delta = \Phi(z_i, s^{i-1}, i) -\Phi^{\tilde{\tau}}(z_i, s^{i-1}, i) = 0
\]
Assume towards contradiction that $\delta > 0$. We have
\begin{align*}
0 < \delta & = \Phi(z_i,s^{i-1}, i) - \Phi^{\wt{\tau}}(z_i,s^{i-1}, i) \\
& \le \Phi(z_i,s^{i-1}, i) - \Phi^{\tau^*(z_i-\epsilon, s^{i-1}, i)}(z_i,s^{i-1}, i) \\
&  = (\Phi(z_i,s^{i-1}, i)- \Phi(z_i - \epsilon,s^{i-1}, i)) + (\Phi(z_i - \epsilon,s^{i-1}, i) - \Phi^{\tau^*(z_i-\epsilon, s^{i-1}, i)}(z_i,s^{i-1}, i)) \\
& \le 2\epsilon
\end{align*}
where we used Lipschitzness of $\Phi$ for the last inequality, and the first inequality comes from the fact that $\tau^*(z_i-\epsilon, s^{i-1}, i)$ is a sub-optimal policy that opens $b_i$. We have $\tau^*(z_i-\epsilon, s^{i-1}, i) \geq i$ since $z_i$ is the smallest such that $H_i(z_i, s^{i-1}) = 0$, this implies that $H_i(z_i - \epsilon, s^{i-1}) = \Phi^{\tau^*(z_i-\epsilon, s^{i-1}, i)}(z_i-\epsilon, s^{i-1}, i) - (z_i-\epsilon) > 0$ meaning the optimal policy will accumulate more reward than current best, thus it has to open $b_i$. As $\epsilon \rightarrow 0$, we get a contradiction. On the other hand, $H_i(z_i, s^{i-1}) = \Phi(z_i,s^{i-1}, i) - z_i = 0$ implies that the strategy that stops at $b_{i-1}$ is also optimal. Thus, $z_i$ is indeed the value for which we are indifferent between stopping and proceeding optimally.
\end{proof}

\subsection{Payoff Table}\label{subsec:app:exp_payoff_table}

\begin{algorithm}[!ht]\caption{Expected Equivalent Reward Computation, Single Hyperbox}\label{alg:app:gw_1path}
\begin{algorithmic}[1]
 \Require Ordered set of boxes $\{b_1,\ldots, b_n\}$, probing cost $\{c_1,\ldots, c_n\}$, distributions of the random payoff of boxes
 \State Initialize $z \gets 0$
 \For{$x \in S$} \Comment{Base case: filling in $T(\cdot, \cdot, n)$}
    \For{$s \in S$}
    \State $z \gets \sum_{y\in S} (\max\{x, y\}- c_n)\cdot \Pr(R_n = y)$
    \If{$z > x$}
        \State $\Phi (x, s, n) = z$, $\mathds{1}(x, s, n) = 1$ 
    \Else
        \State $\Phi (x, s, n) = x$, $\mathds{1}(x, s, n) = 0$
    \EndIf
    \State $\frm(x, s, n) = \mathds{1}(x, s, n) \cdot R_n$, and $\frc(x, s, n) = \mathds{1}(x, s, n) \cdot c_n$
    \EndFor
\EndFor
\For{$i = n-1, \cdots, 1$} \Comment{Filling in $T(\cdot, \cdot, i)$ for all $i = n-1, \cdots, 1$}
    \For{$x \in S$}
        \For{$s \in S$}
        \State $z \gets \E \Big[\sum_{y\in S} \Big(\max\Big\{x, y, \frm(x, s_y, i+1)\Big\}- c_j - \frc(x, s_y, i+1)\Big)\cdot \Pr(R_i = y)\Big]$ where $s_y$ is the state that gives $R_i$ realization $R_i = y$
        \If{$z > x$}
        \State $\Phi (x, s, i) = z$, $\mathds{1}(x, s, i) = 1$
        \Else
        \State $\Phi (x, s, i) = x$, $\mathds{1}(x, s, i) = 0$
        \EndIf
        \State Calculate $\frm(x, s, i)$ and $\frc(x, s, i)$ as follows: with probability $\Pr(R_i = y)$, $\frm(x, s, i)$ is $\mathds{1}(x, s, i) \cdot \max\{y, \frm(x, s_y, i+1)\}$ and $\frc(x, s, i)$ is $\mathds{1}(x, s, i) \cdot (c_i + \frc(x, s_y, i+1))$
        \EndFor
    \EndFor
\EndFor
\State \textbf{Return} $\Phi(x, s, i)$ for all $x\in S$, $s\in S$ and $i\in [n]$
\end{algorithmic}
\end{algorithm}

\begin{lemma}[Efficient Computation of Payoff Table]\label{lem:app:computing_rv}
    There is an efficient algorithm that computes $\phi(x,s,i)$ for all $i$, $x$ and $s$. 
\end{lemma}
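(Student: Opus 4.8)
The plan is to compute the table by \emph{backward dynamic programming} along the single line, exactly as in Algorithm~\ref{alg:app:gw_1path}, exploiting the Bellman recursion
\[
\Phi(x, s, i) \;=\; \max\Big\{\,x,\ -c_i + \E_{R_i\mid R_{i-1}=s}\big[\Phi(\max\{x,R_i\},\,R_i,\,i+1)\big]\Big\}
\]
derived just before Definition~\ref{def:GRV}. The first observation is that, because every reward has support in the common finite set $V=\{v_1,\dots,v_k\}$ and the running maximum $x$ is itself always some value in $V\cup\{0\}$, the relevant state $(x,s,i)$ ranges over a set of size $O(nk^2)$, so it suffices to fill in a table of that size. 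I would process the layers $i = n, n-1, \dots, 1$.

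For the base case $i=n$ there are only two feasible actions at any state: stop (utility $x$) or open $b_n$ (expected utility $-c_n+\E_{R_n\mid R_{n-1}=s}[\max\{x,R_n\}]$), and $\Phi(x,s,n)$ is the larger of the two; the conditional law of $R_n$ given $R_{n-1}=s$ is just the row of the transition matrix $P_n$ applied to $\vp_{n-1}=\ve_s$. For $i<n$ I would apply the recursion above, reading off $\Phi(\cdot,\cdot,i+1)$ from the already-computed layer; simultaneously I would record the open/stop indicator $\mathds{1}(x,s,i)\in\{0,1\}$ and maintain the two random variables $\frm(x,s,i)$ and $\frc(x,s,i)$ that encode, respectively, the extra reward and extra cost accrued by the optimal continuation from state $(x,s,i)$. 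These satisfy their own one-step recursions in terms of $\mathds{1}(x,s,i)$ and the layer-$(i+1)$ quantities, so they are obtained for free, and they are precisely what later lets us contract a hyperbox into a single random-cost box (Lemma~\ref{lem:equ_box}).

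Each of the $O(nk^2)$ states costs $O(k)$ work --- a sum over the $k$ possible realizations of $R_i$ --- so the whole table is produced in $O(nk^3)$ time and $O(nk^2)$ space, i.e.\ in $\poly(n,k)$. Correctness is then an induction on $i$ from $n$ down to $1$: given that layer $i+1$ is correct, the layer-$i$ value equals $\max_\tau \Phi^\tau(x,s,i)$ by Bellman's principle of optimality applied to Definition~\ref{def:exp_reward}. The one step that requires care --- and the main obstacle --- is justifying that the low-dimensional state $(x, s^{i-1}, i)$ is a \emph{sufficient statistic} for the continuation problem, i.e.\ that conditioning on the entire observed history of $b_1,\dots,b_{i-1}$ yields the same continuation value as conditioning only on the last realized reward $s^{i-1}$. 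This is exactly the conditional-independence form of the Markov property in Problem~\ref{prob:mar_pandora}: given $R_{i-1}$, the future rewards $R_i, R_{i+1}, \dots$ are independent of $R_1,\dots,R_{i-2}$, so neither the transition used inside the expectation nor the laws of the continuation random variables $\frm,\frc$ depend on anything but $s^{i-1}$. Once this is established the recursion is exact, and the binary-search extraction of the GRV from the completed $\Phi$ table is handled separately in the proof of Theorem~\ref{thm:GMR_path}.
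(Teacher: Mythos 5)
Your proposal matches the paper's own proof: the paper (Algorithm~\ref{alg:app:gw_1path} and the surrounding text in Appendix~\ref{sec:app:details_exact_opt}) fills in exactly the three-dimensional table $T(x,s^{i-1},i)$ by backward dynamic programming, stores the same four quantities $\Phi$, $\mathds{1}$, $\frm$, $\frc$ at each state, and appeals to the Bellman recursion together with the observation that all random variables involved have $\poly(k,n)$-size support. Your version is somewhat more explicit than the paper's on two points that the paper leaves implicit — the exact state-space and per-state work bounds (giving $O(nk^3)$ time and $O(nk^2)$ space rather than just ``polynomial''), and the justification that $(x,s^{i-1},i)$ is a sufficient statistic via the conditional-independence form of the Markov property in Problem~\ref{prob:mar_pandora} — but these are elaborations of the same argument, not a different route.
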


\begin{proof}
\noindent Now we give an efficient algorithm for computing generalized reservation value. In fact, we will give an algorithm that uses dynamic programming to compute $\Phi (x, s^{i-1}, i)$ for all triples $(x, s^{i-1}, i)$. Then, given the current state of the algorithm $(x, s^{i-1}, i)$, the reservation value $z_i$ for box $i$ is the smallest $x$ in the table where $\Phi (x, s^{i-1}, i) = x$.\\

\noindent Denote by $T(x, s^{i-1}, i)$ our three dimensional dynamic programming table. Each entry $T(x, s^{i-1}, i)$ will store the following information:
\begin{enumerate}
    \item Expected future reward: $\Phi (x, s^{i-1}, i)$
    \item Indicator: $\mathds{1}(x, s^{i-1}, i)$ indicating whether the optimal policy will open $b_i$ in this state
    \item The distribution of future random max reward\footnote{The randomness comes from both random stopping time $\tau^*$ and correlated random variables $R_i$'s,}: $\frm(x, s^{i-1}, i):=\max_{j=i}^{\tau^\star(x, s^{i-1}, i)} R_j$ where $R_j$'s are the correlated random rewards for miniboxes that are yet to be opened given that the algorithm is at state $(x, s^{i-1}, i)$. 
    \item The distribution of future random cost\footnote{The randomness comes from $\tau^*$ being a random stopping time.}: $\frc(x, s^{i-1}, i):=\sum_{j=i}^{\tau^\star(x, s^{i-1}, i)} c_j$
\end{enumerate}

\noindent Algorithm \ref{alg:app:gw_1path} describes how to fill in the dynamic programming table. Since all random variables that appear in Algorithm \ref{alg:app:gw_1path} has finite support with size bounded by $\poly (K,n)$, and any $\max$ operation for random variables only has three or less arguments, it follows that algorithm \ref{alg:app:gw_1path} takes polynomial time and space.
\end{proof}

\section{More Details from Multiple Lines}\label{sec:app:multi_line}

\subsection{Equivalent Box}\label{subsec:app:equ_box_hyperbox}

\begin{lemma}[Equivalent Single Box for Hyperbox]\label{lem:app:equ_box}
For a stopping time $\tau$ and a hyperbox $\hyperb := \{b_1, \ldots, b_n\}$, there exists a box $\hat{b}$ with random cost (Def.~\ref{def:pandora_box_random_cost}) such that following $\tau$ over $\hyperb$ has the same utility distribution as $\hat{b}$.
\end{lemma}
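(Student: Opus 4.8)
The plan is to realize $\hat b$ as the pushforward of the hyperbox's reward randomness under the map that records the (max reward, total cost) pair produced by running $\tau$. Since $\hyperb = \{b_1,\dots,b_n\}$ is a directed line, the precedence constraint forces the probing order to be $b_1, b_2, \dots$, so a stopping time $\tau$ is just a (possibly randomized) rule, adapted to the filtration generated by $R_1, R_2, \dots$, that outputs an index $\tau \in \{0,1,\dots,n\}$ (with $\tau = 0$ meaning ``open nothing''). First I would define the two random variables
\[
\hat R := \max_{1 \le j \le \tau} R_j, \qquad \hat c := \sum_{1 \le j \le \tau} c_j,
\]
with the convention $\hat R = \hat c = 0$ when $\tau = 0$. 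Both have finite support, since each $R_j$ does and $\tau \le n$.

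Next I would let $\Gamma$ be the joint law of the pair $(\hat R, \hat c)$ on $\R_+ \times \R_+$, and define $\hat b$ to be the Pandora's box with random cost (Def.~\ref{def:pandora_box_random_cost}) whose reward $R$ and opening cost $c$ are jointly drawn from $\Gamma$. This is a legal instance of Def.~\ref{def:pandora_box_random_cost}: $\Gamma$ is a known, finitely supported distribution, obtained by pushing the known joint law of $(R_1,\dots,R_n)$ (itself determined by $\D_1$ and the transition matrices, together with the internal randomness of $\tau$) forward through $(r_1,\dots,r_n)\mapsto(\max_{j\le\tau} r_j,\ \sum_{j\le\tau} c_j)$. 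The correlation between reward and cost required by Def.~\ref{def:pandora_box_random_cost} is automatically encoded, since both coordinates are deterministic functions of the same realization and of $\tau$.

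It then remains to check the utility identity. Opening $\hat b$ yields net utility $R - c$, which by construction is distributed as $\hat R - \hat c = \max_{j \le \tau} R_j - \sum_{j \le \tau} c_j$; and the right-hand side is, by definition, exactly the (random) payoff of running $\tau$ over $\hyperb$. Hence the two utility distributions coincide, which is the claim.

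I do not expect a genuine obstacle here — the statement is essentially a change-of-variables/bookkeeping fact. The only points requiring care are: (i) handling a randomized stopping time, where $\Gamma$ must be the joint law over both the reward realizations and the internal randomness of $\tau$; (ii) being explicit that we claim only equality of the \emph{utility distribution}, not that $\hat b$ reproduces the full state process of the hyperbox — the stronger fact actually used downstream (e.g.\ that the GRV of $\hyperb$ equals that of $\hat b$) follows because the GRV of a hyperbox depends on it only through the distribution of (future max reward, future cost), which is precisely what $\Gamma$ records; (iii) when $\tau$ is allowed to depend on an external competitive reward $x$, the same argument applies verbatim for each fixed $x$, yielding a box $\hat b$ whose $\Gamma$ may depend on $x$, which is the form needed in the multi-line and forest contractions.
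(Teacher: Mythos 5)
Your proposal is correct and matches the paper's argument: both construct $\hat b$ by pushing the joint law of $(R_1,\dots,R_n)$ forward through $(r_1,\dots,r_n)\mapsto(\max_{j\le\tau}r_j,\ \sum_{j\le\tau}c_j)$, so that the box's $(R,c)$ pair is by definition coupled exactly as the hyperbox's (max reward, total cost). Your added care about randomized $\tau$, the $\tau=0$ convention, and the remark that only the utility distribution (not the full state process) is preserved are all fine and consistent with how the lemma is used downstream.
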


\begin{proof}
We define $R = (R_1, \ldots, R_n)$ as a realization of the joint reward distribution in hyperbox $\hyperb$. For each distinct realization of $R$, the stopping time uniquely determines the payoff and cumulative cost of the hyperbox.

To construct the reward and cost distribution of equivalent single box, we define a coupling between the realizations of $R$ and both its reward and cost. When the joint reward is $R$, we assign the single box's reward as $\max_{i=1}^{\tau} R_i$ and the single box's cost as $\sum_{i=1}^{\tau} c_i$, where $\tau$ is the stopping time. The probability of each outcome matches the probability of $R$ under the original hyperbox’s joint distribution.

\end{proof}

From our construction of this box, we get the following immediate lemma that the generalized reservation value remains well-defined for the box with random cost, even when the boxes inside have random costs.

\begin{lemma}[Extending GRV to hyperboxes with random cost]\label{lem:app:GRV_multi_random}
Given a Markovian hyperbox $\hyperb := \{b_1, \ldots, b_n\}$, where each box now have stochastic cost that is correlated with the reward distribution, the GRV of each box with random cost is well-defined, and can be calculated in polynomial time. 

Moreover, if the GRV $\hat{\sigma}$ of the boxes alone, i.e., the generalized reservation value if there is only one box with random cost, satisfies: 
\begin{align*}
    \hat{\sigma} (b_1) \geq \hat{\sigma} (b_2 | R_1) \geq \ldots, \geq \hat{\sigma} (b_n | R_{n-1})
\end{align*}
for any realized reward $R_1, \ldots, R_n$, then the GRV of $b_i$ in the hyperbox only depends on $b_i$ itself. 
\end{lemma}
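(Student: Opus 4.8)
The plan is to mirror the single‑line development of Section~\ref{sec:1_hyperbox} and Appendix~\ref{sec:app:details_exact_opt} almost verbatim, replacing the deterministic cost $c_j$ by the random cost of $b_j$ drawn jointly with $R_j$ from $\Gamma$, and then to run a short backward induction for the conditional‑locality claim. For the first claim, I would re‑instantiate Definition~\ref{def:exp_reward} with random costs, defining $\Phi^\tau(x,s^{i-1},i)$ as the expectation of $\max\{x,\max_{j=i}^{\tau}R_j\}-\sum_{j=i}^{\tau}c_j$ over the joint realizations of all $(R_j,c_j)$ and over $\tau$. The structural fact to re‑prove is the analogue of Lemma~\ref{lem:app:phi_prop}: $\Phi(\cdot,s^{i-1},i)$ is $1$‑Lipschitz and non‑decreasing, and $H_i(x,s^{i-1}):=\Phi(x,s^{i-1},i)-x$ is non‑negative, $1$‑Lipschitz and non‑increasing. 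The arguments transfer because the cost term $\sum_{j=i}^{\tau}c_j$ does not depend on the incoming maximum $x$: couple all the randomness and observe that raising $x$ only changes the reward term and can only make the optimal policy stop (weakly) earlier, exactly as in the deterministic proof. Existence of a smallest solution to the GRV equation then follows as in Lemma~\ref{lem:app:unique_fair_cap}, since $H_i(0,\cdot)=\Phi(0,\cdot,i)\ge 0$ while $H_i(v_k,\cdot)=0$ (opening boxes cannot push the maximum above $v_k$ yet incurs non‑negative expected cost), so the intermediate‑value property plus monotonicity pin down the smallest zero; the stopping characterization of Theorem~\ref{thm:unique_fair_cap} transfers along the same lines.

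For polynomial time and space, I would extend the dynamic program of Algorithm~\ref{alg:app:gw_1path}. Each table entry $T(x,s^{i-1},i)$ already carries the distribution of the future random cost $\frc$, so the only change is that the per‑box cost is now sampled jointly with $R_i$; the backward recursion still uses linearity of expectation, every $\max$ still has at most three arguments, and every random variable appearing in the recursion has finite support of size $\poly(k,n)$ (the cost support contributes at most a polynomial factor). The GRV of each box is then read off as the smallest $x$ with $\Phi(x,s^{i-1},i)=x$, located by binary search using monotonicity of $H_i$. Hence the full table, and every GRV, is computable in polynomial time and space.

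For the second claim, I would induct on $i$ from $n$ down to $1$, showing $\sigma_i(s^{i-1},i)=\hat\sigma(b_i\mid R_{i-1})$ whenever the hypothesis $\hat\sigma(b_1)\ge\hat\sigma(b_2\mid R_1)\ge\cdots\ge\hat\sigma(b_n\mid R_{n-1})$ holds. The base case $i=n$ is immediate, as there is nothing to append. For the step, fix $x\ge\hat\sigma(b_i\mid R_{i-1})$ and consider any policy that opens $b_i$: after observing $R_i$ the running maximum is $\max\{x,R_i\}\ge x\ge\hat\sigma(b_i\mid R_{i-1})\ge\hat\sigma(b_{i+1}\mid R_i)=\sigma_{i+1}$, the middle inequality being the hypothesis and the last equality the induction hypothesis, so by the (extended) stopping characterization of Theorem~\ref{thm:unique_fair_cap} the optimal continuation on $\{b_{i+1},\dots,b_n\}$ stops — opening $b_i$ necessarily opens exactly $b_i$. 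Therefore $\Phi(x,s^{i-1},i)=\max\{x,\ \E[\max\{x,R_i\}-c_i]\}=x$, since for $x\ge\hat\sigma(b_i\mid R_{i-1})$ the single‑box continuation value $\E[\max\{x,R_i\}-c_i]=x+\E[(R_i-x)_+]-\E[c_i]$ is at most $x$. Thus stopping is optimal at $x$, giving $\sigma_i(s^{i-1},i)\le\hat\sigma(b_i\mid R_{i-1})$, while the reverse inequality is the "appending boxes is monotone" property (the random‑cost analogue of Lemma~\ref{lem:app:properties_GRV}); equality follows, and $\hat\sigma(b_i\mid R_{i-1})$ involves only $b_i$.

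\textbf{Main obstacle.} The substantive work is confined to the first claim: verifying that none of the Lipschitz/monotonicity arguments secretly relied on determinism of the costs. The safeguard is that along any fixed realization path the incurred cost is a deterministic function of the opened prefix and of the realized rewards, and is independent of the incoming maximum $x$, so the coupling arguments used for $\Phi$ and $H_i$ carry over unchanged; the polynomiality bookkeeping (finite supports, $\le 3$‑ary maxima) is then routine. Once this is in place, the second claim is a clean backward induction that chains the hypothesis $\hat\sigma(b_i\mid R_{i-1})\ge\hat\sigma(b_{i+1}\mid R_i)$ through the stopping characterization.
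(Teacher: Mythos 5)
Your proof is correct and follows the same two-step strategy as the paper: (i) transfer the $\Phi$/$H$ Lipschitz--monotonicity machinery of Lemma~\ref{lem:app:phi_prop} and Lemma~\ref{lem:app:unique_fair_cap} to random costs, and (ii) collapse the hyperbox GRV to the single-box GRV under the nonincreasing-$\hat\sigma$ hypothesis. The paper compresses both steps into two sentences, in particular handling (ii) by citing bullet~3 of Lemma~\ref{lem:app:properties_GRV}, whose premise (``$\eta=i$ with probability $1$'') is phrased in terms of the \emph{hyperbox} reservation values $\sigma_i$, not the \emph{single-box} values $\hat\sigma$ appearing in the hypothesis. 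Your backward induction from $n$ down to $1$, establishing $\sigma_i(s^{i-1},i)=\hat\sigma(b_i\mid R_{i-1})$ at every index via the stopping characterization plus monotonicity-under-appending, is precisely the bridge that makes the paper's citation applicable, so you are supplying a step the paper leaves tacit rather than taking a different route. For part (i), your observation that the incurred cost along any fixed realization path is independent of the incoming maximum $x$ (so that raising $x$ only ever truncates the optimal stopping time) is the correct reason the coupling arguments survive verbatim, and your note that $H_i(v_k,\cdot)=0$ rests on nonnegativity of the random costs correctly flags the one place the extension could otherwise break; that nonnegativity does hold for the costs produced by the contraction of Lemma~\ref{lem:app:equ_box}, so there is no gap.
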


\begin{proof}
    Notice that $\phi$ and $H$ function remains well-defined and the properties of those remains valid, then we get that GRV is still well defined for hyperboxes with boxes of random cost. 

    The second part of the lemma follows from lem.~\ref{lem:app:properties_GRV}.
\end{proof}

From Thm.~\ref{thm:ML_MPB}, we already showed that the GRV is optimal for multi-line cases, this allows us to show how to use one random box to mimic the payoff distribution of Markovian Pandora's box with multi line constraint.

\begin{lemma}[Equivalent Boxes for Multi Lines]\label{lem:app:equ_box_multi_line}
    Given an instance of Markovian Pandora's box with multi-line precedence graph, there exist a box with random cost, such that the reward and cost of the random box is the same as the distribution of maximum reward and culmulative cost of the optimal probing strategy over the Pandora's box instance. 
    
    In addition, the GRV of the box is the maximum GRV of the available boxes when no boxes are openend.
\end{lemma}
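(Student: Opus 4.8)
The plan is to assemble the statement from the two ingredients already available: the optimality of GRV-ordering for multi-line instances (Thm.~\ref{thm:ML_MPB}) and the single-hyperbox collapse (Lem.~\ref{lem:app:equ_box}). First I would fix the optimal policy $\pi^*$ guaranteed by Thm.~\ref{thm:ML_MPB}, which repeatedly probes the available hyperbox of highest current GRV and stops inside the hyperbox it is currently exploring according to the single-line stopping rule. Since each line is a Markov chain and distinct lines are mutually independent, the joint law of the reward vector $(R_i)_{i\in\B}$ is well defined, and $\pi^*$ is a deterministic function of the realized rewards; hence $\mathcal{O}(\pi^*)$, the payoff $\max_{i\in\mathcal{O}(\pi^*)}R_i$, and the cumulative cost $\sum_{i\in\mathcal{O}(\pi^*)}c_i$ are well-defined random variables with a joint distribution.

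Next I would construct $\widehat{b}$ by exactly the coupling used in the proof of Lem.~\ref{lem:app:equ_box}, now applied to the whole multi-line instance run under $\pi^*$: draw a realization $R=(R_i)_{i\in\B}$ of the joint reward vector, and declare the reward of $\widehat{b}$ to be $\max_{i\in\mathcal{O}(\pi^*)}R_i$ and its (correlated) cost to be $\sum_{i\in\mathcal{O}(\pi^*)}c_i$, assigning this outcome probability $\Pr[R]$. By construction the pair $(R_{\widehat{b}},c_{\widehat{b}})$ has precisely the joint law of (maximum reward, cumulative cost) under the optimal policy, which is the first assertion; because $\pi^*$ already incorporates optimal stopping, no further argument is needed here, and Lem.~\ref{lem:app:GRV_multi_random} ensures the $\Phi$/$H$ machinery survives the passage to a random-cost box.

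For the GRV claim, write $\sigma^0_{n^*} := \max_j \sigma^0_j$ for the largest initial hyperbox GRV, and let $\Phi(x)$ denote the optimal expected payoff of exploring the multi-line instance when the searcher already holds an outside option $x$ (no box opened). Since $\widehat{b}$ reproduces the payoff distribution of $\pi^*$, we have $\Phi_{\widehat b}(x)=\Phi(x)$, so by Lem.~\ref{lem:app:phi_prop} the GRV of $\widehat{b}$ is the smallest $x$ with $\Phi(x)=x$. I would then argue two cases. If $x<\sigma^0_{n^*}$: the first box of hyperbox $n^*$ has GRV $\sigma^0_{n^*}>x$, so by Thm.~\ref{thm:unique_fair_cap} applied inside that single line, probing it strictly increases the expected payoff over stopping, whence $\Phi(x)>x$. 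If $x\ge \sigma^0_{n^*}$: then $x$ dominates the current GRV of every available box, so by Thm.~\ref{thm:ML_MPB} any beneficial move would have to be probing the highest-GRV hyperbox, but its GRV is $\le x$, so by Thm.~\ref{thm:unique_fair_cap} the optimal action is to stop and $\Phi(x)=x$. Combining, the smallest fixed point of $\Phi$ is exactly $\sigma^0_{n^*}$, i.e. the GRV of $\widehat{b}$ equals the maximum GRV among the available boxes at the start.

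The main obstacle is the $x\ge\sigma^0_{n^*}$ case: I must rule out that probing some hyperbox now, though locally unattractive, creates a later opportunity exceeding $x$. I expect to close this using the monotonicity in Lem.~\ref{lem:app:properties_GRV} (advancing within a line can only lower its subsequent GRVs) together with Thm.~\ref{thm:ML_MPB}, so that every Bellman branch ``probe some hyperbox, then continue optimally'' evaluates to at most $x$; equivalently one can phrase this as ``$x$ is a fixed point of the Bellman operator once $x$ exceeds all initial GRVs, and $\Phi$ is its smallest fixed point'' via Lem.~\ref{lem:app:phi_prop}. The remainder is bookkeeping already justified by Lem.~\ref{lem:app:GRV_multi_random}.
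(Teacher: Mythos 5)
Your construction of $\widehat{b}$ is identical to the paper's: realize the joint reward vector, record the resulting opened set $\mathcal{O}_s$, and set $R_{\widehat{b}}=\max_{i\in\mathcal{O}_s}R_i$, $c_{\widehat{b}}=\sum_{i\in\mathcal{O}_s}c_i$ with the induced probabilities. For the GRV claim, however, you take a genuinely different route: you argue through the smallest fixed point of the multi-line $\Phi$, whereas the paper simply observes that the GRV-highest hyperbox is probed first and then invokes the locality property of GRV from Lem.~\ref{lem:app:properties_GRV} (the third bullet: $\sigma_i$ depends only on the sub-hyperbox up to the first box with lower GRV). Your fixed-point framing is clean and arguably more illuminating, and the $x<\sigma^0_{n^*}$ case is correct as written.

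For the $x\ge\sigma^0_{n^*}$ case, though, your proposed patch does not work as stated. You invoke ``the monotonicity in Lem.~\ref{lem:app:properties_GRV} (advancing within a line can only lower its subsequent GRVs),'' but that lemma says something else entirely (GRVs are nondecreasing as boxes are \emph{appended}), and the parenthetical is false: conditional on a high realization of $R_{i-1}$, the GRV $\sigma_i(s^{i-1},i)$ can exceed $\sigma_{i-1}$ — indeed the example of Lem.~\ref{lem:app:sub_opt_PA} has exactly this flavor. Fortunately the worry you raise is already neutralized by the definition of GRV rather than by any monotonicity: $\sigma^0_j$ is defined via the \emph{optimal} stopping time within hyperbox $j$, so the downstream ``later opportunities'' inside $j$ are priced into $\sigma^0_j$ from the start, and by Thm.~\ref{thm:unique_fair_cap} probing $j$ is net-unprofitable whenever $x>\sigma^0_j$. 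Cross-hyperbox interactions cannot create new value either, because distinct lines carry mutually independent rewards, so observing a realization in one hyperbox leaves every other hyperbox's initial GRV unchanged while the outside option can only increase. Combined with Thm.~\ref{thm:ML_MPB} (probe in decreasing GRV order) this gives $\Phi(x)=x$ directly; the monotonicity detour should be replaced by this independence-plus-definition argument.
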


\begin{proof}
     Then, notice that for each realization $s$ of the joint distribution of the remaining boxes, the selected boxes $\mathcal{O}$ are uniquely determined. This implies that we can generalize our construction of the box $b$ with random cost to multi-lines. With probability $\Pr[s]$, 

     \begin{align*}
         c_b &= \sum_{i \in \mathcal{O}_{s}} c_i \\
         R_b & = \max_{i \in \mathcal{O}_{s}} R_i
     \end{align*}
     where $\mathcal{O}_{s}$ denoted opened boxes under realization $s$. 

     Notice that the hyperbox with maximum GRV will be ranked first in the optimal strategy, by Lem.~\ref{lem:app:properties_GRV}, we showed that the GRV of box $b$ should equal the maximum GRV of the available boxes when no boxes are openend.
    
\end{proof}

 Then, notice that for each realization $s$ of the joint distribution of the remaining boxes, the selected boxes $\mathcal{O}$ are uniquely determined. This implies that we can generalize our construction of the box with random cost to multi-lines: 

\subsection{Proof Details from Multiple Lines}\label{subsec:app:proof_detail}

\begin{lemma}[Probing Equivalent Boxes]\label{lem:app:three_box_lemma}
Given three Pandora's boxes $A, B, C$ with random reward and random cost, with the following property:
\squishlist
    \item For each hyperbox, the reward and the cost are correlated. 
    \item The reward and cost of $B$ is independent of the reward and cost of $A$.
    \item The reward and cost (hence payoff) of $C$ depends on both $A$ and $B$ in a Markovian fashion.
    \item The reservation value $\sigma (A) > \sigma (B)> \sigma(C)$, given any realizations of $A$ and $B$ \footnote{Notice that here $\sigma(A)$ is not a random variable, but $\sigma(C)$ is a random variable depends on $A$ and $B$. }, i.e., for any possible value of $x$ of $A$ and possible value of $y$ of $B$: 
    \begin{align*}
        \sigma (A) \geq \sigma (B) \geq [\sigma(C) | R_{A} = x, R_{B} = y]
    \end{align*}
    \item We have a precedence constraint that $A$ and $B$ must be probed before $C$. 
\squishend
then conditioned on any competitive reward $X$, the optimal probing strategy is to probe these boxes in decreasing order of their generalized reservation value, i.e., probe $A$ then $B$ then $C$. 
    
\end{lemma}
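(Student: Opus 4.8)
The plan is to prove the lemma by a direct exchange argument. Since the precedence constraint $A,B\prec C$ forces $C$ to be opened last among the three, the only precedence-respecting full orders are $A\prec B\prec C$ and $B\prec A\prec C$, so the proof splits into (i) showing that no strategy that stops early or interleaves with the rest of the instance can beat one of these two orders, and (ii) comparing the two orders directly. For (i): if the competitive reward $X$ satisfies $X\ge\sigma(A)$, then since $\sigma(A)\ge\sigma(B)\ge[\sigma(C)\mid R_A,R_B]$ and probing is optional, stopping immediately is optimal and both candidate orders agree, so I assume $X<\sigma(A)$; in that case any optimal strategy must open $A$ or $B$ first, because opening $C$ first is forbidden and stopping is strictly dominated (opening $A$ and then continuing optimally has value at least $\E[\max\{X,R_A\}-c_A]>X$, by the defining equation of $\sigma(A)$ and $X<\sigma(A)$).

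After opening $A$, the remaining instance is the two-box line $B\to C$; since $A\perp B$ keeps $\sigma(B)$ unchanged and $[\sigma(C)\mid R_A,\cdot]\le\sigma(B)$ still holds, this line is GRV-ordered, so by the single-line optimality of the GRV rule (Thm.~\ref{thm:unique_fair_cap}, extended to boxes with random cost via Lem.~\ref{lem:app:GRV_multi_random}) the best continuation is the GRV stopping rule on $B\to C$. Symmetrically, after opening $B$ the remaining line $A\to C$ is GRV-ordered, now using $\sigma(A)\ge[\sigma(C)\mid R_A,R_B]$ directly, with the GRV rule optimal. Consequently it suffices to prove $V_A\ge V_B$, where $V_A$ is the value of ``open $A$, then run the GRV rule on $B\to C$'' — which is exactly the value of the order $A\prec B\prec C$ with optimal stopping — and $V_B$ is the value of ``open $B$, then run the GRV rule on $A\to C$''.

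To compare $V_A$ and $V_B$ I would couple the randomness, drawing $(R_A,c_A)$ and $(R_B,c_B)$ independently and $(R_C,c_C)$ from its conditional law given $(R_A,R_B)$, and feeding the same realization to both strategies (legitimate because no distribution depends on the probing order). Under this coupling each strategy follows one of three trajectories — opening $\{A\}$, $\{A,B\}$, or $\{A,B,C\}$ (resp.\ $\{B\}$, $\{B,A\}$, $\{B,A,C\}$) — according to whether $\max\{X,R_A\}$ (resp.\ $\max\{X,R_B\}$) lies below $\sigma(B)$ (resp.\ $\sigma(A)$), and then whether $\max\{X,R_A,R_B\}$ lies below $[\sigma(C)\mid R_A,R_B]$. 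Crossing the two gives the nine cases from the sketch: several are vacuous (for instance, $V_A$ cannot stop after $\{A\}$ while $V_B$ reaches $C$, as that requires $\max\{X,R_A\}\ge\sigma(B)$ and $<\sigma(B)$ simultaneously), and the cases in which both orders open all of $\{A,B\}$ contribute equally because they then make the same decision about $C$ and end with the same $(\max\text{ reward},\ \text{total cost})$ pair. In each remaining divergent case I would bound the two payoffs directly, using $\sigma(A)\ge\sigma(B)$ together with the reservation equations $\E[(R_A-\sigma(A))_+-c_A]=0$ and $\E[(R_B-\sigma(B))_+-c_B]=0$, to show the $A$-first payoff dominates; aggregating the nine contributions by the law of total expectation yields $V_A\ge V_B$, which with the reduction step proves that $A\prec B\prec C$ is optimal.

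The main obstacle is precisely this case analysis, and within it the genuinely divergent cases — most sharply when $\sigma(B)\le X<\sigma(A)$, so that $A$-first always stops after a single probe while $B$-first may keep going: one must show $\E[\max\{X,R_A\}-c_A]$ still beats the value of any strategy that opens the lower-reservation box $B$ first, which forces the reservation-value ordering to be used quantitatively rather than just qualitatively, and, since the costs are random and correlated with the rewards, forces the estimates to be carried out through the GRV-defining equations rather than with plain expected rewards. A secondary point to verify with care is that each post-first-probe subproblem is genuinely a GRV-ordered line; this relies on the independence $A\perp B$ and on the mixture monotonicity of reservation values (if $[\sigma(C)\mid R_A,R_B]\le t$ for every $R_A$ then $[\sigma(C)\mid R_B]\le t$, because the mixture's curve $\E[(R_C-t)_+-c_C]$ is the average of the per-$R_A$ curves).
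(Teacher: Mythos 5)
Your proposal takes essentially the same route as the paper's proof: after dismissing $X\ge\sigma(A)$, the paper compares the two precedence-respecting orders $A\prec B\prec C$ and $B\prec A\prec C$ by tabulating expected payoffs over the nine cells obtained from classifying $R_A$ and $R_B$ against $\sigma(A)$ and $\sigma(B)$ (Tables~\ref{tab:reward_D_1}--\ref{tab:reward_D_2}), cancels the shared both-reach-$C$ cell, and concludes via the reservation equations $\E[c_A]=\Pr[R_A\ge\sigma_A]\bigl(\E[R_A\mid R_A\ge\sigma_A]-\sigma_A\bigr)$ (and similarly for $B$) together with the independence of $A$ and $B$. The only surface difference is that you phrase the nine cases as a coupling over strategy trajectories rather than over realized-value ranges, but the aggregation by the law of total expectation and the algebra that follows are the same.
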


\begin{proof}
It's sufficient to compare two strategies: 1)\ $D_1: B \to A \to C$, and 2)\ $D_2: A \to B \to C$. Notice that if $X > \sigma (A)$, then it's optimal to not probe any box, then the ordering of the boxes doesn't matter. WLOG, we may assume $X < \sigma (A)$. 

We first write down the expected reward according to ordering strategy $D_1$, if we use notation as in Table.~\ref{tab:reward_D_1}, we have that this reward is equivalent to: 
\begin{align*}
    &-\E[c_B] + \E[R_B | \pi_B] \Pr[\pi_B] \\
    &+\lambda_B [- \E[c_A] + \E[R_A | \pi_A \pi_A + \E[\max\{R_A, R_B, y\} | \lambda_B \cap \lambda_A] \lambda_A] + \E[\max\{R_B, y | \rho_A, \lambda_B\} \rho_A]] \\
    &+ \rho_B [ -\E[c_A] + \E[R_A | \pi_A] \pi_A + \E[\max\{y,R_A | \lambda_A\} \lambda_A + \E[\phi_C (\{R_A, R_B, X\}|\rho_A \cap \rho_B)]]] 
\end{align*}
Here, we abuse the notation $ \rho, \pi, \lambda $ to represent both events and their probabilities, with their meanings remaining unambiguous in the mathematical expressions. In addition, we use $E \cap F$ to denote the event that event $E$ and $F$ both happen. 

Similarly, using the notations in Table~\ref{tab:reward_D_2}, we have that the expected reward according to strategy $D_2$ is: 
\begin{align*}
    &-E[c_A] + \pi_A \E[R_A | \pi_A] + \lambda_A \E[\max\{X, R_A\} | \lambda_A] \\
    & + \rho_A [-\E[c_B] + \pi_B \E[R_B | \pi_B] + \lambda_B \E[\max\{ X, R_B\} | \lambda_B \cap \rho_A] + \rho_B \phi_C(\max\{X, R_B, R_A\}|\rho_A \cap \rho_B)]
\end{align*}

Notice that the last term of both payoffs can be cancelled out. Also notice that the reservation value for box $A$ and $B$ satisfies:
\begin{align*}
    \E[(R_B - \sigma_B)_{+} - c_B] = 0
\end{align*}
which simplifies to:
\begin{align*}
    \E[c_B] = \Pr[R_B \geq \sigma_B] \E[R_B | R_B \geq \sigma_B]
\end{align*}
Plugging in the appropraite values of $ \rho, \pi, \lambda $, we have:
\begin{align*}
    \E[c_A] &= \pi_A \E[R_A | \pi_A] - \pi_A \sigma_A \\
    \E[c_B] &= \pi_B \E[R_B | \pi_B] + \lambda_B \E[R_B | \lambda_B] - (\pi_B + \lambda_B) \sigma_B 
\end{align*}
Now, plugging the value of the expected cost and after simplification, we have that:
\begin{align*}
    & \E[\util (D_2) - \util (D_1)]\\
    &=  \pi_B \pi_A (\sigma_A - \sigma_B) \\
    &+ \pi_A \lambda_B [\E[R_B | \lambda_B] - \sigma_B] + \lambda_A \pi_B [\E[\max\{X, R_A|\lambda_A\} - \sigma_B]]]  \\
    & + \lambda_B \lambda_A [-\E[\max\{R_A, R_B. X\}| \lambda_B \cap \lambda_A] -\sigma_B \\
    & + \E[R_B | \lambda_B] + \E[\max\{y,R_A\} | \lambda_A]] \\
    & > \lambda_B \lambda_A [-\E[\max\{R_A, R_B. X\}| \lambda_B \cap \lambda_A] -\sigma_B + \E[R_B | \lambda_B] + \E[\max\{y,R_A\} | \lambda_A]]
\end{align*}
where the last inequality follows by the property that $\E[A | A \geq X] > X$. 

Finally, we show that the last term is positive. Notice that:
\begin{align*}
&\E[\max\{R_A, R_B,X\} | \lambda_B \cap \lambda_A] \\
& = \sigma_B + \E[\max \{ \max\{R_A, X\} - \sigma_B, R_B -\sigma_B \} | \lambda_B \cap \lambda_A] \\
& \leq \sigma_B + \E[\max\{R_A, X\} - \sigma_B + R_B - \sigma_B | \lambda_B \cap \lambda_A] \\
& = \E[R_B | \lambda_B ] + \E[\max \{X, R_A\} | \lambda_A]- \sigma_B
\end{align*}
where the last equality follows from the independence of box $A$ and $B$. Aggregating all of the above we have: 
\begin{align*}
    \E[\util (D_2) - \util (D_1)] > 0.
\end{align*}
\end{proof}

\begin{table*}[h]
\centering
\begin{tabular}{l|ccc}
\toprule
 & $R_A \geq \sigma_A$ & $\sigma_A \in (\sigma_B, \sigma_A)$ & $R_A \leq  \sigma_B $ \\
 & ${\color{blue} \pi_A}$ & ${\color{blue} \lambda_A}$ & ${\color{blue} \rho_A}$ \\
 \midrule
$R_B \geq \sigma_A$ & $\E[R_B | \pi_B]$ & $\E[R_B | \pi_B]$ & $\E[R_B | \pi_B]$ \\
${\color{blue} \pi_B}$, stop at $B$. & -$\E[c_B | \pi_B] $ & -$\E[c_B | \pi_B] $ & -$\E[c_B | \pi_B] $ \\
\midrule
$R_B \in (\sigma_B, \sigma_A)$ & $\E[R_A | \pi_A] $ & $\E[\max\{ R_A, R_B, X | \lambda_A \cap \lambda_B\}]$ & $\E[\max\{R_B, y\} | \lambda_B]$ \\
 ${\color{blue} \lambda_B}$, open $A$.  & -$\E[c_B |  \lambda_B] - \E[c_A | \pi_A]$ & $-\E[c_B | \lambda_B] - \E[c_A | \lambda_A]$ & $-\E[c_B | \lambda_B] - \E[c_A | \rho_A]$ \\
\midrule
$R_B \leq \rho_B$ & $\E[R_A | \rho_B]$ & $\E[ \max\{X, R_A \} | \lambda_A] $ & $\E[\phi_C (\{R_A, R_B, X\}|\rho_A \cap \rho_B)]$ \\
${\color{blue} \rho_B}$ & $-\E[c_B | \rho_B] - \E[c_A | \pi_A]$ & $-\E[c_B | \rho_B] - \E[c_A | \lambda_A]$ & $-\E[c_B | \rho_B] - \E[c_A | \rho_A]$ \\
\bottomrule
\end{tabular}
\caption{Table for Expected Payoff according to Strategy $D_1$. This table presents the expected total payoff for every possible joint distribution of box $A$ and $B$. In addition, $\rho_A = 1 - \pi_A - \lambda_A$.}
\label{tab:reward_D_1}
\end{table*}

\begin{table*}[h]
\centering
\begin{tabular}{l|ccc}
\toprule
 & $R_A \geq \sigma_A$ & $\sigma_A \in (\sigma_B, \sigma_A)$ & $R_A \leq  \sigma_B $ \\
 & ${\color{blue} \pi_A}$, stop at $A$ & ${\color{blue} \lambda_A}$, stop at $A$ & ${\color{blue} \rho_A}$ \\
 \midrule
$R_B \geq \sigma_A$ & $\E[R_A | \pi_A]$ &  $\E[\max\{X, R_A\} | \lambda_A]$ & $\E[R_B | \pi_B]$ \\
${\color{blue} \pi_B}$ & $-\E[c_A | \pi_A]$ & $-\E[c_A | \lambda_A]$ & $-\E[c_A | \rho_A] - \E[c_B | \pi_B]$ \\
\midrule
$R_B \in (\sigma_B, \sigma_A)$ & $\E[R_A | \pi_A]$  & $\E[\max\{X, R_A\} | \lambda_A]$ & $\E[\max\{X, R_B\} | \lambda_B \cap \rho_A]$ \\
 ${\color{blue} \lambda_B}$ & $-\E[c_A | \pi_A]$ & $-\E[c_A |\lambda_A]$ &  $-\E[c_B |\lambda_B] - \E[c_A | \rho_A]$ \\
\midrule
$R_B \leq \rho_B$ & $\E[R_A | \pi_A]$ & $\E[\max \{X, R_A\} | \lambda_A]$ & $\phi_C(\max\{y, R_B, R_A\} | \rho_A \cap \rho_B) $  \\
${\color{blue} \rho_B}$ & $-\E[c_A | \pi_A]$ & $-\E[c_A | \lambda_A]$ & $-\E[c_B |\rho_B] - \E[c_A | \rho_A]$  \\
\bottomrule
\end{tabular}
\caption{Table for Expected Payoff according to Strategy $D_2$. This table presents the expected total payoff for every possible joint distribution of box $A$ and $B$. In addition, $\rho_B = 1 - \pi_B - \lambda_B$.}
\label{tab:reward_D_2}
\end{table*}

\begin{theorem}[Polynomial Time Implementation of Generalized Reservation Value ]\label{thm:app:GMR_path}
GRV for multi-line setting can be implemented in polynomial time and space. 
\end{theorem}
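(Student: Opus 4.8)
The plan is to obtain this as a corollary of the single‑line payoff‑table construction (Alg.~\ref{alg:app:gw_1path}, Lem.~\ref{lem:app:computing_rv}) together with the ``locality'' of the GRV established in Lem.~\ref{lem:app:properties_GRV}. Concretely, I would argue that the multi‑line instance \emph{decouples} across its $m$ lines, so that every GRV the optimal policy of Thm.~\ref{thm:ML_MPB} ever needs to consult can be produced by running the single‑line procedure once on each line, and then refreshed incrementally as boxes are opened.

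First I would fix the optimal policy to be the GRV policy of Thm.~\ref{thm:ML_MPB}: probe the hyperbox whose available box has the largest current GRV, stopping once this GRV falls below the running maximum $x$. The structural point is Lem.~\ref{lem:app:properties_GRV}: for any box $b_i$ lying on a line $\hyperb$ of the precedence graph and any realized predecessor state $s^{i-1}$, the value $\sigma_i(s^{i-1},i)$ depends only on the sub‑hyperbox $\{b_i,\ldots,b_\eta\}\subseteq\hyperb$, where $\eta$ is the first box on $\hyperb$ whose GRV is strictly smaller; in particular it never references any other line. Hence $\sigma_i(s^{i-1},i)$ equals the GRV produced by the single‑line algorithm run on $\hyperb$ alone, conditioned on $R_{i-1}=s^{i-1}$. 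By Lem.~\ref{lem:app:computing_rv}, the entire equivalent‑reward table $\Phi(\cdot,\cdot,\cdot)$ for $\hyperb$ is computable in $\poly(k,|\hyperb|)$ time and space, and from it the GRV of any state is recovered by binary search on $x$, exactly as in the proof of Thm.~\ref{thm:GMR_path}.

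The adaptive implementation then maintains the scalar running maximum $x$; for each of the $m\le n$ lines its current reward pdf $\vp_\ell$ (updated after an opening by setting the just‑observed box's pdf to a basis vector $\ve_q$ and propagating $\vp_{\ell+1}=\vp_\ell P_{\ell+1}$ down that line only); and the single‑line $\Phi$‑table of the remaining suffix of each line. At each step it binary‑searches every line's table to read the GRV of its available box, probes the line attaining the largest GRV if that value exceeds $x$ (otherwise stops), updates $x$, and recomputes the $\Phi$‑table of only the probed line. Since there are at most $n$ openings, each triggering one $\poly(k,n)$‑time table recomputation plus $m\le n$ binary searches, and storing $m\le n$ tables of size $\poly(k,n)$ costs $\poly(k,n)$ space, the algorithm runs in polynomial time and space; correctness is inherited verbatim from Thm.~\ref{thm:ML_MPB} and the single‑line optimality result.

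The main obstacle — and the only place a genuine argument is required — is the decoupling invoked in the second step: because the GRV is defined through the globally optimal fully adaptive $\tau^*$, one must check that, once box $b_i$ is reached, $\tau^*$'s continuation stays within $\hyperb$ until that line's GRV first drops, so that no cross‑line term enters the defining equation~\eqref{eq:grve}. This is precisely the content of the third bullet of Lem.~\ref{lem:app:properties_GRV} (which relies on the optimal stopping time halting at box $\eta-1$), so the residual work is bookkeeping: verifying that conditioning on opened boxes and re‑running Alg.~\ref{alg:app:gw_1path} on the induced suffixes preserves the $\poly(k,n)$ bounds, and that the per‑step binary search is consistent with the indifference tie‑breaking convention fixed in Thm.~\ref{thm:unique_fair_cap}.
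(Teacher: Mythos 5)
Your proposal is correct and follows essentially the same route as the paper: rely on the single‑line $\Phi$‑table (Alg.~\ref{alg:app:gw_1path}, Lem.~\ref{lem:app:computing_rv}) plus binary search for each lookup, justified by the fact that (via Lem.~\ref{lem:app:properties_GRV}) a box's GRV depends only on the suffix of its own line. One small inefficiency worth flagging: since the table is already indexed by the triple $(x,s,i)$, you do not need to recompute a line's $\Phi$‑table after an opening — you simply look up the new state in the precomputed table — so the per‑step work is just the $m$ binary searches; this only tightens, and does not change, your polynomial bounds.
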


\begin{proof}

We first analyze the space. One could use the same reservation value lookup table as in the single line setting, which takes polynomial space (Lem.~\ref{lem:computing_rv}). Notice that the optimal strategy, characterized by GRV, only need to enter and leave at most a finite number of hyperboxes, since the total number of boxes is finite. For each of the hyperbox visit, we only need to store the GRV of the competitive boxes and the GRV when the strategy last enter this hyperbox, which takes polynomial space. 

Next, we analyze the space complexity. The lookup for the generalized reservation value (GRV) can still be performed using binary search, as in Theorem~\ref{thm:GMR_path}, which runs in polynomial time. Since the GRV is computed at most once per box, the overall runtime remains polynomial.

\end{proof}

\section{More Details from Forest Setting}\label{sec:app:omit_forest}

\subsection{Preliminaries on Graph}\label{subsec:app:pre_graph}

\begin{definition}[Component]\label{def:app:component}
Given an undirected graph $G = (V, E)$, a \textbf{component} of $G$ is a maximal connected subgraph $C = (V_C, E_C)$ such that:
\begin{itemize}
    \item $C$ is connected: There exists a path between any two vertices in $V_C$.
    \item $C$ is maximal: No additional vertex $v \in V \setminus V_C$ can be included without losing connectivity.
\end{itemize}
A graph is said to be \textbf{connected} if it consists of a single component.
\end{definition}

\begin{definition}[Induced Subgraph]\label{def:app:induced_subgraph}
Given a graph $G = (V, E)$ and a subset of vertices $V' \subseteq V$, the \emph{induced subgraph} $G[V']$ is the graph $(V', E')$ where:
\begin{align*}
E' = \{ (u, v) \in E \mid u, v \in V' \}
\end{align*}
That is, $G[V']$ contains all edges from $G$ whose endpoints are both in $V'$.
\end{definition}

\subsection{Generalized Reservation Value for Forest Setting}\label{subsec:app:GRV_forest}

We begin by presenting Fig.~\ref{fig:app:subtree_intuition}, which explains the contraction step of our algorithm. Starting from a minimal tree, condition on any values of $A$, we consider a multi-line Pandora's box consists of boxes other than $A$, then we contract them to one single box with random cost.

\begin{figure}[htbp]
    \centering
    \subfigure[Original Subtree]{%
        \includegraphics[width=0.25\textwidth]{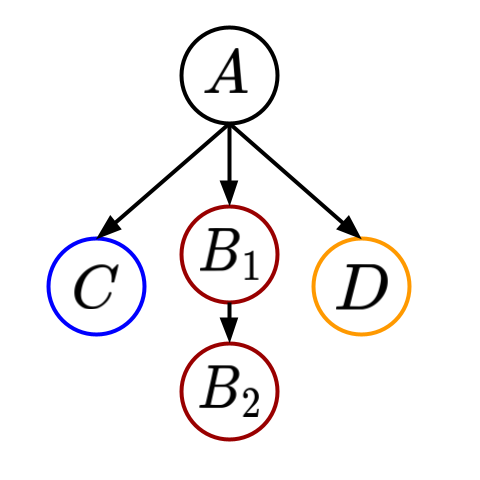}%
    }%
    \hfill
    \subfigure[Reduction to Multi-Lines]{%
        \includegraphics[width=0.25\textwidth]{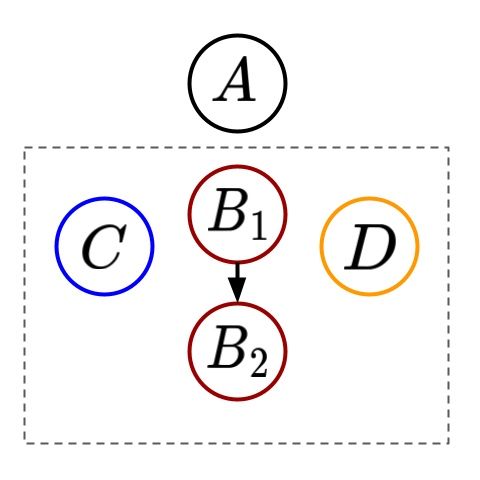}%
    }%
    \hfill
    \subfigure[Reordering by GRV(Adaptive)]{%
        \includegraphics[width=0.4\textwidth]{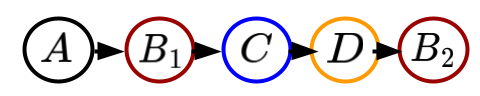}%
    }%
    \caption{Reduction From Tree to Multi-Line Setting}
    \label{fig:app:subtree_intuition}
\end{figure}

Next, we define the GRV for the forest setting. 

\begin{definition}[GRV, Forest]\label{def:app:GRV_forest}
    Given a Markovian Pandora's box with precedence graph $G = (\B, E )$ as a forest, and given the opened boxes as $\mathcal{B}_{o}$ and the information set on the realized reward $\mathcal{I}_{o} := \{ b_i = R_i, $\text{~for every~} $i \in \mathcal{B}_{o}\}$. 
    
    For any unopened box $b_i$, the GRV of $b_i$ can be derived by applying Alg.~\ref{alg:pd_forest} over the component of the induced subgraph of $G[\B \setminus \mathcal{B}_{o}]$ conditioned on the current information set $\mathcal{I}_{o}$.
\end{definition}

\begin{proof}
When calculating the GRV for each of the boxes inside our algorithm, the box is either the root of a minimal tree or the current graph consists of multiple lines. It suffices to show that this GRV is well-defined for either cases.

From the results from the multi-line cases, we get that if the graph is lines or vertices, then GRV of any box is well defined. Furthermore, the entire graph could be contracted by one box with random cost (Lem.~\ref{lem:app:equ_box_multi_line}). 

We first define the GRV of the root of a minimal tree. Before calculating the GRV, we first define how to calculate the equivalent reward:  We first contract the induced subgraph consists of vertices other than the root node as a single box $\hat{b}$ with random cost, then use the following equation to calculate the equivalent reward for $r$ given the current reward $x$:
  \begin{align*}
     \Phi(x,r) = & \max\{x, \E[\max\{R_r, x\} - c_r], \\
     & \quad \E[\max\{R_r, x, R_{\hat{v}}\} - c_r - c_{\hat{v}}]\}
 \end{align*}
The terms in the max operator correspond to the utility of not opening $r$, opening only $r$, and opening $r$ while optimally exploring the remaining nodes, respectively. Consequently, we extend the definition of GRV to the root vertex $r$ as the smallest $x$ satisfying $\Phi(x, r) = x$. The GRV is well defined since this minimal tree can be treated equally as a hyperbox with the first box is $r$, and the second box is $\hat{v}$, then by Lem.~\ref{lem:app:GRV_multi_random}, the GRV is well defined. 

\end{proof}

\begin{lemma}[Time and Space Complexity of Updating GRV]\label{lem:app:update_GRV_forest}
    The algorithm~\ref{alg:pd_forest} can be implemented in polynomial time and space. 
\end{lemma}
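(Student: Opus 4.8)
The plan is to bound separately the number of contraction rounds, the per-round work, and the size of the data the algorithm maintains (the partially contracted graph $\hat{G}$ together with the reward/cost descriptions of the contracted vertices), and then multiply these out.

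First I would bound how often the outer \textbf{while}-loop and the inner \textbf{for}-loop over minimal trees can fire. Each minimal tree $\T_i$ has a unique root $r_i$, which is a branch vertex of the current graph (out-degree $\ge 2$); after contracting $\T_i\setminus\{r_i\}$ into the single vertex $\hat{v}_i$, the vertex $r_i$ has out-degree exactly $1$ and $\hat{v}_i$ is a leaf, so a contraction never creates a new branch vertex and destroys at least one. Hence the total number of minimal-tree contractions performed over the whole run is at most the number of branch vertices of $G$, which is at most $n$; in particular the \textbf{while}-loop runs at most $n$ times. Detecting the minimal trees and their roots (the lowest branch vertices) and updating $\hat{G}$ are elementary graph operations costing $\poly(n)$ per round.

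Next I would bound the work for a single minimal tree $\T_i$. By the defining property of a minimal tree (Def.~\ref{def:tree}), deleting the root $r_i$ leaves a disjoint union of directed lines, i.e.\ a multi-line Markovian Pandora's box instance, once we condition on the realized value $R_{r_i}\in V$. For each of the $\le k$ possible values of $R_{r_i}$, Theorem~\ref{thm:app:GMR_path} and Lemma~\ref{lem:app:GRV_multi_random} give $\phi$ and the GRV of every box of $\T_i\setminus\{r_i\}$ in $\poly(k,n)$ time and space, even when some of those boxes are themselves random-cost boxes produced by earlier contractions, and Lemma~\ref{lem:app:equ_box_multi_line} replaces the whole conditioned multi-line instance by a single random-cost box $\hat{v}_i$. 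So one minimal tree and one root value cost $\poly(k,n)$; summing over $\le k$ root values and $\le n$ contractions gives $\poly(k,n)$ in total. The closing step ``compute the GRV of remaining boxes'' acts on a graph that is by then a union of lines, and is again $\poly(k,n)$ by Theorem~\ref{thm:app:GMR_path}.

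The crux, and the step I expect to be the main obstacle, is showing that the description of a contracted vertex $\hat{v}_i$ stays polynomially bounded, so that feeding it back into the line/multi-line routines does not blow up. The reward $R_{\hat{v}_i}$ is a maximum of rewards drawn from $V$, hence supported on $V$, so its conditional distribution needs only $O(k^2)$ numbers. The random cost $c_{\hat{v}_i}$, however, is a sum of costs over a realization-dependent set of boxes and could a priori have support exponential in the out-degree of $r_i$; the point is that in the recursions of Algorithm~\ref{alg:app:gw_1path} (and its multi-line extension) and in the root formula for $\Phi(x,r)$, a box's cost enters \emph{only linearly and only inside an expectation}, and the running-cost quantity $\frc$ is likewise consumed only through its expectation. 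Hence it suffices to store, for each contracted vertex, the conditional expectations $\E[\,c_{\hat{v}_i}\mid R_{\hat{v}_i}=v,\ R_{r_i}=v'\,]$ for $v,v'\in V$ — another $O(k^2)$ numbers — rather than the full cost law. I would make this precise by induction on the contraction round: the invariant is that every (original or contracted) box is represented by an $O(k^2)$-sized table of conditional reward probabilities and conditional expected costs, that the multi-line routine of Theorem~\ref{thm:app:GMR_path} both consumes and produces only such representations, and therefore that the $\hat{v}_i$ created at round $t$ again satisfies the invariant. Combining everything: $O(n)$ contraction rounds, each touching $\le n$ minimal trees, each handled in $\poly(k,n)$ on $\poly(k,n)$-sized data, plus an $O(k^2 n)$ overhead for storing all contracted vertices, yields the claimed polynomial time and space; re-invoking the algorithm on the subtree rooted at each newly available box (the footnote to Algorithm~\ref{alg:pd_forest}) adds only an $O(n)$ factor over the execution of the policy.
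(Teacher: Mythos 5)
Your proof follows the same skeleton as the paper's: bound the number of contraction rounds, bound the per-round work via the single/multi-line subroutines (Lem.~\ref{lem:computing_rv}, Thm.~\ref{thm:app:GMR_path}, Lem.~\ref{lem:app:GRV_multi_random}, Lem.~\ref{lem:app:equ_box_multi_line}), recover the GRV by binary search, and bound the space of what is stored. However, your version is substantially more careful, and it actually closes a genuine gap that the paper's own proof glosses over. The paper simply asserts that ``storing the $\phi$ table requires at most polynomial space'' without ever explaining why the description of a contracted vertex $\hat{v}_i$ stays small: Lemma~\ref{lem:app:equ_box_multi_line} defines the random cost $c_{\hat{v}_i}$ by coupling it to the full realization of all rewards in the contracted multi-line block, which a priori has exponentially many atoms. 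Your observation that (i) the reward $R_{\hat{v}_i}$ is always supported on $\{0\}\cup V$, and (ii) the cost $c_{\hat{v}_i}$ (and, recursively, the running cost $\frc$ in Algorithm~\ref{alg:app:gw_1path}) only ever enter the $\phi$/GRV recursion \emph{linearly inside an expectation}, so that only conditional expected costs need to be stored, is exactly the argument needed to make the space claim rigorous; the induction on contraction rounds that you sketch is the right way to carry this invariant through the algorithm. Your bound on the number of contraction rounds --- each round removes the branch vertex $r_i$ (leaving it out-degree one) and creates no new branch vertex --- is also sharper and more explicit than the paper's crude ``at most the number of boxes.'' In short, your proposal is correct and, if anything, more complete than the paper's proof of this lemma.
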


\begin{proof}
\textbf{Space Complexity}. During the execution of our algorithm, each minimal tree can be found using \emph{BFS} or \emph{DFS}, both of which run in polynomial time. The maximum number of times we need to find a minimal tree corresponds to the number of boxes in the forest, ensuring that this step remains polynomial overall.

The worst-case scenario for computing any GRV occurs when the given node is the \textit{root of the minimal tree} containing all boxes. For non-root nodes, $\phi$ can be computed in polynomial time using the multi-line case runtime. For the root $r$, we apply \textit{binary search} similarly as in previous analysis, which also runs in polynomial time.

Thus, the entire algorithm is implemented in polynomial time.

\textbf{Space Complexity}. We only need to store the $\phi$ table for all possible states of each box, along with the entire graph as required by our algorithm. By similar arguments from previous sections, storing the $\phi$ table requires at most polynomial space. The remaining operations also fit within polynomial space constraints.

\end{proof}

Since probing according to latest GRV will at most probe all the boxes, Alg.~\ref{alg:pd_forest} are invoked for at most the number of total boxes in the forest. Thus the overall running time is still polynomial.

\begin{theorem}[Optimality of GRV, Forest]\label{thm:app:GRV_opt_forest}
The GRV defined as in Def.~\ref{def:app:GRV_forest} is optimal for probing the forest. 
\end{theorem}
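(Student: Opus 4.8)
The plan is to prove optimality by induction on $\beta(G)$, the number of contraction steps that Algorithm~\ref{alg:pd_forest} performs on $G$ (equivalently, $\sum_{v}\max\{\deg^+(v)-1,\,0\}$ over the vertices of the precedence graph). The base case $\beta(G)=0$ is precisely the multi-line setting: $G$ is a disjoint union of (possibly random-cost) directed lines, and probing the hyperbox of currently largest GRV is optimal by Theorem~\ref{thm:ML_MPB}, extended to correlated random costs via Lemma~\ref{lem:app:GRV_multi_random}.

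For the inductive step I would fix a minimal tree $\T\subseteq G$ with root $r$ and set $\T':=\T\setminus\{r\}$; since $\T$ is minimal, $\T'$ is branch-free, i.e.\ a union of directed lines, so once $r$ is opened the induced instance on $\T'$ is a multi-line instance whose optimum is already known. Conditioning on each realization $R_r=x$, I would apply Lemma~\ref{lem:app:equ_box_multi_line} to contract $\T'$ into a single random-cost box $\hat v=\hat v(x)$ whose joint (maximum reward, total cost) distribution equals that produced by the GRV-optimal probe of $\T'$ in isolation, and whose GRV equals the largest GRV among the boxes of $\T'$ available at that moment. Replacing $\T'$ by $\hat v$ (with the edge $r\to\hat v$) gives a forest $\hat G$ of random-cost boxes with $\beta(\hat G)=\beta(G)-1$. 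Before invoking the inductive hypothesis I would check two bookkeeping facts: (i) the GRVs computed by Definition~\ref{def:app:GRV_forest}/Algorithm~\ref{alg:pd_forest} coincide on $G$ and $\hat G$ — for $r$ this is exactly the recursion $\Phi(x,r)=\max\{x,\,\E[\max\{R_r,x\}-c_r],\,\E[\max\{R_r,x,R_{\hat v}\}-c_r-c_{\hat v}]\}$ evaluated through $\hat v$, for boxes outside $\T$ it is immediate, and for boxes inside $\T'$ it follows from the locality clause of Lemma~\ref{lem:app:properties_GRV} (a box's GRV depends only on the initial segment of its line up to the first box of strictly smaller GRV); and (ii) that $\Phi(x,r)$ is genuinely the optimal equivalent reward of $r$, which is already contained in the proof that Definition~\ref{def:app:GRV_forest} is well defined, by applying Lemma~\ref{lem:app:GRV_multi_random} to the two-box hyperbox $[r,\hat v]$. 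By the inductive hypothesis, probing $\hat G$ by the latest GRV is optimal.

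It then remains to transfer optimality from $\hat G$ back to $G$, and \textbf{this is the step I expect to be the main obstacle}. The subtlety is that $\hat v$ is \emph{atomic}: a strategy on $\hat G$ that opens $\hat v$ commits to the entire isolated probe of $\T'$, whereas an arbitrary strategy on $G$ may interleave probes inside $\T'$ with probes of boxes elsewhere in the forest. I would close this gap by showing that restricting to strategies which behave atomically on $\T'$ costs no expected payoff on $G$, via an exchange argument in the spirit of Lemma~\ref{lem:three_box_lemma}. Concretely, by Lemma~\ref{lem:app:properties_GRV} the GRVs of the boxes in $\T'$ are non-increasing along each of its lines and never exceed $\sigma(\hat v\mid x)$, so whenever an optimal strategy would switch from a not-yet-completed line of $\T'$ to some box $D$ outside $\T$, the still-unprobed boxes of that line, the box $D$, and any box that must follow both of them play the roles of $A$, $B$, $C$ in Lemma~\ref{lem:three_box_lemma} (or its random-cost form Lemma~\ref{lem:app:three_box_lemma}), whose hypothesis $\sigma(A)\ge\sigma(B)\ge\sigma(C)$ holds by construction; iterating the exchange rearranges any optimal $G$-strategy into atomic form without decreasing expected payoff (and, applied within $\T'$ itself, this recovers Theorem~\ref{thm:ML_MPB} conditioned on $R_r=x$). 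Once atomic strategies are known to be without loss on $G$, Lemma~\ref{lem:app:equ_box} yields a utility-distribution-preserving correspondence between atomic $G$-strategies and arbitrary $\hat G$-strategies under which the GRV strategy maps to the GRV strategy, and optimality of GRV on $G$ follows from its inductive optimality on $\hat G$. The genuinely non-trivial part throughout is making the exchange argument handle tree-shaped precedence constraints and correlated random costs at once, and arguing that the local swaps compose into one global rearrangement — which is why the induction is organised to peel off an innermost minimal subtree first, keeping each use of Lemma~\ref{lem:three_box_lemma} local.
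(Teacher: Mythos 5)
Your proposal follows the same core strategy as the paper's proof — contract each minimal tree into a single random-cost box, reduce the forest to a multi-line instance, and invoke the multi-line optimality theorem (Thm.~\ref{thm:ML_MPB}) together with Lemmas~\ref{lem:app:equ_box}--\ref{lem:app:equ_box_multi_line}. You organize this as an explicit induction on the number of branch vertices, whereas the paper performs all contractions at once and simply asserts that each contraction ``preserves the equivalent reward table and the payoff distribution.''

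Where you genuinely diverge is in flagging and attacking the atomicity gap: a strategy on $\hat G$ commits to the entire isolated probe of $\T'$ the moment it opens $\hat v$, while a strategy on $G$ may interleave probes of $\T'$ with boxes outside $\T$, pause the probe of a line in $\T'$ when a competing GRV overtakes it, and return later with a different competitive reward. The paper's proof never acknowledges this; it moves directly from ``$\hat G$ is multi-line'' to ``Thm.~\ref{thm:ML_MPB} applies,'' which implicitly requires exactly the correspondence you are trying to establish between interleaved $G$-strategies and atomic $\hat G$-strategies. Your exchange argument via Lemma~\ref{lem:three_box_lemma} is the right tool (it is the same tool the paper uses to prove Thm.~\ref{thm:ML_MPB} itself, with the artificial downstream box $C$ standing in for ``the optimal continuation after both''), and the hypothesis $\sigma(A)\ge\sigma(B)\ge\sigma(C)$ is available from Lemma~\ref{lem:app:properties_GRV}. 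Conditional independence of $A$ (remaining boxes in $\T'$) and $B$ (the outside box $D$) holds given $R_r$ by the Markov property, so the lemma's hypotheses are met. One caution: the exchange argument alone establishes that some atomic strategy is as good as any interleaved one; you still need the accompanying observation — which you make in item~(ii) — that the equivalent box $\hat v$ actually reproduces the \emph{correct} payoff distribution as a function of the competitive reward at the moment $\hat v$ is opened, and this is where the correlated-random-cost construction of Lemma~\ref{lem:app:equ_box} is load-bearing. Overall your route is the paper's route filled in; the added care is warranted rather than redundant.
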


\begin{proof}
Notice that each contraction step preserves the \textit{equivalent reward table} and the \textit{payoff distribution} of probing the minimal tree. Since the algorithm updates $\phi$ for the root of the minimal box using the strategy that maximizes expected reward, the $\phi$ values—and thus the GRV—of its parent boxes remain unchanged under contraction.  

This allows us to use the contracted graph $\hat{G}$, which consists only of \textit{multiple lines and single boxes}, to compute the $\phi$ table and the GRV $\sigma$ for each available box. Applying Thm.~\ref{thm:ML_MPB} to $\hat{G}$ then establishes the optimality of GRV.

\end{proof}

\section{More Details from Static Transition}\label{sec:app:omit_rapid_mix}

We formally define the distribution of a hyperbox's payoff.

\begin{definition}[Payoff Distribution of Optimal Stopping]\label{def:stat_dist}
Given an adaptive stopping time $\tau$ (i.e., decides whether to stop based on reward realizations) of a hyperbox $\hyperb := \{b_1, \ldots, l_m\}$ as $\tau$, we call the distribution of the payoff as the distribution of the following quantity:

\begin{align*}
\util_\tau(\hyperb) = \max_{j = i}^{\tau} R_j - \sum_{j=i}^{\tau} c_j
\end{align*}
We also denote $\util_\tau(\hyperb) = \max_{\tau} \max_{j = i}^{\tau} R_j - \sum_{j=i}^{\tau} c_j$, and denote $\tau^*$ as the optimal (adaptive) stopping time that obtains this maximum. 
\end{definition}

Notice that $\D_i$, the reward distribution of the first box, corresponds to the initial probability distribution of the Markov Chain, and can be presented as a probability (row) vector $\pi_0$. Now we are ready to present the reward distribution of the $i$-th probe:

\begin{definition}[Multi-Probe Reward]\label{def:multi_probe_reward}
The $t$-th transition reward $p_{i,j}^{(t)} = \Pr[R_{t+1} = v_j | R_1 = i]$, and it satisfies $p_{i,v}^{(t)} = (\trans^{t})_{i,j}$. The reward distribution of $t$-th probe is: $\pi_0 \trans^{t-1}$. 
\end{definition}

\subsection{Preliminaries from Markov Chain}\label{subsec:app:MC_preliminary}


We first introduce a distance metric measuring the distance between distributions over the same domain.

\begin{definition}[TV-distance]\label{def:TV_dis}
Given a probability distribution $\mu$ and $\nu$ on same domain $\Omega$, the total variation (TV) distance between them is:
\begin{align*}
    \dtv(\mu, \nu) = \max_{A \subseteq \Omega} |\mu(A) - \nu(A)|. 
\end{align*}
Alternatively, $\dtv (\mu, \nu) = \min_{(X,Y)} \Pr (X \neq Y) $, where the minimum is over all coupling (joint distribution) of the distribution $\mu, \nu$, such that $X \sim \mu$ and $Y \sim \nu$.  
\end{definition}

Next, we present the definition of Markov Chain. Notice that for every directed line as a subgraph of the precedence graph, the reward along the line follows a Markov Chain. 

\begin{definition}[Markov Chain]\label{def:app:MC}
A \textit{Markov chain} is a discrete-time stochastic process $\{X_n\}_{n \geq 0}$ on a state space $S$ satisfying the \textbf{Markov property}:
\[
P(X_{n+1} = j \mid X_n = i, X_{n-1} = i_{n-1}, \dots, X_0 = i_0) = P(X_{n+1} = j \mid X_n = i).
\]
The transition probabilities form a matrix $P$, where each entry $P_{ij}$ represents the probability of transitioning from state $i$ to state $j$:
\[
P_{ij} = P(X_{n+1} = j \mid X_n = i), \quad \sum_{j \in S} P_{ij} = 1 \quad \forall i \in S.
\]
\end{definition}

A Markov chain of certain property would have analytical property. We present such property below. In the main text, we present a similar definition that relies solely on the transition matrix, independent of the concepts of the Markov chain.

\begin{assumption}[Properties of Markov Chain]\label{ass:app:MC_property}
A Markov chain with transition matrix $P$ and state space $\{v_1, \ldots, v_k\}$ satisfies: 

\begin{itemize}
    \item \textbf{Irreducibility}, if for all $v_i, v_j \in \{v_1, \ldots, v_k\}$, there exists an integer $n \geq 1$ such that:
    \begin{align*}
    P^n(v_i, v_j) > 0,
    \end{align*}
    where $P^n(v_i, v_j)$ denotes the $(i,j)$-entry of $P^n$, representing the probability of transitioning from state $v_i$ to state $v_j$ in $n$ steps.

    \item \textbf{Aperiodic}, if for every state $v_i$, the greatest common divisor (gcd) of the set $\{n \geq 1 : P^n(v_i, v_i) > 0\}$ is 1. Formally:
    \begin{align*}
    d = \text{gcd}\{n \geq 1 : P^n(v_i, v_i) > 0\},
    \end{align*}
    and the chain is aperiodic if $d = 1$ for all $v_i \in \{v_1, \ldots, v_k\}$.
\end{itemize}
\end{assumption}

\begin{definition}[Stationary Distribution]\label{def:MC_stable}
We say a probability column vector $\pi$ is a \emph{stationary distribution} for a Markov chain with transition matrix $\trans$, if $\pi \trans = \pi$.
\end{definition}

\begin{lemma}[Existence of Unique Stable Distribution]\label{lem:exist_stable_dist}
If a finite-state Markov chain is \textit{irreducible} and contains at least one \textit{aperiodic} state, then the chain has a \textbf{unique stationary distribution} $\pi$. Furthermore, the chain converges to $\pi$ regardless of the initial state.
\end{lemma}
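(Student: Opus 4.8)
The plan is to prove this via the classical \emph{coupling} argument, which simultaneously delivers existence, uniqueness, and convergence. First I would record the standard structural fact that in a finite irreducible chain the period is a communicating-class invariant, so aperiodicity of one state forces aperiodicity of every state; consequently the chain is \emph{primitive}, i.e.\ there is an integer $N \ge 1$ with every entry of $P^N$ strictly positive. Establishing primitivity is a short number-theoretic argument: for each state $v_i$ the return-time set $\{n \ge 1 : P^n(v_i,v_i) > 0\}$ is closed under addition and has gcd $1$, hence contains all sufficiently large integers (the numerical-semigroup / Chicken McNugget fact), and irreducibility then lets one prepend a bounded path from any $v_i$ to any target $v_j$.

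Next, for existence of a stationary distribution I would apply Brouwer's fixed point theorem to the continuous map $\mu \mapsto \mu P$ on the probability simplex $\Delta_{k-1}$, which is compact, convex, and mapped into itself; any fixed point is a stationary distribution $\pi$. (Alternatively, one may invoke the Perron--Frobenius theorem applied to the nonnegative matrix $P^N > 0$, whose left Perron eigenvector, suitably normalized, is $\pi$.)

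For uniqueness and convergence I would set up the coupling. Given an arbitrary initial distribution $\mu_0$, run a chain $\{X_n\}$ started from $\mu_0$ alongside a chain $\{Y_n\}$ started from $\pi$, coupled so that the two move independently until the first time they occupy the same state and move together thereafter. Let $\epsilon := \min_{i,j} P^N(v_i,v_j) > 0$ by primitivity; then within every block of $N$ steps the two chains coalesce with probability at least $\epsilon$, irrespective of their current positions, so $\Pr[X_{mN} \ne Y_{mN}] \le (1-\epsilon)^m$. By the coupling characterization of total variation distance (Def.~\ref{def:TV_dis}) this yields $\dtv(\mu_0 P^{mN}, \pi) \le (1-\epsilon)^m$, and since $\mu_0 P^t$ is itself a distribution for every $t$, interpolation over intermediate steps gives $\dtv(\mu_0 P^{t}, \pi) \le C\alpha^t$ for constants $C > 0$ and $\alpha = (1-\epsilon)^{1/N} \in (0,1)$. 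Taking $\mu_0 = \pi'$ for any other stationary distribution forces $\pi' = \pi$ (the left side is identically $\dtv(\pi',\pi)$ and the right side tends to $0$), giving uniqueness; taking $\mu_0$ arbitrary gives convergence from every initial state.

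The main obstacle is the primitivity step: pinning down the uniform block-contraction constant $\epsilon$ requires the number-theoretic lemma that an additively closed subset of $\mathbb{N}$ with gcd $1$ is cofinite, together with careful bookkeeping of how many extra steps irreducibility needs to reach an arbitrary target. Once primitivity is in hand the coupling argument is entirely routine. If one prefers to sidestep the number theory, all three conclusions follow directly from the Perron--Frobenius theorem for the irreducible aperiodic nonnegative matrix $P$, at the cost of invoking a heavier black box.
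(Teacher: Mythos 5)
Your proposal is correct, but note that the paper itself does not prove Lemma~\ref{lem:exist_stable_dist}: it is stated without argument as a classical fact, with the surrounding lemmas (e.g.\ Lemma~\ref{lem:app:conv_reward}) citing Levin and Peres~\cite{lpw17}. So there is no paper proof to compare against; what you have written is essentially the standard textbook argument one would find in that reference.

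Your route --- aperiodicity as a class property, primitivity via the numerical-semigroup fact plus irreducibility, existence via Brouwer (or Perron--Frobenius) on the simplex, then a Doeblin-type coupling over blocks of length $N$ using $\epsilon = \min_{i,j} P^N(v_i,v_j) > 0$ to force coalescence geometrically fast --- is the canonical one and is sound. The coupling step simultaneously delivers uniqueness (plug $\mu_0 = \pi'$) and the exponential convergence rate $\dtv(\mu_0 P^t, \pi) \le C\alpha^t$, which is exactly the quantitative form the paper later uses in Lemma~\ref{lem:app:conv_reward}. One small bookkeeping remark: when you ``prepend a bounded path'' to pass from the cofiniteness of the return-time sets to uniform positivity of $P^N$, you should be explicit that the bound on path lengths is uniform over the finitely many pairs $(v_i, v_j)$, which is automatic here since the state space is finite; a uniform $N$ is what the block-coupling genuinely requires. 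With that said, there is no gap.
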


\begin{lemma}[Convergence to Stable Distribution~\cite{lpw17}]\label{lem:app:conv_reward}
    Given an aperiodic and irreducible Markov chain $\{R_t\}_{t \in [\mathbb{N}]}$ with state space $\V$, transition matrix $\trans$. There exist a unique stationary distribution $\pi$, and there exist $C_{\trans} > 0, \alpha_{\trans} \in (0,1)$, such that:

    \begin{align*}
    \max_{v_i \in \V} \| {\trans}_{i, :}^t - \pi \|_\text{TV} \leq C_{\trans} \alpha_\trans^t
    \end{align*}
\end{lemma}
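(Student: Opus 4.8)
The plan is to derive this as the standard Perron--Frobenius/Doeblin fact for finite chains, using a coupling/contraction argument that produces an explicit geometric rate; this is essentially the content of the cited \cite{lpw17}, and I would present a self-contained sketch. Write $k := |\V|$. The first step is \emph{primitivity}: irreducibility and aperiodicity on a finite state space imply there is an integer $r \ge 1$ with $(\trans^r)_{ij} > 0$ for all $i,j$. This is the usual numerical-semigroup argument --- for fixed $v_i$ the set $\{n \ge 1 : (\trans^n)_{ii} > 0\}$ is closed under addition and has gcd $1$ by aperiodicity, hence contains every sufficiently large integer; prepending an irreducibility path of some fixed length $\ell_{ij}$ from $v_i$ to $v_j$ then gives $(\trans^n)_{ij} > 0$ for all large $n$, and taking $r$ to be the largest of the finitely many thresholds over pairs $(i,j)$ finishes it. Set $\delta := \min_{i,j}(\trans^r)_{ij} > 0$; since the rows of $\trans^r$ sum to $1$ we have $k\delta \le 1$, and we get the Doeblin minorization $\trans^r(v_i,\cdot) \ge k\delta\cdot\mathrm{Unif}(\V)$ uniformly in $v_i$.

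The second step handles existence and uniqueness of $\pi$ together with a contraction. The minorization yields, by the standard coupling of two one-step evolutions, the contraction $\|\mu\trans^r - \nu\trans^r\|_{\mathrm{TV}} \le (1-k\delta)\,\|\mu-\nu\|_{\mathrm{TV}}$ for any probability row vectors $\mu,\nu$; since the simplex is complete under $\|\cdot\|_{\mathrm{TV}}$, the Banach fixed-point theorem gives a unique fixed point $\pi$ of $\mu\mapsto\mu\trans^r$, and as $\pi\trans$ is also fixed by $\trans^r$ we conclude $\pi\trans=\pi$, so $\pi$ is the unique stationary distribution of $\trans$.

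The third step is the rate. For $t = qr + s$ with $0\le s < r$, apply the contraction $q$ times starting from $\delta_{v_i}$ and $\pi$: $\|\trans^t_{i,:} - \pi\|_{\mathrm{TV}} = \|\delta_{v_i}\trans^{qr+s} - \pi\trans^{qr+s}\|_{\mathrm{TV}} \le (1-k\delta)^q\,\|\delta_{v_i}\trans^s - \pi\trans^s\|_{\mathrm{TV}} \le (1-k\delta)^q$, since total variation is at most $1$. Using $q \ge t/r - 1$, set $\alpha_{\trans} := (1-k\delta)^{1/r} \in (0,1)$ and $C_{\trans} := (1-k\delta)^{-1}$; then the bound is at most $C_{\trans}\,\alpha_{\trans}^{\,t}$, uniformly over $v_i \in \V$, which is the claim.

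The only genuinely non-mechanical step is the primitivity claim, since aperiodicity enters there through the gcd argument; everything downstream is routine. As an alternative I could argue spectrally: primitivity plus Perron--Frobenius makes $1$ a simple eigenvalue of $\trans$ with all other eigenvalues of modulus $< 1$, so a Jordan decomposition gives $\trans^t = \mathbf{1}\pi + O\!\big(t^{m}|\lambda_2|^t\big)$ for $|\lambda_2| < 1$ and $m$ the largest non-principal Jordan block size, after which any $\alpha_{\trans}\in(|\lambda_2|,1)$ absorbs the polynomial factor; but I would favor the coupling/contraction proof above as shorter and cleaner in its constants.
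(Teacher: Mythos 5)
Your proof is correct, and it is essentially the standard argument from the cited reference (Levin--Peres, Theorem 4.9): primitivity via the gcd argument, the Doeblin minorization $\min_{i,j}(\trans^r)_{ij} =: \delta > 0$, the resulting contraction of total variation under $\trans^r$, and the division $t = qr+s$ to extract the geometric rate. The paper itself states this lemma as a black-box citation to \cite{lpw17} and gives no proof, so there is nothing to compare against beyond noting that your write-up is a faithful, self-contained rendition of the textbook proof; the only cosmetic point is the degenerate case $k\delta = 1$ (where $\trans^r$ has identical uniform rows), for which your $\alpha_{\trans}$ and $C_{\trans}$ as written degenerate to $0$ and $\infty$ --- replacing $\delta$ by $\delta/2$ in the definitions of the constants removes this edge case.
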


This lemma implies that given any starting state $v_i$, the reward distribution if the Markov Chain converges to a stable distribution in its TV distance. 

Next, we define the mixing time of this Markov Chain:

\begin{definition}[Mixing Time of Markov Chain~\cite{lpw17}]\label{def:app:mix_time_MC}
Given a Markov chain with unique stable distribution $\pi$ and transition matrix $P$, the mixing time is defined as the time for a Markov Chain to reach total variation distance of within a given parameter $\eps$ of $\pi$, i.e., 

\begin{align*}
    \tmix (\eps) = \min \{t : \max_{v_j, j \in [k]} \| P_{x, \cdot}^{t} - \pi \|_\text{TV} \leq \eps\}
\end{align*}

Moreover, we denote $\tmix = \tmix(1/4)$ and we have $\tmix (\eps) \leq \lceil \log_2 \nicefrac{1}{\eps} \rceil \tmix $.
    
\end{definition}


    


\begin{lemma}[Convergence of (Unconditional) Reward]\label{lem:app:convege_line_stable}
Given a Markov Pandora's box with static transition, and that the transition matrix of every maximal component of the graph is irreducible and aperiodic. Given any directed line subgraph $\hyperb := \{b_1,\ldots, b_n\}$ of the original graph with transition matrix $P$, there exists a unique stable distribution $\pi$, for which the (unconditional) reward distribution of $b_n$ will converge to, once $n \to \infty$. Equivalently, there exist $C_{\trans} > 0, \alpha_{\trans} \in (0,1)$, such that for every $n \in \mathbb{N}$, the reward distribution $\pi_n$ of $b_n$ satisfies:
    \begin{align*}
    \| \pi_n - \pi \|_\text{TV} \leq C_{P} \alpha_P^n
    \end{align*}
    
\end{lemma}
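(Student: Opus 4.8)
## Proof Proposal for Lemma~\ref{lem:app:convege_line_stable}

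\textbf{Proof proposal.} The plan is to reduce the claim directly to the standard Markov-chain convergence theorem (Lem.~\ref{lem:app:conv_reward}), the only extra ingredient being that the \emph{unconditional} reward distribution of $b_n$ is obtained from that of $b_1$ by $n-1$ applications of a single transition matrix.

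First I would record the structural consequence of static transition. Since $\hyperb := \{b_1,\ldots,b_n\}$ is a directed line, all of its edges lie inside one maximal component of $G$, so by Assumption~\ref{ass:MC_cost} every edge of $\hyperb$ carries the same transition matrix $P$, and by hypothesis this $P$ is irreducible and aperiodic. Writing $\pi_1$ for the reward pdf of $b_1$ (a probability row vector, namely $\D_1$), Definition~\ref{def:multi_probe_reward} gives $\pi_n = \pi_1 P^{\,n-1}$ for every $n$, and Lemma~\ref{lem:exist_stable_dist} supplies a unique stationary distribution $\pi$ with $\pi P = \pi$.

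Next I would upgrade the per-row bound of Lemma~\ref{lem:app:conv_reward} to a bound on $\pi_n$. Since $\sum_i (\pi_1)_i = 1$ and $\pi$ is stationary, we may write
\[
\pi_n - \pi \;=\; \sum_i (\pi_1)_i\bigl( (P^{\,n-1})_{i,:} - \pi \bigr),
\]
so by the triangle inequality for $\dtv$ and convexity,
\[
\|\pi_n - \pi\|_{\text{TV}} \;\le\; \sum_i (\pi_1)_i\,\bigl\|(P^{\,n-1})_{i,:} - \pi\bigr\|_{\text{TV}} \;\le\; \max_{v_i}\bigl\|(P^{\,n-1})_{i,:} - \pi\bigr\|_{\text{TV}} \;\le\; C_P\,\alpha_P^{\,n-1},
\]
where the last inequality is exactly Lemma~\ref{lem:app:conv_reward}. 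Replacing $C_P$ by $C_P/\alpha_P$ yields $\|\pi_n - \pi\|_{\text{TV}} \le C_P \alpha_P^{\,n}$ for all $n$ (and in particular $\pi_n \to \pi$); note the bound is uniform in $n$ and independent of $\D_1$.

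There is no serious obstacle: the only point requiring a little care is that Lemma~\ref{lem:app:conv_reward} controls each row of $P^{\,n-1}$ separately, whereas $\pi_n$ evolves from the arbitrary initial distribution $\D_1$ — this gap is closed precisely by the convexity of the total-variation distance used above. The reduction also quietly uses that a directed line subgraph cannot cross component boundaries, so a single $P$ governs the whole path and the hypothesis of Lem.~\ref{lem:app:conv_reward} applies.
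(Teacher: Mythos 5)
Your proposal is correct and follows essentially the same route as the paper's proof: write $\pi_n = \pi_1 P^{\,n-1}$, use convexity of total-variation distance to reduce to the row-wise bound of Lem.~\ref{lem:app:conv_reward}, and absorb the shift from $n-1$ to $n$ by replacing $C_P$ with $C_P/\alpha_P$. No gaps.
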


\begin{proof}
    Then by Lem.~\ref{lem:app:conv_reward} and Lem.~\ref{def:app:mix_time_MC}, we have that:
    \begin{align*}
        \max_{i \in [k]} \| {P}_{i, :}^n - \pi \|_\text{TV} \leq \hat{C}_{P} \alpha_P^n
    \end{align*}
    Given any $\hyperb$, we denote the reward distribution of its first box as $\hat{\pi}$, then we have the distribution of $n$-th box in line satisfies:
    \begin{align*}
        \| \pi^n -\pi \|_\text{TV} = \| \hat{\pi} P^{n-1} -\pi \|_\text{TV}
        \leq & \sum_{i \in [k]} \hat{\pi}_{i} \max_{i \in [k]} \| {P}_{i, :}^{n-1} - \pi \|_\text{TV}
         \leq  \max_{i \in [k]} \| {P}_{i, :}^{n-1} - \pi \|_\text{TV}  \leq \hat{C}_{P} \alpha_P^{n-1}
    \end{align*}
    Letting $C_P := \hat{C}_p / \alpha_P$, we prove that the statement in the lemma is correct.
\end{proof}

\subsection{Multi Line Setting with Static Cost}\label{subsec:app:rapid_mix_MC_static_cost}

\begin{lemma}[Arithmetic Mean–Geometric Mean (AM-GM) Inequality]\label{lem:app:AM_GM}
For any $a_1, a_2, \dots, a_n \in \R_{+}$, the arithmetic mean is greater than or equal to the geometric mean:
\begin{align*}
\frac{a_1 + a_2 + \cdots + a_n}{n} \geq \sqrt[n]{a_1 a_2 \cdots a_n},
\end{align*}
with equality if and only if $a_1 = a_2 = \cdots = a_n$.

\end{lemma}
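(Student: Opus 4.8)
The plan is to prove the inequality via the elementary tangent-line bound $e^{t}\ge 1+t$, valid for every real $t$ with equality if and only if $t=0$. First I would dispose of the degenerate case: if some $a_i=0$, the geometric mean is $0$ while the arithmetic mean is nonnegative, so the inequality holds, and equality would force every $a_i=0$; hence it suffices to treat the case in which every $a_i>0$. Set $A:=\tfrac1n\sum_{i=1}^n a_i>0$ and $t_i:=a_i/A-1$, so that $\sum_{i=1}^n t_i=0$.

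Next I would apply the tangent-line bound to each $t_i$, namely $a_i/A=1+t_i\le e^{t_i}$, and multiply these $n$ inequalities together:
\[
\frac{a_1a_2\cdots a_n}{A^{\,n}}\;\le\;\exp\!\Big(\sum_{i=1}^n t_i\Big)=e^{0}=1 ,
\]
hence $a_1a_2\cdots a_n\le A^{\,n}$, and taking $n$-th roots gives exactly $\sqrt[n]{a_1a_2\cdots a_n}\le A=\tfrac1n\sum_i a_i$. For the equality characterization, equality in the product forces equality in each factor $1+t_i\le e^{t_i}$, i.e.\ $t_i=0$ and so $a_i=A$ for every $i$; conversely, if all the $a_i$ coincide then both means equal their common value.

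I do not anticipate a genuine obstacle: the statement is classical and the argument above is self-contained once one accepts the convexity fact $e^t\ge 1+t$ (immediate from convexity of $e^t$, or a one-line calculus check). The only points needing a little care are keeping the $a_i=0$ case separate so that division by $A$ is legitimate, and tracking the equality condition through the product step. Alternative routes — Cauchy's forward–backward induction (establish the claim when $n$ is a power of two, then descend to general $n$) or Jensen's inequality for the convex map $x\mapsto e^{x}$ evaluated at the points $\log a_i$ — also work and yield the same equality condition, but the tangent-line argument is the shortest to write out in full.
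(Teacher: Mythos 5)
The paper records this lemma as a classical black-box fact and gives no proof of it, so there is no in-paper argument to compare against; your task was effectively to supply a proof from scratch. Your tangent-line argument via $e^{t}\ge 1+t$ is correct and complete: the reduction to the strictly positive case is handled cleanly (noting the geometric mean vanishes when some $a_i=0$), the normalization $t_i=a_i/A-1$ gives $\sum_i t_i=0$ as needed, the product telescopes to $e^{0}=1$, and the equality case is tracked correctly through each factor. Nothing is missing.
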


\begin{lemma}[Banach fixed-point theorem]\label{lem:app:Banach_fix_point}
Let $(X, d)$ be a non-empty complete metric space, and let $f: X \to X$ be a contraction mapping, meaning there exists a constant $q \in [0, 1)$ such that:
\begin{align*}
d(f(x), f(y)) \leq q \cdot d(x, y), \quad \forall x, y \in X.
\end{align*}
Then:
\begin{enumerate}
    \item $f$ has a unique fixed point $x^* \in X$, such that $f(x^*) = x^*$.
    \item For any initial point $x_0 \in X$, the sequence $\{x_n\}$ defined by $x_{n+1} = f(x_n)$ converges to $x^*$.
    \item The rate of convergence is at least linear, with:
    \begin{align*}
    d(x_n, x^*) \leq \frac{q^n}{1 - q} d(x_0, f(x_0)).
    \end{align*}
\end{enumerate}
\end{lemma}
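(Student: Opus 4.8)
The plan is to run the classical Picard iteration argument. Fix an arbitrary starting point $x_0 \in X$, which exists since $X$ is non-empty, and define the sequence $x_{n+1} = f(x_n)$. The first step is to control consecutive terms: by the contraction hypothesis $d(f(x),f(y)) \le q\,d(x,y)$ and a one-line induction on $n$, I would establish $d(x_{n+1}, x_n) \le q^n\, d(x_1, x_0)$ for all $n \ge 0$.

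The second step is to show $\{x_n\}$ is Cauchy. For any $m > n$, the triangle inequality together with step one gives $d(x_m, x_n) \le \sum_{k=n}^{m-1} d(x_{k+1}, x_k) \le \sum_{k=n}^{m-1} q^k\, d(x_1, x_0) \le \tfrac{q^n}{1-q}\, d(x_1, x_0)$, where summing the geometric tail uses $q \in [0,1)$. Since $q^n \to 0$, the right-hand side vanishes, so $\{x_n\}$ is Cauchy, and by completeness of $X$ it converges to a limit $x^* \in X$. The third step verifies $x^*$ is a fixed point: a contraction is Lipschitz, hence continuous, so $f(x^*) = f(\lim_n x_n) = \lim_n f(x_n) = \lim_n x_{n+1} = x^*$. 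For uniqueness, if $x^*$ and $y^*$ are both fixed points then $d(x^*, y^*) = d(f(x^*), f(y^*)) \le q\, d(x^*, y^*)$, forcing $(1-q)\, d(x^*, y^*) \le 0$ and hence $d(x^*, y^*) = 0$.

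Finally, for the convergence rate I would let $m \to \infty$ in the estimate $d(x_m, x_n) \le \tfrac{q^n}{1-q}\, d(x_1, x_0)$ from step two; since $x \mapsto d(x, x_n)$ is continuous, this yields $d(x^*, x_n) \le \tfrac{q^n}{1-q}\, d(x_1, x_0) = \tfrac{q^n}{1-q}\, d(x_0, f(x_0))$, which is exactly the claimed bound.

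The argument has no genuine obstacle; it is entirely elementary. The only points demanding a little care are invoking completeness of $X$ at precisely the moment the Cauchy property has been established in order to produce the limit point, and justifying the passage $m \to \infty$ inside $d(x_m, x_n)$ in the rate estimate by continuity of the metric. Everything else is geometric-series bookkeeping and a direct use of the contraction inequality.
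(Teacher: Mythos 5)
Your proof is correct and complete: it is the standard Picard-iteration argument, with the geometric-series Cauchy estimate, completeness to extract the limit, continuity of the contraction to verify the fixed-point property, and the $m\to\infty$ passage for the rate bound. The paper states this lemma as a classical result (Banach's fixed-point theorem) and gives no proof of its own, so there is nothing to compare against; your argument is the canonical one and has no gaps.
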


\begin{theorem}[MC with static cost]\label{thm:app:MC_constant_cost}
Under Ass.~\ref{ass:MC_cost} and each probe cost an additional cost of constant $c$, 
\begin{itemize}
    \item The optimal strategy would continue probe the box until $v_k$ is realized, under the following (sufficient) condition: For any $i \in [k-1]$, $p_{i, k} (v_k - v_{k-1}) - c > 0$.
    \item Given that for any $i \in [k-1]$, $p_{i,k}>0$. The equivalent reward $ \phi(y, x) $ takes in put $ y $ as the current maximum reward and $ x $ as the current reward. The optimal strategy is to continue if $ \phi(y, x) > y $, and stop if $ \phi(y, x) \leq y $.
\end{itemize}

\end{theorem}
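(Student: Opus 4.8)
The plan is to treat the problem as an infinite-horizon optimal stopping problem whose state is the pair $(y,x)$, where $y$ is the running maximum reward observed so far and $x$ is the reward of the most recently opened box, and to exploit the fact that, under Ass.~\ref{ass:MC_cost}, the law of the next reward depends only on $x$ (it is $p_{x,\cdot}$), independently of which box we are at. This index-independence is precisely what forces the value function to depend only on $(y,x)$: the subproblem faced after opening the $t$-th box in state $(y,x)$ is statistically identical to the one faced after opening the first box in the same state, since the line is infinite and the transition law is the same. Writing $\phi(y,x)$ for this common optimal future payoff, Bellman's principle of optimality gives
\begin{align*}
\phi(y,x) \;=\; \max\Big\{\, y,\; -c + \sum_{j=1}^{k} p_{x,j}\,\phi\big(\max\{y,v_j\},\, v_j\big)\,\Big\},
\end{align*}
with the convention $v_1 < \dots < v_k$; the first term is the payoff of stopping now and the second that of probing once more and then continuing optimally. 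Since $\phi(y,x)\ge y$ always, the claimed rule — continue iff $\phi(y,x) > y$, stop iff $\phi(y,x)\le y$ — is then immediate, provided $\phi$ is shown to be the genuine value function.

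For part (2) I would first establish that $\phi$ is well-defined and equals the Banach fixed point of the operator $\mathcal{T}$ given by the right-hand side above. Let $p^\star := \min_{1\le i\le k-1} p_{i,k} > 0$ by hypothesis. Whenever $y < v_k$, the last reward is some $v_i$ with $i\le k-1$, so the next reward equals $v_k$ with probability at least $p^\star$; hence the first time the running maximum reaches $v_k$ is stochastically dominated by a Geometric($p^\star$) variable, is a.s. finite, and has expectation at most $1/p^\star$. Thus the strategy ``stop once $v_k$ is observed'' has finite expected cost, while (for $c>0$) any strategy probing indefinitely has payoff tending to $-\infty$, so optimal strategies stop a.s.\ and the standard optimal-stopping argument (the Bellman recursion already used in Section~\ref{sec:1_hyperbox}) identifies $\phi$ with the value function. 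For uniqueness and rapid convergence I would restrict $\mathcal{T}$ to the complete metric space $\mathcal{F}$ of functions $f$ with $f(v_k,\cdot)\equiv v_k$ (a property any optimal value function has, since once the running max is maximal continuing only wastes $c$); one checks $\mathcal{T}$ maps $\mathcal{F}$ into itself, and, using $|\max\{a,b\}-\max\{a,c\}|\le|b-c|$ together with the fact that from a state with $y<v_k$ and last reward $v_i$ the process moves to a state with running max $v_k$ with probability $\ge p^\star$, that $\mathcal{T}$ is a contraction on $\mathcal{F}$ with modulus $1-p^\star$. Lem.~\ref{lem:app:Banach_fix_point} then gives a unique fixed point and geometric convergence of the fixed-point iteration, establishing the stated stopping rule.

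For part (1), under the stronger hypothesis $p_{i,k}(v_k-v_{k-1}) - c > 0$ for all $i\le k-1$ (which forces $p^\star>0$, so part (2) applies), I would show the optimal rule is explicitly ``continue until $v_k$ is realized''. At any state with $y < v_k$ the last reward is $v_i$ with $i\le k-1$ and $y\le v_{k-1}$, so probing once more and then stopping already yields
\begin{align*}
-c + \E\big[\max\{y, R'\}\big] \;\ge\; -c + y + p_{i,k}(v_k - y) \;\ge\; y + \big(p_{i,k}(v_k - v_{k-1}) - c\big) \;>\; y,
\end{align*}
with $R'\sim p_{v_i,\cdot}$; hence $\phi(y,x) > y$ and continuing is strictly optimal. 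Conversely, once $y = v_k$ the running maximum cannot increase, so $\phi(v_k,\cdot)=v_k$ and any further probe strictly decreases the payoff, so stopping is optimal. Combining, the optimal policy continues exactly while the running maximum is below $v_k$.

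The main obstacle I anticipate is the well-posedness in part (2): making the index-independence of $\phi$ precise for an infinite line, arguing that optimal strategies stop almost surely, and isolating the invariant subspace $\mathcal{F}$ on which the undiscounted Bellman operator becomes a genuine contraction — the ``effective discount'' $1-p^\star$ comes solely from the reachability condition $p_{i,k}>0$ rather than any explicit discounting. Once that contraction is in place, the stopping rule, the convergence of the iteration, and the explicit description in part (1) all follow by short direct computations.
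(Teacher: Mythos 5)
Your proposal matches the paper's proof in its essential structure: both write the Bellman equation for $\phi(y,x)$, restrict to the subspace on which $\phi(v_k,\cdot)\equiv v_k$, and show the Bellman operator is a contraction in the sup metric with modulus $1-\min_i p_{i,k}$, invoking Lem.~\ref{lem:app:Banach_fix_point} for existence, uniqueness, and geometric convergence. Two small points where you deviate: for part (1) you lower-bound $\phi$ by the explicit ``probe once then stop'' policy, whereas the paper derives the same conclusion from the $1$-Lipschitz property of the fixed point (Lem.~\ref{lem:app:phi_prop}); and you take care to argue that optimal strategies terminate a.s.\ (via geometric domination of the hitting time of $v_k$) so that the Banach fixed point is actually the value function of the infinite-horizon stopping problem — a step the paper leaves implicit. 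Both are mild refinements rather than a different route, and your part~(1) argument is arguably the more elementary of the two.
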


\begin{proof}
    Note that in this scenario, the equivalent payoff only depends on two factors, the current max reward $x$ and current state $v_i$. Then we could solve the equivalent payoff (i.e., expected future reward/payoff $\phi(\cdot, \cdot)$) of each state $(y, v_i)$, where $y \geq v_i \in \V$ as follows:
    \begin{itemize}
        \item The expected reward of not opening the next box is $y$.
        \item The expected reward of opening the next box is $-c + \sum_{j \in [k]} p_{i,j} \phi (\max \{y, v_j\}, v_j )$.
        \item By definition, we have the following Bellman equation:
        \begin{align*}
            \phi (v_l,v_i) & = \max \{ v_l, -c + \sum_{j \in [k]} p_{i,j} \phi (\max \{y, v_j\}, v_j) \} \\
            & = \max \{ v_l, -c + \sum_{l \in[i]} p_{j,l} \phi(i,l) + \sum_{l=i+1}^{k-1} p_{j,l} \phi(l,l) + p_{j,k} v_k \}
        \end{align*}
        \item Next, we show that the $\phi(y, v_i)$ can be solved via fixed point iteration, and there exists a unique $\phi$ such that all equations are satisfied. We denote vector $\Phi \in [0,v_k]^{k(k-1)}$, and function $f$ that takes $\Phi$ as input, and output a $k(k-1)$ dimensional vector, for any $i \in [k-1], j \in [k]$: 
        
        \begin{align*}
            f(\Phi)_{i\cdot k+j} = \max \{ x_i, -c + \sum_{l \in[i]} p_{j,l} \Phi_{i \cdot k + l} + \sum_{l=i+1}^{k-1} p_{j,l} \Phi_{l \cdot k + l} + p_{j,k} v_k   \}
        \end{align*}
        Notice that the $\Phi$ is not $k^2$ dimensional as $\phi(v_k, y) = v_k$ for all $y < v_k$, so these function values are not determined through fixed-point iteration. 
        
        Next we show that when applying Chebyshev distance $d$, mapping $f$ is a contraction mapping:

        \begin{align*}
            & \max_{i,j} d ( f (x)_{ik+j}, f(y)_{ik+j} ) \\
            \leq &  \max_{i,j} |\sum_{l \in[i]} p_{j,l} x_{i \cdot k + l} + \sum_{l=i+1}^{k-1} p_{j,l} x_{l \cdot k + l} - \sum_{l \in[i]} p_{j,l} y_{i \cdot k + l} + \sum_{l=i+1}^{k-1} p_{j,l} y_{l \cdot k + l}|  \\
            \leq & \sum_{l \in [k-1] } p_{j,l} d (x,y) < d (x,y )
        \end{align*}
        where the first inequality follows from definition, and the last inequality follows from $p_{j,k} >0$. 

        Thus, from Banach fixed-point theorem(Lem.~\ref{lem:app:Banach_fix_point}), there exist a unique fixed point $\Phi^*$, such that starting from any point $\Phi_0$, and let $\Phi_t = f (\Phi_{t-1})$, $\lim_{t \to \infty} \Phi_t \to \Phi^*$, and $d (\Phi^*, \Phi^t) \leq q d (\Phi^*, \Phi^{t-1})$, for $q = \max_i (1 - p_{i,k})$. Notice that it's easy to check that we take the max correctly among the original bellman's equation, our proof is complete. 
        
        \item     We next show that given $i \in [k]$, under $p_{i, k} (v_k - v_{k-1}) - c$, for any $v_i$, $\phi (v_{k-1}, v_i) > v_{k-1}$. Notice that this argument means that the generalized reservation value of any state $v_i$ is larger than $v_{k-1}$. Notice that:

    \begin{align*}
        &-c + \sum_{j \in [k]} p_{i,j} \phi (\max \{y, v_j\}, v_j) \\
        = & -c + \sum_{j \in [k-1]} p_{i,j} \phi(v_{k-1}, v_j) + p_{i, k} v_k \\
        \geq & -c + \sum_{j \in [k-1]} p_{i,j} [\phi(v_{k}, v_j) - v_k + v_{k-1} ] + p_{i, k} v_k \\
        \geq & -c +  v_{k-1} \sum_{j \in [k-1]} p_{i,j} + p_{i,k} v_k \\
        = & v_{k-1} +p_{i,k} (v_k - v_{k-1}) -c \\
        > & v_{k-1}
    \end{align*}
    where the third line follows from the Lipshitz property of $\phi$ in its first input (Lem.~\ref{lem:app:phi_prop}), and the other line follows from reorganization.
    \end{itemize}

\end{proof}

\subsection{Results for Markovian Pandora's Box with General Cost}\label{subsec:app:multi_line_general}

\begin{theorem}[Near Optimal Probing, Single Line]\label{thm:app:greedy_optimal}
Given a Markovian Pandora’s box with static transition, where the precedence graph is a single line $ \hyperb $ and the transition matrix $ \trans $ is irreducible, aperiodic, and associated with a stable distribution $ \pi $. There exists constant $C_{\trans}>0$ and $\alpha_{\trans} \in (0,1)$, that, given any $\delta \in (0,1)$, let $t_{\delta} = \max \{ 2 \frac{C_{\trans}}{ \pi_k(1 - \alpha_{\trans})},  \frac{\log (1 / \delta)}{\log (1 - \pi_k / 2)} \}$. Then, the expected utility from optimal probing on the subgraph $ \hyperb|t_{\delta} $, containing only the first $ t_{\delta} $ boxes of $ \hyperb $, is near-optimal, i.e.,
\begin{align*}
   \OPT (\hyperb|{t_\delta}) \geq   \OPT(\hyperb) - 2\delta v_k
\end{align*}
where $\OPT(\hyperb) = \E[\util (\hyperb)]$ is the expected utility from optimal probing hyperbox $\hyperb$.
\end{theorem}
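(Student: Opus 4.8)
The plan is to show that truncating the optimal policy on $\hyperb$ to its first $t_\delta$ probes costs almost nothing, because by Lemma~\ref{lem:dist_max_reward} the running maximum $R_{\max}(t_\delta)$ already equals the top value $v_k$ with probability at least $1-\delta$; whatever the optimal policy could still gain by probing beyond box $t_\delta$ is then worthless on that high-probability event. If $\hyperb$ has at most $t_\delta$ boxes then $\hyperb|t_\delta=\hyperb$ and the claim is trivial, so assume otherwise. Let $\tau^*$ be an optimal adaptive stopping time for $\hyperb$, so $\OPT(\hyperb)=\E[\max_{j\le\tau^*}R_j-\sum_{j\le\tau^*}c_j]$, and define the truncated rule $\tau':=\min\{\tau^*,t_\delta\}$. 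Then $\tau'$ is a legitimate adaptive stopping time on the truncated instance $\hyperb|t_\delta$ (it runs $\tau^*$ but is forced to stop upon reaching box $t_\delta$), so $\OPT(\hyperb|t_\delta)\ge\E[\max_{j\le\tau'}R_j-\sum_{j\le\tau'}c_j]$.

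Next I would bound the loss realization by realization. Since $\tau'\le\tau^*$ and probing costs are nonnegative, $\sum_{j\le\tau^*}c_j\ge\sum_{j\le\tau'}c_j$, hence on every sample path
\[
\Big(\max_{j\le\tau^*}R_j-\sum_{j\le\tau^*}c_j\Big)-\Big(\max_{j\le\tau'}R_j-\sum_{j\le\tau'}c_j\Big)\;\le\;\max_{j\le\tau^*}R_j-\max_{j\le\tau'}R_j .
\]
When $\tau^*\le t_\delta$ the right side is $0$ because $\tau'=\tau^*$. When $\tau^*>t_\delta$ we have $\tau'=t_\delta$, so it equals $\max_{j\le\tau^*}R_j-R_{\max}(t_\delta)\le v_k-R_{\max}(t_\delta)$, which vanishes on the event $\{R_{\max}(t_\delta)=v_k\}$ and is at most $v_k$ on its complement (using $R_{\max}(t_\delta)\ge v_1\ge0$). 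In particular the reward gap is nonzero only on $\{\tau^*>t_\delta\}\cap\{R_{\max}(t_\delta)<v_k\}$, where it is at most $v_k$.

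Finally I would invoke Lemma~\ref{lem:dist_max_reward} with threshold $v_k$: the quantity $t_\delta=\max\{2C_{\trans}/(\pi_k(1-\alpha_{\trans})),\ \log(1/\delta)/\log(1-\pi_k/2)\}$ is precisely the threshold appearing there for the value $v_k$, since $\sum_{i=k}^{k}\pi_i=\pi_k$, so $\Pr[R_{\max}(t_\delta)<v_k]\le\delta$. Taking expectations in the pathwise bound gives
\[
\OPT(\hyperb)-\OPT(\hyperb|t_\delta)\;\le\;\E\big[\max_{j\le\tau^*}R_j-\max_{j\le\tau'}R_j\big]\;\le\;v_k\cdot\Pr[R_{\max}(t_\delta)<v_k]\;\le\;\delta v_k\;\le\;2\delta v_k,
\]
which is the stated inequality, in fact with a factor of two to spare.

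The only delicate point is the reduction itself: one must verify that $\tau'$ really is an admissible policy on $\hyperb|t_\delta$ and that the utility difference decomposes pathwise into a nonpositive cost contribution plus a reward gap that can be charged entirely to the failure event ``the first $t_\delta$ probes never hit $v_k$''. Once this is in place the argument is routine — the cost terms can only help the truncated policy, and the reward gap is controlled by the rapid-mixing estimate of Lemma~\ref{lem:dist_max_reward}. No additional structure of $\tau^*$ (finiteness, monotonicity, etc.) is needed, since $\tau'\le t_\delta$ always and the bound is purely pathwise before taking expectations.
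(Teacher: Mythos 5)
Your proposal is correct, and it takes a genuinely cleaner route than the paper. Both arguments introduce the same truncated rule $\bar\tau=\min\{\tau^*,t_\delta\}$, but the paper then applies the law of total expectation and bounds two pieces separately: on $\{\tau^*>t_\delta\}$ it upper-bounds the optimal utility by $v_k$ (losing $\delta v_k$) and lower-bounds the truncated utility by $-v_k$ via the observation that the cumulative cost must be below $v_k$ for $\tau^*$ to probe past $t_\delta$ at all (losing another $\delta v_k$), for a total loss of $2\delta v_k$. You instead compare the two policies sample-path by sample-path: since $\tau'\le\tau^*$ and costs are nonnegative, the cost terms can only favor $\tau'$, so the gap is bounded by the reward difference $\max_{j\le\tau^*}R_j-\max_{j\le\tau'}R_j$ alone; this difference is zero whenever $\tau^*\le t_\delta$ or whenever $R_{\max}(t_\delta)=v_k$, and at most $v_k$ otherwise, and the bad event has probability at most $\delta$ by Lemma~\ref{lem:dist_max_reward}. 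Taking expectations gives $\OPT(\hyperb)-\OPT(\hyperb|t_\delta)\le\delta v_k$, which is strictly tighter than the paper's $2\delta v_k$; in particular, the paper's lower bound on the utility in the truncated-stop branch (the $-\sum_{t\in[t_\delta]}c_t\ge -v_k$ step) becomes unnecessary because the pathwise cancellation already handles the costs exactly. The one point worth stating explicitly, which you do address, is admissibility of $\tau'$ on $\hyperb|t_\delta$: since $\tau^*$ is non-anticipating, the decision at each $t<t_\delta$ depends only on $R_1,\ldots,R_t$, and forcing a stop at $t_\delta$ preserves this, so $\tau'$ is a legal stopping rule for the truncated instance. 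Your proof is sound and actually strengthens the stated constant.
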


\begin{proof}
    WLOG, we denote the optimal stopping for $\hyperb$ as $\tau^*$, and the optimal stopping for $\hyperb|_{t_\delta}$ as $\tau_{\delta}$. We construct a stopping time $\bar{\tau}$ such that $\bar{\tau} = \tau^*$ for $\tau^* \leq t_\delta$, and $\bar{\tau} = t_\delta$ for $\tau^* > t_\delta$. Following the notations in Def.~\ref{def:stat_dist}, we have:
    \begin{align*}
        & \Pr( \tau^* \leq t_\delta) \E[\util(\hyperb)|\tau^* \leq t_\delta] \\
        = & \OPT_{\hyperb} - \Pr( \tau^* > t_\delta) \E[\util(\hyperb| t_{\delta})|\tau^* > t_\delta] \\
        \geq & \OPT_{\hyperb} - \delta v_k 
    \end{align*}
    where the first equality follows from the law of total expectation, the second inequality follows from Lem.~\ref{lem:dist_max_reward}.

    Next, notice that for scenarios that $\tau^* > t_\delta$, $\tau^*$ would stop at the last box when probing $\hyperb| t_{\delta}$, we have: 
    \begin{align*}
        &\E[\util(\hyperb| t_{\delta} )] \geq \E[\util_{\bar{\tau}}(\hyperb| t_{\delta} )] \\
        \geq & \E[\util(\hyperb)|\tau^* \leq t_\delta]\Pr[\tau^* \leq t_\delta] + \E[\util_{\bar{\tau}}(\hyperb)|\tau^* > t_\delta] \Pr [\tau^* > t_\delta] \\
        \geq & \OPT_{\hyperb} - \delta v_k - \delta v_k
     \end{align*}
    where the first inequality is due to the optimality of $\tau_\delta$ on $\hyperb | t_{\delta}$, and the second inequality follows from the Lem.~\ref{lem:dist_max_reward} that $\Pr[\tau^* \leq t_\delta] \geq \Pr[R_{\max} (t_\delta) = v_k)] \geq 1-\delta$. Regarding the last inequality, notice that for $\Pr[\tau^* > t_\delta] >0$, it's necessary that $v_k - \sum_{t \in [t_\delta]} c_t >0$, otherwise $\tau^*$ is not optimal. Thus, $ \E[\util_{\bar{\tau}}(\hyperb)|\tau^* > t_\delta] \geq - \sum_{t \in [t_\delta]} c_t \geq - v_k$.

\end{proof}

\begin{theorem}[Near Optimal Probing, Multi Lines]\label{thm:app:greedy_optimal_multi_line}
Given a Markovian Pandora’s box with static transition, where the precedence graph $G$ consists of $q$ directed lines $ \hyperb_1, \ldots, \hyperb_q$. Suppose every transition matrix $P_j$ of each $\hyperb_j, j \in [q]$ are irreducible, irreducible, aperiodic, and associated with a stable distribution $ \pi(j) $. There exists constant $C>0$ and $\alpha \in (0,1)$, that, given any $\delta \in (0,1)$, let $t_{\delta} = \max \{ 2 \frac{C}{ \pi^* (1 - \alpha)},  \frac{\log (1 / \delta)}{\log (1 - \pi^* / 2)} \}$, where $\pi^* = \max_{j \in [q]} \pi (j)_k$. Then, the expected utility from optimal probing on the subgraph $ \cup_{j \in [q]} (\hyperb_j |t_{\delta}) $, containing only the first $ t_{\delta} $ boxes of each hyperbox, is near-optimal, i.e.,
\begin{align*}
   \OPT [\cup_{j \in [q]} (\hyperb_j |t_{\delta})] \geq   \OPT(G) - 2q\delta v_k
\end{align*}
where $\OPT(\hyperb) = \E[\util (\hyperb)]$ is the expected utility from optimal probing hyperbox $\hyperb$.
\end{theorem}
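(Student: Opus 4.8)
The plan is to derive the multi-line bound from the single-line estimate (Theorem~\ref{thm:app:greedy_optimal}) by coupling an optimal policy on $G$ with a feasible policy on the truncated graph $\hat{G}:=\bigcup_{j\in[q]}(\hyperb_j|t_\delta)$. First I would fix an optimal policy $\pi^{\mathrm{opt}}$ for $G$, taking it, without loss of generality, to be the GRV policy of Theorem~\ref{thm:ML_MPB}, so that $\pi^{\mathrm{opt}}$ stops the moment the running maximum reaches the largest possible value $v_k$; this is legitimate because rewards lie in $[0,v_k]$ and probing costs are nonnegative. I then define $\bar{\pi}$ on $\hat{G}$ to simulate $\pi^{\mathrm{opt}}$ on the true realized rewards, halting the first time $\pi^{\mathrm{opt}}$ attempts to probe a box of index $>t_\delta$ in its line. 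Halting (rather than feeding $\pi^{\mathrm{opt}}$ a surrogate reward) is what ensures that the set $\mathcal{O}(\bar{\pi})$ of boxes actually opened is, along every sample path, a prefix of $\mathcal{O}(\pi^{\mathrm{opt}})$. Since $\bar{\pi}$ is feasible on $\hat{G}$, we get $\OPT[\hat{G}]\ge\E[\util(\bar{\pi})]$.

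Next I would bound the probability of the event $B^c$ that $\pi^{\mathrm{opt}}$ ever probes outside $\hat{G}$. On $B^c$ there is a line $j$ for which $\pi^{\mathrm{opt}}$ probes the $(t_\delta+1)$-st box of $\hyperb_j$; by the precedence constraint it has already probed boxes $1,\dots,t_\delta$ of $\hyperb_j$, and were their maximum reward equal to $v_k$ the running maximum would equal $v_k$, contradicting the stopping property of $\pi^{\mathrm{opt}}$. Hence $B^c\subseteq\bigcup_{j\in[q]}\{R^{(j)}_{\max}(t_\delta)<v_k\}$, where $R^{(j)}_{\max}(t_\delta)$ is the maximum reward among the first $t_\delta$ boxes of $\hyperb_j$. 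Because distinct lines are separate components and hence independent, and because $t_\delta$ dominates the single-line threshold of Lemma~\ref{lem:dist_max_reward} for each line (with uniform constants $C\ge\max_j C_{P_j}$ and $\alpha\ge\max_j\alpha_{P_j}$), a union bound yields $\Pr[B^c]\le q\delta$.

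Finally I would combine a pathwise utility comparison with this tail bound. On $B$ we have $\bar{\pi}=\pi^{\mathrm{opt}}$ pathwise, so $\util(\bar{\pi})=\util(\pi^{\mathrm{opt}})$; on every path $\mathcal{O}(\bar{\pi})\subseteq\mathcal{O}(\pi^{\mathrm{opt}})$, costs are nonnegative, and rewards lie in $[0,v_k]$, so
\begin{align*}
\util(\bar{\pi}) &= \max_{i\in\mathcal{O}(\bar{\pi})}R_i-\sum_{i\in\mathcal{O}(\bar{\pi})}c_i
\;\ge\; -\sum_{i\in\mathcal{O}(\pi^{\mathrm{opt}})}c_i \\
&= \util(\pi^{\mathrm{opt}})-\max_{i\in\mathcal{O}(\pi^{\mathrm{opt}})}R_i
\;\ge\; \util(\pi^{\mathrm{opt}})-v_k .
\end{align*}
Taking expectations and using $\Pr[B^c]\le q\delta$,
\begin{align*}
\E[\util(\bar{\pi})] &\;\ge\; \E\big[\util(\pi^{\mathrm{opt}})\1_B\big]+\E\big[(\util(\pi^{\mathrm{opt}})-v_k)\1_{B^c}\big] \\
&= \E[\util(\pi^{\mathrm{opt}})]-v_k\,\Pr[B^c] \;\ge\; \OPT(G)-q\delta v_k ,
\end{align*}
whence $\OPT[\hat{G}]\ge\OPT(G)-q\delta v_k\ge\OPT(G)-2q\delta v_k$, which is the claimed inequality (in fact with the slightly sharper constant $q\delta v_k$).

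I expect the main obstacle to be making the coupling watertight: the definition of $\bar{\pi}$ must guarantee $\mathcal{O}(\bar{\pi})\subseteq\mathcal{O}(\pi^{\mathrm{opt}})$ on the nose (hence the ``halt at the first out-of-range probe'' device), and one must check that the ``stop once the running maximum is $v_k$'' property genuinely holds for some optimal multi-line policy, which it does for the GRV policy. A more pedestrian alternative, closer in spirit to the single-line proof, is a hybrid argument that truncates the $q$ lines one at a time and applies Theorem~\ref{thm:app:greedy_optimal} to each, reproducing the loss $2q\delta v_k$ directly; but that route forces one to condition away the other lines' randomness across the interleaving of probes, which the one-shot coupling above sidesteps.
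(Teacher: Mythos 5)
Your proposal is correct and uses essentially the same construction as the paper's proof: a coupled policy on the truncated graph that mimics the optimal policy and halts at the first out-of-range probe, together with a union bound over the $q$ lines via Lemma~\ref{lem:dist_max_reward}. The one genuine difference is the final accounting: the paper bounds $\Pr(E1)\E[\util\mid E1]\ge\OPT_G-q\delta v_k$ and $\Pr(E2)\E[\util_{\bar\tau}\mid E2]\ge -q\delta v_k$ separately (hence $2q\delta v_k$), whereas your pathwise comparison $\util(\bar\pi)\ge\util(\pi^{\mathrm{opt}})-v_k\1_{B^c}$ merges the two and correctly yields the slightly sharper loss $q\delta v_k$.
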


\begin{proof}

Notice that from Lem.~\ref{lem:app:convege_line_stable}, and Lem.~\ref{thm:app:greedy_optimal}, for every $j \in [q]$, there exist $C_j$ and $\alpha_j \in (0,1)$ such that, given any $\delta \in (0,1)$, setting $t_{\delta}(j) = \max \{ 2 \frac{C_{j}}{ \pi_k (1 - \alpha_{j})},  \frac{\log (1 / \delta)}{\log (1 - \pi_k / 2)} \}$, the probability that max reward along the first $\hyperb_j|_{t_{\delta} (j)}$ smaller than $v_k$ is $\delta$. We denote event $E$ as the event that for any $j \in [k]$, the first $t_\delta (j)$ boxes of $\hyperb_{j}$ has reward $v_k$, and $Pr[E] \geq 1- q \delta$.  

WLOG, we denote the optimal stopping for $G$ as $\tau^*$, and the optimal stopping for $\cup_{j \in [q]}\hyperb_j$ as $\tau_{\delta}$. We construct a stopping time $\bar{\tau}$ such that $\bar{\tau} $: 

    \textbf{Event $\mathbf{E1}$:} is the same as $ \tau^*$ if $\tau^*$ doesn't explore any boxes with index greater than $t_{\delta} (j)$ along $\hyperb_j$ for all $j \in [q]$;
    
    \textbf{Event $\mathbf{E2}$:} stops once $\tau^*$ first hit a box with index greater than $t_{\delta} (j)$ inside any $\hyperb_j, j \in [q]$. 

Following the notations in Def.~\ref{def:stat_dist}, we have:
\begin{align*}
            & \Pr( E1) \E[\util(G)|E1] \\
        = & \OPT_{G} - \Pr( E2) \E[\util(G)|E2]  \\
        \geq & \OPT_{G} - q \delta v_k 
\end{align*}
where the last inequality follows from the fact that event $E2$ and event $E$ are mutually exclusive. 

Then we have, 
\begin{align*}
    \E[\util (\cup_{j \in [q]}\hyperb_j|t_{\delta}(j))] & \geq \E[\util_{\bar{\tau}} (\cup_{j \in [q]}\hyperb_j|t_{\delta}(j))] \\
    & \geq \Pr[E1] \E[\util (\cup_{j \in [q]}\hyperb_j)|E_1] + \Pr[E2] \E[\util_{\bar{\tau}} (\cup_{j \in [q]}\hyperb_j|t_{\delta}(j)) | E_2] \\
    & \geq \OPT_{G} - 2q \delta v_k 
\end{align*}
The last inequality is due to the fact that 1)\ for event $E2$, the additional cost from exploring boxes excluded from $\cup_{j \in [q]}\hyperb_j$ is upper bounded by $v_k$, otherwise $\pi^*$ is not optimal, and 2)\ $\Pr[E2] \leq q \delta$.   

Now, let $C = \max_{j \in [q]} C_j$ and let $\alpha = \max_{j \in [q]} \alpha_j$ gives us the theorem statement. 
\end{proof}

\begin{theorem}[$1/2$ Approximation, Forest]\label{thm:app:forest_approx}

Given a Markovian Pandora’s box with static transition, where the precedence graph $G$ is a forest. Given any $\delta \in (0,1)$,there exists an algorithm such that it finds a best fixed line subgraph $\hyperb$ within $\Delta(G)^{\tilde{\Theta}(1)}$ time, such that given any alternative adaptive policy $\pi$:
    \begin{align*}
        \E [\max_{i \in \mathcal{O}(\pi)} R_i - \sum_{i \in \mathcal{O}(\pi)} c_i] \geq 
        1/2 \cdot  \E[\max_{i \in \mathcal{O}(\hat{\pi})} R_i] - \sum_{i \in \mathcal{O}(\hat{\pi})} c_i] -q \delta.
    \end{align*}
where $\Delta(G)$ is the degree and $q$ is the number of trees in $G$. 
\end{theorem}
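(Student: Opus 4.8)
The plan is to route through the best \emph{non-adaptive} strategy and then ``linearize'' it using the static-transition structure. By the forest adaptivity-gap lemma (Lemma~\ref{lem:forest_adaptivity}), for any adaptive policy $\hat\pi$ there is a non-adaptive policy whose expected payoff is at least $\tfrac12$ that of $\hat\pi$; hence it suffices to exhibit one line $\hyperb$ with $\OPT(\hyperb)\ge(\text{best non-adaptive payoff})-q\delta$ (up to the usual $v_k$ scaling, which we absorb into $\delta$), since then $\OPT(\hyperb)\ge\tfrac12\cdot(\text{payoff of }\hat\pi)-q\delta$. A non-adaptive policy has a deterministic, downward-closed footprint, which in a forest means it probes, inside each tree $T_i$, a subtree $S_i$ containing the root $r_i$; we may assume every probed $S_i$ has nonnegative standalone payoff $\E[\max_{b\in S_i}R_b]-\sum_{b\in S_i}c_b$, since a non-adaptive policy extracts no informational value from a subtree and would otherwise simply drop it. Thus the best non-adaptive payoff is governed by the subtrees $S_1,\dots,S_q$.

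The crux is the static-transition mixing. By Lemma~\ref{lem:dist_max_reward}, along \emph{any} directed line of $G$ the running maximum reward already equals the top value $v_k$ with probability at least $1-\delta$ after only $t_\delta=\tilde{\Theta}(1)$ probes; since $v_k$ upper bounds the maximum reward of \emph{any} policy, probing the first $t_\delta$ boxes of a line attains expected reward within $\delta v_k$ of the information-theoretic ceiling. This lets us replace, one tree at a time, the subtree $S_i$ by a single root-to-leaf path $P_i\subseteq T_i$ truncated to its first $t_\delta$ boxes, chosen to minimize the probed cost: the cost cannot increase (a path inside $S_i$ costs at most $S_i$, and truncation only helps), while the expected reward drops by at most $\delta v_k$ per tree. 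Taking the single best among the $q$ resulting candidate lines and folding the cumulative $\le q\delta v_k$ error into the $q\delta$ slack gives a line with $\OPT(\hyperb)\ge(\text{best non-adaptive payoff})-q\delta$, as needed. Theorems~\ref{thm:greedy_optimal_single_line} and~\ref{thm:greedy_optimal_multi_line} additionally certify that it is without loss to restrict the search to lines of length at most $t_\delta$.

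For the algorithm and its running time: enumerate every directed path that starts at a root of $G$ and has length at most $t_\delta=\tilde{\Theta}(1)$; there are at most $q\cdot\Delta(G)^{t_\delta}=\Delta(G)^{\tilde{\Theta}(1)}$ of them, and for each candidate path one computes its optimal-probing payoff in $\poly(n,k)$ time via the single-line procedure of Section~\ref{sec:1_hyperbox} (Algorithm~\ref{alg:pd_1path}). Output the best candidate; the total time is $\Delta(G)^{\tilde{\Theta}(1)}$.

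The step I expect to be the main obstacle is the linearization: arguing that a single path captures, up to an additive $\delta v_k$, the expected max reward of an arbitrary subtree probed by the best non-adaptive policy. The delicate case is a ``shallow'' subtree---one shorter than $t_\delta$ in every root-to-leaf direction---so that no length-$t_\delta$ path lies inside it; there one must either extend the chosen path deeper into $T_i$ while bounding the extra probing cost, or argue directly that such a shallow subtree's expected max reward is itself small. This is precisely where the static-transition hypothesis is used in earnest: the reward chain converges to its stationary distribution at a rate independent of the starting box (Lemma~\ref{lem:app:convege_line_stable}), so ``depth'' is the only resource that buys reward and $t_\delta$ steps of it suffice.
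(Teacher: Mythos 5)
Your high-level route matches the paper's: invoke Lemma~\ref{lem:forest_adaptivity} to pay the factor of $1/2$ passing to non-adaptive policies, then argue that a depth-$t_\delta$ search (DFS bounded by $t_\delta$, hence $\Delta(G)^{t_\delta}=\Delta(G)^{\tilde\Theta(1)}$ time) with the static-transition mixing bound of Lemma~\ref{lem:dist_max_reward} costs only an additive $q\delta_0 v_k$, then rescale $\delta=\delta_0/v_k$. The paper's proof is essentially just this skeleton, deferring to ``similar arguments as in Thm.~\ref{thm:app:greedy_optimal_multi_line}''; you attempt to supply the missing content via a subtree-by-subtree linearization. However, two steps of that linearization don't survive scrutiny.

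First, the claim that one may assume each probed subtree $S_i$ has nonnegative \emph{standalone} payoff $\E[\max_{b\in S_i}R_b]-\sum_{b\in S_i}c_b$ because the policy ``would otherwise drop it'' is false: the objective is a global max, so the marginal value of $S_i$ against the other probed boxes is $\E[(\max_{S_i}R - X)_+]-\sum_{S_i}c$ with $X$ the max over the rest, which can be negative while the standalone payoff is positive and vice versa. Second, and more substantively, the final sentence of your main argument replaces each $S_i$ by a truncated path $P_i$ with per-tree error $\delta v_k$ and then asserts that ``the single best among the $q$ resulting candidate lines'' achieves $\OPT(\hyperb)\ge V_{\mathrm{NA}}-q\delta$. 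That conclusion would follow for the \emph{union} $P_1\cup\cdots\cup P_q$ (this is essentially Theorem~\ref{thm:app:greedy_optimal_multi_line}), but not for a single $P_i$: the non-adaptive benchmark accrues reward from boxes across all $q$ trees, and $\max_{i\in P_{i^*}}R_i$ does not dominate $\max_{i\in\bigcup_j P_j}R_i$ up to an additive $q\delta v_k$. The step you flag as the likely obstacle (shallow bushy subtrees inside a single tree, where no length-$t_\delta$ path fits) is a real issue, but it is downstream of this larger gap: you have not justified the reduction from a multi-tree non-adaptive footprint to a single line at all. To be fair, the paper's own proof sketch leans on the same unjustified single-line jump, so this is less a misreading of the paper than an inherited gap; but as written your argument does not close it.
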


\begin{proof}
From lem.~\ref{lem:forest_adaptivity}, we get that by restricting ourselves to the NA strategies, we only loose a factor of $1/2$ in approximation ratio. Given $\delta_0 \in (0,1)$, let $t_\delta$, $\alpha$ and $C$ be defined as in Thm~\ref{thm:app:greedy_optimal_multi_line}. 

Note that for each tree, the transition matrix remains the same. Therefore, we only need to search for the tree that minimizes the cumulative cost of a directed line of length $1, \ldots, t_{\delta_0}$ starting from the root node. This search can be performed using a depth-first search traversal of the graph, constrained to a maximum depth of $t_{\delta_0}$. Thus, the algorithm's running time is $\Delta(G)^{t_{\delta_0}} = \Delta(G)^{\tilde{\Theta}(1)}$.

Next, notice that searching for directed line with length at most $t_\delta$ will at most cost an additional $q \delta_0 v_k$ according to similar arguments as in Thm.~\ref{thm:app:greedy_optimal_multi_line}. Thus, let $\delta = \delta_0/ v_k$ gives us the results in theorem statement.
    
\end{proof}




\end{document}